\documentclass{article}
\usepackage[a4paper, margin=2.7cm]{geometry}
\usepackage[OT1, T1]{fontenc}
\DeclareTextSymbolDefault{\DH}{T1}
\usepackage[utf8]{inputenc}
\usepackage[english]{babel}

\usepackage{refcount}
\usepackage{xargs}
\usepackage{xifthen}
\usepackage{enumitem}
\setlist{nolistsep, noitemsep}
\usepackage{amsthm}
\usepackage{amsmath}
\usepackage{amsfonts}
\DeclareMathOperator*{\Plus}{+}

\usepackage{calc}
\usepackage{array}
\usepackage[ruled,noresetcount]{algorithm2e}
\usepackage{color, colortbl}
\definecolor{GrayTableHighlight}{gray}{0.9}
\usepackage{hyperref}
\hypersetup{
	hidelinks
}
\usepackage[capitalize]{cleveref}
\crefname{line}{line}{lines}
\setlength{\parskip}{0.3\baselineskip}
\setlength{\parindent}{0cm}
\usepackage{setspace}
\usepackage[bordercolor=white,backgroundcolor=gray!30,linecolor=black,colorinlistoftodos]{todonotes}

\usepackage{xcolor}
\usepackage[framemethod=tikz]{mdframed}
\mdfsetup{skipabove=0pt,skipbelow=0pt}
\mdfdefinestyle{highlight}{hidealllines=true,backgroundcolor=blue!20,innerleftmargin=3pt,innerrightmargin=3pt,leftmargin=-3pt,rightmargin=-3pt,outerlinewidth=0pt,innertopmargin=0pt, innerbottommargin=0pt}

\usepackage{soul}
\sethlcolor{yellow}

\usepackage[normalem]{ulem}

\newcommand{\mmessage}[1]{\normalfont\ensuremath{\langle #1 \rangle}}
\newcommand{\smessage}[2]{\normalfont\ensuremath{\langle #1 \rangle_{\sigma_{#2}}}}
\newcommand{\tsmessage}[3]{\textnormal{\ensuremath{\langle \textsf{#1}, #2 \rangle_{\sigma_{#3}}}}}
\newcommand{\tnsmessage}[2]{\textnormal{\ensuremath{\langle \textsf{#1}, #2 \rangle}}}

\newcommand{\kec}[1]{\ensuremath{\mathtt{KEC}(#1)}}
\newcommand{\cs}[1]{\ensuremath{\textit{CS}(#1)}}

\newcommand{\quorum}[1]{\ensuremath{\textit{Quorum}(#1)}}
\newcommand{\quorumoptimal}{\ensuremath{\textit{Quorum}_{\textit{opt}}}}
\newcommand{\fb}{\ensuremath{\mathit{FB}}}

\newcommand{\lock}[1][]{\emph{lock#1}}

\newcommand{\unlock}[1][]{\emph{unlock#1}}

\newcommand{\ibfp}[1]{IBFT-block-finalisation-protocol}
\newcommand{\ibftmone}[1]{IBFT-M1}
\newcommand{\ibftmonep}[1]{\ibftmone{} protocol}
\newcommand{\ibfpmone}[1]{IBFT-M1-block-finalisation-protocol}
\newcommand{\ibftmtwo}[1]{IBFT-M2}
\newcommand{\ibftmtwop}[1]{\ibftmtwo{} protocol}


\newcommand{\devptwop}[1]{\textsf{\DH}\ensuremath{\mathsf{ \Xi Vp2p}}}
\renewcommand{\doublespacing}{{}}

\newtheorem{theorem}{Theorem}
\newtheorem{lemma}{Lemma}

\numberwithin{corollary}{lemma}

\theoremstyle{definition}
\newtheorem{definition}{Definition}

\hyphenation{block-chains}
\hyphenation{block-chain}

\begin{document}
\title{Correctness Analysis of Istanbul Byzantine Fault Tolerance}
		
\author{Roberto Saltini \and David Hyland-Wood 	}

\date{\textit{PegaSys (ConsenSys)} \\
	\texttt{\{name\}.\{surname\}@consensys.net}\\
		\today}

\doublespacing

\maketitle 

\begin{abstract}
Istanbul Byzantine Fault Tolerance (IBFT) is a proof-of-authority (PoA) blockchain consensus protocol that ensures immediate finality. A formal model of IBFT is presented, and the correctness of IBFT is analysed. We show that the protocol does not guarantee Byzantine-fault-tolerant persistence when operating on an eventually synchronous network. We also show that the protocol does not guarantee Byzantine-fault-tolerant liveness when operating on an eventually synchronous network. Modifications are proposed to the protocol to ensure both Byzantine-fault-tolerant persistence and liveness in eventually synchronous settings.
\end{abstract}

%
%

\section{Introduction}

\subsection{Background}

Blockchains, also known as \emph{distributed transaction ledgers}, are immutable, append-only data structures that maintain an ordered set of transactions organised into blocks whose order is established cryptographically. Such structures are therefore \emph{tamper evident} in that modification of either a block or the order of blocks may be readily determined at runtime. A blockchain starts with an initial block called the \emph{genesis block}. The first widely adopted blockchains were Bitcoin \cite{Nakamoto_bitcoin} and Ethereum \cite{yellowpaper}.

The Ethereum blockchain augments the standard value transfer capability of each transaction with the possibility to specify instructions of a Turing-complete language to execute on a sandboxed runtime. 
The runtime, called the Ethereum Virtual Machine (EVM) modifies the Ethereum global state maintained by each node \cite{yellowpaper}. 
This means that any user of the Ethereum blockchain has the capability to create decentralised applications, called \emph{smart contracts}, that can govern the interaction between the different users of the system.
One of the first use cases for Ethereum was the creation of escrow smart contracts eliminating the need for a trusted 3rd party.

Within the blockchain context, the role of consensus protocols is to define algorithms by which all nodes on the network can agree on the canonical block sequence. 
Most of these protocols have some level of Byzantine-fault-tolerance (BFT) \cite{Lamport:1982:BGP:357172.357176} which means that the protocols can cope with some fraction of the nodes, called Byzantine nodes, being faulty and behaving in any arbitrary way, including colluding and delaying message delivery to honest nodes.
The standard Ethereum consensus protocol is designed to operate in a permissionless setting where any node can join or leave the network at any point in time and can propose blocks to be added to the blockchain. 
Ethereum requires each node to spend compute effort in resolving a hard cryptographic puzzle before it can propose a block. 
This type of technique, called proof of work (PoW) \cite{Dwork:1992:PVP:646757.705669,Nakamoto_bitcoin}, is used to prevent Sybil attacks where a node is able to gain power in a system by creating multiple pseudonymous identities \cite{Douceur2002}.
Proof of stake (PoS) \cite{ppcoin-pos}, proof of space (PoSpace) \cite{park2015spacemint} and proof of elapsed time (PoET) \cite{poet-security} are examples of alternative techniques for preventing Sybil attacks in a permissionless context that have been proposed over the years.
The Ethereum PoW consensus protocol only guarantees probabilistic immutability of the blockchain where the probability that a block will never be removed from the blockchain increases as new blocks are added to the blockchain.

The IBFT consensus protocol was developed around early 2017 by AMIS \cite{eip650}, and has been implemented in Quorum \cite{jpmorgangithub}.
IBFT was created to provide an alternative consensus protocol for the Ethereum blockchain that was better suited for either private or consortium blockchains, where deterministic immutability of the blockchain is often a requirement and spending significant computational effort is less desirable.
Specifically, IBFT guarantees immediate finality, i.e. each block added to the blockchain will never be removed from it (except if the number of Byzantine actors exceed a given threshold), and uses proof of authority (PoA) as a Sybil attack prevention technique.
In PoA systems, nodes are not required to spend compute effort to solve a cryptographic puzzle as only a set of special nodes, called \emph{validators}, are authorised to propose new blocks.

IBFT specifies minimal modifications to the standard Ethereum block header structure and block validation ruleset while maintaining all other Ethereum capabilities (e.g. smart contracts). An IBFT blockchain maintained by each node is constituted of \emph{finalised blocks} where each finalised block contains an Ethereum block and a finalisation proof used to attest that consensus was reached on the Ethereum block included in the finalised block.

IBFT adapts the state machine replication protocol, Practical BFT (PBFT) \cite{Castro:1999:PBF:296806.296824}, to work in a blockchain setting and integrates it with the dynamic validator set voting mechanism originally designed for Clique \cite{eip225} that allows validators to be added to, or removed from, the set.
This makes IBFT well suited for consortium or private blockchains where there are limitations on which nodes can join the network and the permissioning level of each of the nodes.
Private blockchains are defined by allowing only nodes from a single organisation to join a peer-to-peer network.
A consortium network, by contrast, is defined by a group of organisations that form a consortium, thus nodes from any member of the consortium can join such a network. In some consortium networks read access is made available to the public.
In both private and consortium networks block creation is limited to a subset of nodes.

To date, Kaleido, a blockchain project that offers blockchain-as-a-service, has deployed the IBFT protocol on approximately 470  blockchain networks\footnote{Data provided by Kaleido in a private conversation.}.
Given that Kaleido is only one of the 19 projects offering blockchain-as-a-service \cite{19-bas}, it is likely that the actual number of blockchain networks currently using IBFT far exceeds this number.



\subsection{Related Work}\label{sec:related-work}

%
%
While IBFT is inspired by PBFT \cite{Castro:1999:PBF:296806.296824}, there are a few significant differences between the two protocols:
\begin{itemize}
	\item in IBFT there is no client submitting requests to the network, instead all of the validators can in turn propose a block to the network of validators;  
	\item IBFT allows for two types of nodes: validators that take part in the consensus protocol and standard nodes that validate blocks but do not participate in the consensus protocol; 
	\item the set of validators in PBFT is static whereas IBFT features a dynamic validator-set where validators can be added to, or removed from, the set;
	\item IBFT specifies a simplified version of the so called View-Change message of PBFT and does not include the so called New-View message included in the PBFT protocol;
	\item while IBFT does not use checkpoints explicitly, each IBFT block can be considered the IBFT equivalent of a PBFT checkpoint.
\end{itemize}

PoA consensus protocols such as Clique \cite{eip225}, and two Aura protocols that share the same name but are different protocols \cite{the-attack-of-the-clones,aura,aura2}, do not guarantee immediate finality, i.e. there is no deterministic guarantee that the blockchain history will not be rewritten. 
The key difference between Clique and the two Aura protocols is the fork choice rule that decides which is the preferred chain when there are two competing chains of blocks.
Also, Ekparinya et al. \cite{the-attack-of-the-clones} demonstrated that these protocols provide only limited tolerance to Byzantine actors.

Honey Badger \cite{honeybadger} is an example of a PoA blockchain consensus protocol that ensures immediate finality, but relies on a probabilistic common coin protocol to ensure liveness. 
Therefore, the addition of new blocks to the local blockchain of honest nodes is a probability function.
This makes Honey Badger better suited to asynchronous networks than networks that can guarantee partial synchrony.


DBFT (Democratic BFT)  \cite{dbft} is an example of a PoA blockchain consensus protocol that ensures optimal Byzantine-fault-tolerant resilience under the assumption that the network is reliable, i.e. all messages that are sent are always eventually delivered.
For the best case scenario, DBFT requires 4 message delays to decide on the creation of a new block.
Compared to other BFT protocols, that usually rely on a coordinator or block proposer to ensure liveness, DBFT uses the notion of a weak-coordinator where nodes are not necessarily required to wait for the message sent by the coordinator in order to proceed. 
This is advantageous when the coordinator network is likely to experience delays in the delivery of messages to the other nodes.

Algorand \cite{algorandagreement} is another example of a PoA blockchain consensus protocol that ensures optimal Byzantine-fault-tolerant persistence and relies on the assumption that the network is reliable.
An important contribution of Algorand \cite{algorand} is the definition of ``secret cryptography sortition'' where the block proposer is not revealed to the network until the block is broadcast. Therefore, the risk of an adversarial targeted attack on the proposer is reduced.


%

\subsection{Contributions}
This work presents two major contributions.
\begin{enumerate}
	\item 
		We improve upon the IBFT protocol definition provided by Saltini \cite{IBFT-Liveness-Analysis} by using a more precise semantics and better specifying the process structure for each node by means of process-algebra-like notation \cite{CSP,CSS,ACP}. 
		We provide a formal specifications of the protocol in process algebra-like syntax. 
		The immediate advantage of this notation is an unambiguous definition of the protocol which enables us to perform a rigorous analysis of the  robustness of the protocol.
		This will also allow future refinement works and it provides a precise specification as starting point for other researchers who want to explore other aspect of the protocol such as performances.

	\item We also improve upon the analysis of IBFT protocol performed by Saltini \cite{IBFT-Liveness-Analysis} which is limited to analysing the liveness property only, leaving the persistence analysis for future work.
	
	In this paper we intend to address this knowledge gap by performing the first robustness analysis of the IBFT protocol when operating in an eventually synchronous network \cite{Dwork:1988:CPP:42282.42283}.
	As a result of this analysis, we show that the IBFT protocol does not guarantee Byzantine-fault-tolerance persistence and liveness when operating in an eventually synchronous network.
	The results presented here provide an opportunity to put additional measures in place for old and new deployments of IBFT networks so that the likelihood of persistence and liveness issues is reduced.
	An example of such a measure is deploying nodes on a network that can guarantee a high level of synchrony.

	\item As part of this work we describe in detail how the IBFT protocol can be modified to guarantee optimal Byzantine-fault-tolerant persistence. We also describe two high-level solutions for addressing the lack of Byzantine-fault-tolerant liveness.
	The performance of each solution is dependent on each specific use case of the blockchain network where the protocol is deployed. 
	This could potentially lead to the development of a configurable protocol so that users can choose which of the two solutions is best suited to their specific needs.
	 
	Moreover, one of the two solutions to the liveness limitation (the PBFT-like solution) leads to a blockchain consensus protocol that provides optimal Byzantine-fault-tolerance persistence and liveness even when operating on networks that may experience an initial period of asynchrony, where messages sent during this period may be lost.
	This compares to the many blockchain consensus protocols currently available (see \cref{sec:related-work}) that guarantee liveness only for networks where the eventual delivery of all messages is guaranteed.
	Furthermore, we argue that the PBFT-like solution also allows for an additional optimisation to reduce the minimum number of communication phases from 3 down to 2.
\end{enumerate}

\subsection{Paper Outline}
The paper is organised as follows: 
In Section \ref{sec:system-model} we present our analysis model, and define robustness for the IBFT protocol as the logical conjunction of two properties: persistence and liveness. Persistence guarantees blockchain consistency and immutability amongst all honest nodes, while liveness guarantees that transactions submitted to the system will eventually be included in the blockchain (distributed transaction ledger). 
In \cref{sec:protocol-description} we  describe the IBFT protocol. 
\Cref{sec:definitions} introduces a series of definitions that will be used in the following analysis of the IBFT protocol.
In Section~\ref{sec:safety-analysis}
we present the persistence analysis of the IBFT protocol which shows that  IBFT does not guarantee Byzantine fault tolerance when operating in an eventually synchronous network. 
In \cref{sec:modification-ibftmonep-achieve-optimal-byzantine-fault-tolerance-persistence}, we then describe and analyse modifications to the IBFT protocol that will ensure optimal Byzantine fault tolerance under eventual synchrony.
Section~\ref{sec:liveness-analysis}
discusses the liveness property of the protocol showing that IBFT is not live in eventually synchronous networks, and explores two potential modifications to the IBFT protocol that will ensure liveness under these conditions.

\section{System Model}\label{sec:system-model}
\paragraph{Asynchronous nodes}
We consider a system composed of an unbounded number of asynchronous nodes, each of them maintaining a local copy of the blockchain obtained by adding blocks to it as specified by the IBFT protocol.
We assume that all nodes have the same genesis block.

\paragraph{Network Model}\label{par:network-model}
The IBFT protocol relies on the Ethereum \devptwop{} protocol for the delivery of all protocol messages.
We model the gossip network as an eventually synchronous network, also called a partially synchronous network, as defined in Dwork et al \cite{Dwork:1988:CPP:42282.42283}, where there exists a point in time called global stabilisation time (GST), after which the message delay is bounded by a constant, $\Delta$.
Before GST there is no bound on the message delay.
In other words, we assume that before GST messages may be lost and never delivered, even after GST.
However, once GST is reached, messages are guaranteed to be delivered within $ \Delta $. The value of $ \Delta $ is however unknown.

\paragraph{Failure Model}
We consider a Byzantine failure mode system, where Byzantine nodes can behave arbitrarily. In contrast, honest nodes never diverge from the protocol definition.
We denote the maximum number of Byzantine nodes that an eventually synchronous network of $n$ nodes can be tolerant to with $f(n)$.
As proven in Dwork et al \cite{Dwork:1988:CPP:42282.42283}, the relationship between the total number of nodes, $n$, and the maximum number of Byzantine nodes can be expressed as follows:
\begin{equation}
	f(n) \equiv \left \lfloor \frac{n-1}{3} \right \rfloor
\end{equation}
\doublespacing
\paragraph{Cryptographic Primitives}
The IBFT protocol uses the Keccak hash function variant as per the Ethereum Yellow Paper \cite{yellowpaper} to produce digests of blocks.
We assume that the Keccak hash function is collision-resistant.\\
The IBFT protocol uses the standard Ethereum ECDSA signing algorithm \cite{yellowpaper}.
We assume that such a digital signature scheme ensures uniqueness and unforgeability.
Uniqueness means that the signatures generated for two different messages are also different with high probability.
The unforgeability property ensures that Byzantine nodes, even if they collude, cannot forge digital signatures produced by honest nodes.\\
We use \smessage{m}{v} to denote a message $m$ signed by validator $v$.

\paragraph{IBFT Robustness Property}
For the purpose of defining the robustness property, we consider the IBFT protocol as implementing a \emph{distributed permissioned transaction ledger} with immediate finality.\\
A distributed transaction ledger maintains an append-only fully-ordered set of transactions and ensures its consistency amongst all honest nodes that participate in the protocol.\\
Each node maintains a local copy of the transaction ledger organised as a chain of blocks, or blockchain.
Our definition of robustness for the IBFT protocol is based on the definition of robustness for public transaction ledgers provided in Garay et al \cite{GKL:2018}.\\
For the purpose of this definition, the position of a transaction within the transaction ledger implemented by the IBFT protocol is defined as a pair with the first component corresponding to the height of the block including the transaction and the second component corresponding to the position of the transaction within the block.
\begin{definition}\label{def:robustnes}
	The IBFT protocol implements a robust distributed permissioned transaction ledger with immediate finality and $t$-Byzantine-fault-tolerance  if, provided that no more than $t$ validators are Byzantine, it guarantees the following two properties:
	\begin{itemize}
		\item \textbf{Persistence.} If an honest node adds transaction $ T $ in position $ i $ of its local transaction ledger, then (i) $ T $ is the only transaction that can ever be added in position $i$ by any other honest node, (ii) $T$ will eventually be added to the local transaction ledger of any other honest node.
		
		\item \textbf{Liveness.} Provided that a transaction is submitted to all honest validators, then the transaction will eventually be included in the distributed permissioned transaction ledger.
	\end{itemize}
\end{definition}

\SetKwBlock{Init}{Initialisation:}{}
\SetKwBlock{Functions}{Functions:}{}
\SetKwBlock{Expansions}{Macro Expansions:}{}
\SetKwBlock{Procedures}{Procedures:}{}
\SetKwBlock{UponRules}{Guarded Commands:}{}
\SetKwData{Reception}{reception of}
\SetKwData{HasReceived}{has received}
\SetKwFunction{Proposer}{proposer}
\SetKwFunction{AV}{validators}
\SetKwFunction{Valid}{isValidBlock}
\SetKwFunction{N}{n}
\SetKwFunction{SizeOf}{sizeOf}
\SetKwFunction{NumOf}{numOfReceived}
\SetKwFunction{Received}{received}
\SetKw{Broadcast}{broadcast}
\SetKw{Multicast}{multicast}
\newcommand{\Multicastv}[1]{\ensuremath{\Multicast_v\;#1}}
\SetKwFunction{WellFormedToAddFinalProof}{wellFormedToAddFinalProof}
\SetKwFunction{RoundTimerTimeout}{roundTimerTimeout}
\SetKwFunction{CreateNewProposedBlock}{createNewProposedBlock}
\SetKwFunction{BlockHeight}{blockHeight}
\SetKwFunction{ExtractBlock}{block}
\SetKwFunction{ExtractFinalisationProof}{finalisationProof}
\SetKwFunction{EthAddressRecover}{recValidator}
\SetKwFunction{TimeoutForRoundZero}{timeoutForRoundZero}
\SetKwFunction{k}{K}
\SetKwFunction{Sign}{sign}
\SetKw{From}{\:from\:}
\SetKw{Stop}{stop}
\SetKw{Start}{start}
\SetKw{Set}{set}
\SetKw{Expiry}{expiry of}
\SetKw{And}{\:and\:}
\SetKw{Or}{\:or\:}
\SetKw{In}{\:in\:}
\SetKw{Not}{\:not\:}
\SetKw{With}{\:with\:}
\SetKw{Let}{let}
\SetKw{To}{to}
\SetKwFor{Upon}{\hspace{-1em}$ c_1 \equiv $ upon}{do}{}
\SetKwProg{Fn}{}{$\equiv$}{}
\SetKwProg{Gc}{}{$\rightarrow$}{}
\SetKwProg{Gcp}{\hspace{0.4em}$ + $}{$\rightarrow$}{}
\SetKwProg{Proc}{def}{:}{}
\SetKwIF{Expand}{Expand2}{Expand3}{}{expands to:}{}{}{}
\SetKwIF{EmptyIf}{}{}{}{}{}{}{}
\newcommand{\avibfp}[1][]{\AV{\ensuremath{{chain}_v[0:h_{#1}-1]}}}
\newcommand{\nmacro}[1][]{\N{\ensuremath{{chain}_v[0:h_{#1}-1]}}}
\newcommand{\nhmacro}[1][]{\ensuremath{n_{h,v}}}
\newcommand{\avhmacro}[1][]{\ensuremath{{\AV}_{h,v}}}

\newcommand{\proposermacro}[1][]{\Proposer{\ensuremath{{chain}_v[0:h-1],  r_{h,v}}}}
\newcommand{\hproposer}[1][]{\ensuremath{{\Proposer}_{h,v}(r_{h,v})}}
\LinesNumbered

\SetAlgoNoLine
\SetAlgoHangIndent{2em}
\SetNlSty{textbf}{}{:}
\DontPrintSemicolon
\SetAlFnt{\small}
\section{IBFT Protocol Specification}\label{sec:protocol-description}
The presentation of the IBFT protocol provided in this section is based on EIP 650~\cite{eip650} and the actual implementation available on GitHub \cite{jpmorgangithub}.

\subsection{Overview}
In alignment with the blockchain literature, we define the height of a finalised block $\fb$ as the distance in finalised blocks between $\fb$ and the genesis block that has height 0. 
The first finalised block after the genesis block has height 1, the next one has height 2 and so on.

For each block height, the IBFT consensus protocol can be considered running sequential different instances of what we call the \emph{\ibfp{}}.
The aim of the $h$-th instance of the \ibfp{} is to decide on the block to be added at height $h$, generate the related finalised block and broadcast it to all nodes. 
Only a subset of the entire set of IBFT nodes can participate in the $h$-th instance of the block finalisation protocol. 
We call this set of nodes the \emph{validators for height/instance $h$} and refer to each member of this set as a \emph{validator for height/instance $h$}.
We also refer to all of the nodes not included in the validator set for height/instance $h$ as \emph{standard nodes}.
We often omit \emph{for height/instance $h$} when this is clear from the context.
The set of validators for each instance $h$ of the \ibfp{} is deterministically computed as a function of the chain of blocks from the genesis block until the block with height $h-1$.

Once the \ibfp{} for an honest validator reaches a decision on the block to be included in the finalisation block with  height $h$, it creates a finalised block by adding the  finalisation proof  to the block and propagates the finalised block to all other nodes in the network, both validators and standard nodes.
Finalised blocks are transmitted using the standard Ethereum block synching protocol by which each validator (i) transmits a newly created finalised block using the \devptwop{} gossip protocol \cite{wireprotocol} and (ii) asks its peers for the availability of new finalised blocks either when the validator starts or when the validator receives a finalised block with a height higher than the next expected finalised block height.
The purpose of the finalisation proof is to allow any node, even nodes that did not participate in the \ibfp{}, to verify that a decision on the block inclusion in the blockchain at height $h$ was reached by a correct execution of the \ibfp{} despite the presence of a potential number of Byzantine validators.
As described in \cref{sec:ibfp}, the finalisation proof is composed of signatures over the Ethereum block, called \emph{commit seals}, that are sent by validators as part of the Commit messages exchanged during the \ibfp{}.

\subsection{System Specification}
Each node $ v $ runs the $ \textit{IBFT}(v) $ protocol as defined by the following process-algebra-like \cite{CSP,CSS,ACP} recursive equations
\begin{align}
&IBFT(v) \equiv ibftInit(v) \cdot ibftExecution(v) \label{eqn:IBFT-v}\\
&ibftExecution(v) \equiv  \big( igc_1(v)  + igc_2(v) \big) \cdot ibftExecution(v)\label{eqn:ibftExecution-v}
\end{align}
where $ \cdot $ indicates sequential composition, $ + $ indicates non-deterministic choice, and $ ibftInit(v) $ and $ \big(igc_i(v)\big)_{i\in\{1,2\}} $ correspond to sequential atomic steps defined by the pseudocode in \Cref{algo:ibft-protocol}.

\begin{sloppypar}
When either \cref{ln:start-ibfp-1} or \cref{ln:start-ibfp-2} of \Cref{algo:ibft-protocol} is executed, node $ v $ starts a new $\textit{\ibfp{}}(h_v,v)$ process which is run in parallel to the $ \textit{IBFT}(v) $ process.
The $\textit{\ibfp{}}(h,v)$ process is described by the following process-algebra-like equations in which we use the same notation used in the definition of $ \textit{IBFT}(v)$
\end{sloppypar}
\begin{align}
&\textit{\ibfp{}}(h,v) \equiv fpInit(h,v) \cdot fpExecution(h,v)\label{eqn:ibfp-h-v}\\
&fpExecution(h,v) \equiv \Big( \Plus\limits_{i=1}^7 fpgc_i(h,v) \Big) \cdot fpExecution(h,v)\label{eqn:fpExecution-v-h}
\end{align}
where $ \textit{fpInit}(h,v) $ and  $ \big(\textit{fpgc}_i(h_v)\big)_{i \in \{1,\ldots,7\}} $ are atomic sequential steps defined in \Crefrange{algo:ibft-block-finalisation-protocol-start-and-init}{algo:ibft-block-finalisation-protocol-commit-and-round-change}.

The process  $\textit{\ibfp{}}(h_v,v)$ is stopped when \cref{ln:stop-ibfp-2} of \Cref{algo:ibft-protocol} is executed.

Starting process $\textit{\ibfp{}}(h,v)$ corresponds to starting the $ h$-th instance of the \ibfp{} for node $ v $.

\subsection{Notation}
The notation used in the pseudocode in \crefrange{algo:ibft-protocol}{algo:ibft-block-finalisation-protocol-commit-and-round-change} is as follows:
\begin{itemize}
	\item $ c \rightarrow S $ corresponds to a guarded command where the code $ S $ can be executed only if condition $ c $ is true;
	\item The scope of variables used in the guards of guarded commands extends to the pseudocode specifying the command;
	\item If the domain of existential quantifiers on variables is not specified, then it is assumed to correspond to the expected domain of the function argument, message or tuple component where the  variables are used;
	\item For brevity of notation, existential quantifiers applied to messages or tuples correspond to the application of existential quantifiers to each of the elements of the message or tuple that is not already assigned to a bounded variable. For example, $ \exists \mmessage{x,y,h_v, z} \in V $, where $ h_v $ is the only bounded variable already defined elsewhere, is an abbreviation for $ \exists x,y,z: \mmessage{x,y,h_v,z} \in V $;
	\item All functions in \FuncSty{typewriter font} are defined in the remainder of this section, whereas all functions in \textit{italic font} are defined in the pseudocode;
	\item $ \{ F(m) \mid m \in V \land P(m) \} $ corresponds to the set obtained by applying the function $ F $ to all the elements of $ V $ for which predicate $P$ is true; when $ F(\cdot) $ corresponds to the projection onto the $ j $-th coordinate $ \pi_j(\cdot) $, then we omit it, e.g. we write $ \{ x \mid (x,y) \in V \} $ rather than $ \{\pi_1(\mathbf{t}) \mid \mathbf{t} \in V \} $;	
	\item $ \Multicastv{m}\;\To\;V $ corresponds to node $ v $ multicasting message $ m $ to all nodes in set $ V $;
	The delivery of messages to the recipients happens according to the eventually synchronous network model as detailed by the Network Model definition in \cref{sec:system-model};
	\item When a node $ v  $ receives a message $ m $, it assigns a unique identifier $ id $ to it and places the tuple $ (m,id) $ into the set $ IRM_v $;
	\item $ {RM}_v $ corresponds to the set of all messages received by node $v$ without the identification value;
	In other words, $ {RM}_v $ is the projection of  $ \mathit{IRM}_v $ onto the first coordinate, i.e. $ RM_v \equiv \{ m \mid (m,id)\in TRM_v \} $
	\item $ \mathcal{P}({RM}_v) $ corresponds to the power set of $ {RM}_v $;
	\item For clarity of notation, we use \SizeOf{$ M $} to indicate the size of the set $M$, i.e $\SizeOf{\ensuremath{M}} \equiv \|M\|$;	
	\item The symbol $*$ denotes any value;
	\item $\BlockHeight{\fb}$ is defined as the height of the finalised block $\fb$;
	\item Each validator $v$ stores its local blockchain in ${chain}_v$;
	\item ${chain}_v[n]$ corresponds to the finalised block with height $n$, while ${chain}_v[n:m]$ corresponds to a sub-chain including all of the finalised blocks from height $n$ to height $m$;
	\item  \AV{${chain}_v[0:h-1]$} represents the set of authorised validators for instance $h$ of the \ibfp{}.
			The definition of the \AV{$\cdot$} function is not presented here as it is outside the scope of this work and does not have any relevance to the results presented here.
			For the same reason, we do not describe the protocol that can be used to add or remove validators to/from the validator set of each instance of the \ibfp{};
	\item  $\N{\ensuremath{{chain}_v[0:h-1]}}$ represents the number of validators for instance $h$ of the \ibfp{}, i.e. $\N{\ensuremath{{chain}_v[0:h-1]}} \equiv \SizeOf{\AV{\ensuremath{{chain}_v[0:h-1]}}} $;
	\item \ExtractFinalisationProof{\fb}  denotes the finalisation proof included in the finalised block \fb{};
	\item \WellFormedToAddFinalProof{$ B $} is defined as true  if block $ B $ allows adding the finalisation proof to it; it is defined to be false  otherwise;
	\item \ExtractBlock{\fb} denotes the block included in the finalised block \fb{};
	\item The function $\Valid{\ensuremath{B,B_{parent}}}$ is defined to be true if and only if block $B$ is a valid Ethereum block with parent $B_{parent}$. 
			For the purpose of this work, we consider that \linebreak $\Valid{\ensuremath{B,B_{parent}}}$ only verifies the following fields of the standard Ethereum header: parentHash, stateRoot, transactionsRoot, receiptsRoot, logsBloom, number, gasLimit, gasUsed;
	\item \kec{B} corresponds to the application of the Keccak hash function to $ B $;
	\item \Sign{$ H,v $} corresponds to the signature of the validator $v$ over the hash $H$;
	\item \EthAddressRecover{\ensuremath{H,signature}} corresponds to the validator whose signature of the hash $H$ corresponds to $signature$, i.e. $ \EthAddressRecover{\ensuremath{H,\Sign{ \ensuremath{H,v }}}} \equiv v $ ;
	\item $\quorum{n}$ is defined as $ f(n) \cdot 2 + 1 $;
	\item $ now(v) $ corresponds to the current time for node $ v $;
	\item 
		\Start $ p(\cdot) $ corresponds to spawning a new process $ p(\cdot) $.
		This can be expressed in process algebra by replacing the current process with the parallel composition of $ p(\cdot) $ with the current process.
	\item 
		\Stop $ p(\cdot) $ corresponds to stopping process $ p(\cdot) $.
		This can be expressed in process algebra by means of either communication channels where the process executing the \Stop statement sends a stop signal to the target process or via the used of shared state variables.
\end{itemize}

\begin{algorithm}[t!]
	\Functions{ 
		\lFn{\mbox{\hspace{-0.4em}}\quorum{n}}{$ f(n) \cdot 2 + 1 $}
		\Fn{${isValidFinalisationProof}(\fb,v)$}{
			\SizeOf{$ \{
				cs \in \ExtractFinalisationProof{\fb} \mid \newline$
				\mbox{\hspace{3em}}$\EthAddressRecover{\ensuremath{\kec{\ExtractBlock{\fb}},cs }}\in \avibfp[v]\}$\newline
			    \mbox{\hspace{1.3em}}} $\geq \quorum{\nmacro[v]}$
		}	
		\Fn{${isValidFinalisedBlock}(\fb, v)$\label{ln:check-block-validity}}{
			${isValidFinalisationProof}(\fb,v) \:\land$\newline
			\mbox{\hspace{-2em}}\Valid{${\ExtractBlock{\fb},}$
			$\ExtractBlock{\ensuremath{{chain}_v[h_v-1]}}$}
			
	}}
	\Init{
		\Fn{$ ibftInit(v)  $}
		{
		${chain}_v[0] \gets \textit{genesis block}$\;
		$h_v \gets 1 $\;
		$ oldPrePrepareMessagesAlreadyProcessed_v \gets \{\} $\;
		\If{$ v \in \avibfp[v] $}
		{
			\Start $\textit{\ibfp{}}(h_v,v)$\label{ln:start-ibfp-1}
		}
	}
	}
	\UponRules{	 
		\Gc{\label{ln:finalise-block-start}%
			$igc_1(v) \equiv \newline \big( {\ensuremath{\exists \tnsmessage{FINALISED-BLOCK}{\fb} \in {RM}_v }} :$\newline
		    \mbox{\hspace{1em}}\mbox{$\BlockHeight{\ExtractBlock{\fb}} = h_v \:\land$}\newline
		    \mbox{\hspace{1em}}${isValidFinalisedBlock}(\fb,v)\newline\big)$
	        } {	
	        	\Indp\Indp        
				${chain}_v[h_v] \gets \fb$\label{ln:add-block}\;
				\If{$ v \in \avibfp[v] $}{
					\Stop $ \textit{\ibfp{}}(h_v,v)$\label{ln:stop-ibfp-2}
				}
				$h_v \gets h_v + 1$\label{ln:increase-h-v}\;
				\If{$ v \in \avibfp[v] $}
				{
					\Start $\textit{\ibfp{}}(h_v,v)$\label{ln:start-ibfp-2}
				}
	}
		\Gc{$igc_2(v) \equiv \newline
		\big(\exists (\tsmessage{PRE-PREPARE}{h_{pp}, r_{pp}, B}{sv },id) \in \mathit{IRM}_v: \newline
		\mbox{\hspace{3em}}id \notin oldPrePrepareMessagesAlreadyProcessed_v \land \newline 		
		\mbox{\hspace{3em}}h_{pp} < h_v \; \land \newline 
		\mbox{\hspace{3em}}  sv = \Proposer{\ensuremath{{chain}_v[0:h_{pp}-1],  r_{pp}}} \land \newline
		\mbox{\hspace{3em}} B = \ExtractBlock{\ensuremath{{chain}_v[h_{pp}]}}\newline\big)$\label{ln:send-commit-for-old-blocks}}
	{
		\Indp\Indp
		\Multicastv{\tsmessage{COMMIT}{h_{pp}, r_{pp}, \kec{B},\cs{B,v}}{v}} \To $ \AV_{h_{pp},v} $\;
		$ oldPrePrepareMessagesAlreadyProcessed_v \gets oldPrePrepareMessagesAlreadyProcessed_v \cup \{id\} $\;
	}

	}	
	\caption{Sequential steps of the $ \textit{IBFT}(v) $ process.}
	\label{algo:ibft-protocol}
\end{algorithm}

\subsection{$ \textit{IBFT}(v) $ Specification}

As described by \Cref{eqn:IBFT-v,eqn:ibftExecution-v}, the IBFT protocol executed by a node $ v $ corresponds to executing the initialisation procedure $ ibftInit(v) $ followed by a recursive execution of the non-deterministic choice between guarded commands $ igc_1(v) $ and $ igc_2(v) $.

Guarded command $ igc_1(v) $ is enabled when node $ v $ receives a valid finalised block for the next expected block height, i.e. $ h_v $.
We say that a finalised block \fb{} is valid for validator $v$ if and only if ${isValidFinalisedBlock}(\fb,v)$,
which  corresponds to the logical conjunction of the following two conditions, is true:
\begin{itemize}
	\item at least  $\quorum{n} $ of the commit seals included in the finalisation proof of \fb{} are signed by validators included in the set of validators for the $h$-th instance of the \ibfp{}, where $n \equiv \N{\ensuremath{{chain}_v[0:h-1]}}$;
	\item the block included in \fb{} is a valid Ethereum block.
\end{itemize}

When guarded command $ igc_1(v) $ is executed,  then  (i) $v$ adds the finalised block to its local blockchain, (ii) if $v$ is a validator for the $h_v$-th instance of the \ibfp{} then $v$ aborts the $h_v$-th instance of the \ibfp{}, (iii) $v$ advances $h_v$ to $h_v+1$ and (iv) if $v$ is a validator for the \ibfp{} instance for the new value of $h_v$, then $v$ starts that instance.

Guarded command $ igc_2(v) $ will be described in the following sub-section.

\subsection{$ \textit{\ibfp{}}(h,v)$ Specification}\label{sec:ibfp}

\begin{algorithm}[t!]
	\caption{Sequential steps of the $ \textit{\ibfp{}}(h,v) $ process (initialisation).}
	\label{algo:ibft-block-finalisation-protocol-start-and-init}
	\Expansions{
		\lExpand{\nhmacro[]}{\nmacro[]\label{ln:nhmacro}}
		\lExpand{\avhmacro[]}{\avibfp[]\label{ln:avibfp}}
		\lExpand{\hproposer[] } {\proposermacro[]\label{ln:hproposer}}}
	\Functions{
		\lFn{$ {RoundTimerTimeout}(r)$}{$\TimeoutForRoundZero \cdot 2^r$}	
		\lFn{$ \cs{B,v} $}{\Sign{$ \kec{B},v $}}
	}
	\Init{
		\Fn{$ fpInit(h,v) $}
		{
		${lockedBlock}_{h,v} \gets \bot $\;\label{ln:unlock-at-initialisation}
		$ StartNewRound(0,h,v) $\;
	   }
	}
	\Procedures{
		\Proc{$moveToNewRound(r, h, v)$}
		{
			$r_{h,v} \gets r$\;
			${roundAlreadyStarted}_{h,v} \gets {false}$\;
			${acceptedBlock}_{h,v} \gets \bot$\;\label{algo2:set-accept-pre-prepare-1}

		}
		\Proc{$ StartNewRound(r,h,v) $\label{ln:start-new-round}}{
			$moveToNewRound(r, h, v)$\;
			${roundAlreadyStarted}_{h,v} \gets {true}$\;
			${commitSent}_{h,v} \gets {false}$\;\label{ln:reset-commit-sent}
			${finalisedBlockSent}_{h,v} \gets false$\;\label{ln:reset-finalised-block-sent}
			${roundTimerExpiration}_{h,v}[r_{h,v}] \gets now(v) +  RoundTimerTimeout(\ensuremath{r_{h,v}})$\;\label{ln:start-timer}
			\If{$v = \hproposer[] $}{
				\eIf{${lockedBlock}_{h,v} \neq \bot$}{
					$B \gets {lockedBlock}_{h,v} $\;
				}
				{
					$B \gets \CreateNewProposedBlock{\ensuremath{h,v}} $
				}
				\Multicastv{\tsmessage{PRE-PREPARE}{h, r_{h,v}, B}{v}} \To \avhmacro[]\;
			}
		}
		\Proc{$moveToNewRoundAndSendRoundChange(r, h, v)$}
		{
			$moveToNewRound(r, h, v)$\;
			\Multicastv{\tsmessage{ROUND-CHANGE}{h, r_{h,v}}{v}} \To \avhmacro[]\;
		}
	}
\end{algorithm}

\begin{algorithm}[t!]
	
		\caption{Sequential steps of the $ \textit{\ibfp{}}(h,v) $ process (PRE-PREPARE and PREPARE).}
		\label{algo:ibft-block-finalisation-protocol-pre-prepare-and-prepare}
	
	\UponRules{
		\Gc{$fpgc_1(h,v) \equiv \newline
			\big(\mbox{\hspace{0.4em}}{acceptedBlock}_{h,v} = \bot \:\land\newline
			 \mbox{\hspace{0.9em}}\big( \mbox{\ensuremath{\exists\tsmessage{PRE-PREPARE}{h, r_{h,v}, B}{sv } \in {RM}_v :  sv = \hproposer[] \big) }} \newline\big)
			$
			\label{ln:upon-pre-prepare}
			}{
				\Indp\Indp
				\eIf{$ ({lockedBlock}_{h,v} = \bot \lor {lockedBlock}_{h,v} = B) \land \Valid{\ensuremath{B,{chain_v}[h-1]}} $}
				{
					${acceptedBlock}_{h,v} \gets B$\;
					\Multicastv{\tsmessage{PREPARE}{h, r_{h,v}, \kec{B}}{v}} \To \avhmacro[]\;\label{ln:broadcast-prepare}	
					\label{ln:upon-pre-prepare:end-accept}
				}
				{
					\label{ln:pre-prepare-not-matching-start}
					$moveToNewRoundAndSendRoundChange(r_{h,v}+1, h, v)$\;
					\label{ln:pre-prepare-not-matching-end}
				}
		}

		\Gc{\label{ln:upon-lock}%
			$ fpgc_2(h,v) \equiv \newline \big(
			\mbox{\hspace{0.4em}}{commitSent}_{h,v} = {false} \: \land$\newline 
			\mbox{\hspace{1em}}${{acceptedBlock}_{h,v}}  \neq \bot \: \land$\newline
			\mbox{\hspace{1em}}\SizeOf{$ \{ 
				    \tsmessage{PREPARE}{h, r_{h,v}, \kec{{acceptedBlock}_{h,v}}}{sv} \in {RM}_v  \mid \newline
				    \mbox{\hspace{28em}}sv \in \avhmacro[]\}$ \newline				    
					\mbox{\hspace{4.3em}}$  $} $ \geq \quorum{\nhmacro[]}  \newline \big)$
		}{
			\Indp\Indp${lockedBlock}_{h,v} \gets {acceFptedBlock}_{h,v}$\;\label{ln:set-lock}
		}		
		\Gc{$fpgc_3(h,v) \equiv \newline
			\bigg(
			\mbox{\hspace{0.3em}}{commitSent}_{h,v} = {false} \:\land $ \newline
			\mbox{\hspace{1.0em}}$ {lockedBlock}_{h,v} \neq \bot \: \land $ \newline
			\mbox{\hspace{0.8em}}$ \ensuremath{\Big(\mbox{\hspace{0.0em}}\big( \exists \tsmessage{PRE-PREPARE}{h, r_{h,v}, {lockedBlock}_{h,v} }{sv} \in {RM}_v  : \mbox{\ensuremath{sv = \hproposer[] \big) \:\lor}}}\newline			
			\mbox{\hspace{0.8em}}\mbox{\hspace{0.6em}}\ensuremath{\big(\exists\tsmessage{PREPARE}{h, r_{h,v}, \kec{{lockedBlock}_{h,v} }}{sv}  \in {RM}_v  : sv \in \avhmacro[]  \big)\Big) } \newline \bigg)$
			\label{ln:upon-send-commit}
			}{
				\Indp\Indp
				\Multicastv{}\newline
				{\tsmessage{COMMIT}{h, r_{h,v}, \kec{{lockedBlock}_{h,v}},\cs{{lockedBlock}_{h,v},v}}{v}} \To \avhmacro[]\;
				${commitSent}_{h,v} \gets {true}$\label{ln:set-commit-sent}\;
		}
	}
\end{algorithm}

\begin{algorithm}[t!]
		\caption{Sequential steps of the $ \textit{\ibfp{}}(h,v) $ process (COMMIT and ROUND-CHANGE).}
		\label{algo:ibft-block-finalisation-protocol-commit-and-round-change}
	
	\UponRules{
		\Gc{$ fpgc_4(h,v)  \equiv \newline
		\bigg(\hspace{0.2em}{acceptedBlock}_{h,v} \neq \bot \land \newline
		\mbox{\hspace{1em}}\big(\exists \mathit{CM} \in \mathcal{P}({RM}_v) : $\newline
		\mbox{\hspace{2.5em}}\mbox{$ \SizeOf{\ensuremath{CM}} \geq \quorum{\nhmacro[]} \:\land$}\newline	
		\mbox{\hspace{2.5em}}\mbox{$ (\forall cm \in CM: cm = \tsmessage{COMMIT}{h, r_{h,v}, \kec{{acceptedBlock}_{h,v}},*}{*}) \: \land $} \newline
		\mbox{\hspace{2.5em}}$ (\forall \smessage{cm}{sv},\smessage{cm'}{sv'} \in CM:  cm \neq cm' \implies sv \neq sv') \newline
		\mbox{\hspace{1.2em}}\big) \:\land $
		\newline 
		\mbox{\hspace{1.0em}}$ {finalisedBlockSent}_{h,v} = {false} $\newline$ \bigg) $
		\label{ln:old-send-finalisation-proof-upon-condition}}
		{
			\Indp\Indp
				${finalisedBlockSent}_{h,v} \gets {true}$\label{ln:set-finalised-block-sent}\;
				\eIf{$  \big( \forall \tsmessage{*}{*, *, *,cs}{*} \in CM:  cs\text{ is of the correct size} \big) \land$ \newline
						\mbox{\hspace{-0.6em}}\WellFormedToAddFinalProof{$ {acceptedBlock}_{h,v} $}\label{ln:wellFormedBlock} 
				}
				{
					${lockedBlock}_{h,v} \gets {acceptedBlock}_{h,v}$\label{ln:lock-on-quorum-commits}\;
					\Let ${finalisationProof} \equiv  \{ cs \mid $
					\mbox{$\tsmessage{*}{*, *, *,cs}{sv} \in CM \} $} \label{ln:create-finalisation-proof}\;
					${\fb} \gets \langle {acceptedBlock}_{h,v}, {finalisationProof}\rangle $\;
					\Multicast{$ \tnsmessage{FINALISED-BLOCK}{\fb} $} \To all nodes\label{ln:send-finalise-end}\;					
				}
				{
					\label{ln:fail-finalisation-proof-start}
					$moveToNewRoundAndSendRoundChange(r_{h,v}+1, h, v)$\;
					${lockedBlock}_{h,v} \gets \bot$\label{ln:unlocks}\;
					\label{ln:fail-finalisation-proof-end}				
				}	
		}

		\Gc{
			$fpgc_5(h,v) \equiv \newline
			\big( \exists r_{rc}: $
			\mbox{\hspace{2em}}\mbox{\SizeOf{$ \{ \tsmessage{ROUND-CHANGE}{h, r_{rc}}{sv} \in {RM}_v \mid sv \in \avhmacro[] \} $}
				$\geq f(\nhmacro[])+1 \; \land$} \newline
			\mbox{\hspace{2em}}$r_{rc} > r_{h,v}$\newline $ \big) $
		\label{ln:reception-of-f-plus-1-rc-messages}} {
			\Indp\Indp
			$moveToNewRoundAndSendRoundChange(r_{rc}, h, v)$\;
		}
		\Gc{$ fpgc_6(h,v) \equiv \newline $
			$ \big( now(v) \geq {roundTimerExpiration}_{h,v}[r_{h,v}] \big)$\label{ln:expiry}} {
			\Indp\Indp$moveToNewRoundAndSendRoundChange(r_{h,v}+1, h, v)$\;
		}
		\Gc{$ fpgc_7(h,v) \equiv \newline
			\big(\exists r_{rc}: $\newline 
			\mbox{\hspace{1em}}\SizeOf{$ \{ \tsmessage{ROUND-CHANGE}{h, r_{rc}}{sv} \in {RM}_v \mid sv \in \avhmacro[] $\newline
				\mbox{\hspace{4.2em}}}
			$\geq \quorumoptimal({\nhmacro[]}) \:\land$ \newline  
			\mbox{\hspace{1em}}$\big( r_{rc} > {r}_{h,v} \lor 
			(r_{rc} = {r}_{h,v} \land 
			{roundAlreadyStarted}_{h,v} = {false})\big)$\newline $ \big) $
			\label{ln:receive-quorum-round-change-messages}}{
			\Indp\Indp$StartNewRound(r_{rc},h,v)$\;
		}	
	}
\end{algorithm}

This sub-section describes the execution of a generic $h$ instance of the \ibfp{} for validator $v$ which, described by \Cref{eqn:ibfp-h-v,eqn:ibftExecution-v}, corresponds to the execution of the $ fpInit(h,v) $ step followed by the recursive execution of the non-deterministic choice between the guarded commands $ \big( fpgc_i(h,v)\big)_{i\in [7]} $.

The \ibfp{} is organised in rounds, starting from round 0, where validators progress to the next round once they suspect that in the current round they will not be able to decide on the block to be included in the finalised block with height $h$.
Both in the pseudocode and here, the current round for the $h$-th instance of the \ibfp{} for validator $v$ is denoted by $r_{h,v}$.

For each round, one of the validators is selected to play the role of block proposer. 
This selection is operated by the evaluation of \proposermacro[] where \Proposer{$ \cdot,\cdot $} is  a deterministic function of the chain of blocks from the genesis block until the block with height $h-1$ and the current round number.

The pseudocode at \cref{ln:avibfp,ln:hproposer,ln:nhmacro} introduces the following macros:
\begin{itemize}
	\item \nhmacro[]: number of validators for the $h$-th instance of the \ibfp{} for validator $v$;
	\item \avhmacro[]: validators for the $h$-th instance of the \ibfp{} for validator $v$;
	\item \hproposer[]: proposer for round $r_{h,v}$ of the $h$-th instance of the \ibfp{} for validator $v$.
\end{itemize}
These macros are used both in the pseudocode and in this section to simplify the notation when describing the $h$-th instance of the \ibfp{} for validator $v$. 

For the purpose of this work, we do not define the proposer selection function, but we state that it ensures that all of the validators for the $h$-th instance of the \ibfp{} are selected for any sequence of $\nhmacro[]$ consecutive rounds.

During the \ibfp{}, when specific conditions are met, a validator $v$ can \lock[] on a block $B$, which corresponds to the pseudocode in \crefrange{algo:ibft-block-finalisation-protocol-start-and-init}{algo:ibft-block-finalisation-protocol-commit-and-round-change} setting ${lockedBlock}_{h,v}$ to a value different from $\bot$ (see \cref{ln:set-lock}). 
However, regardless of whether a validator $v$ is \lock[ed] or not, $v$ always \unlock[s] (sets ${lockedBlock}_{h,v}$ to $\bot$) when a new \ibfp{} instance starts (see \cref{ln:unlock-at-initialisation}).

\begin{sloppypar}
As specified by the $ {StartNewRound} $ procedure (\cref{ln:start-new-round}), at the beginning of the current round $r_{h,v}$, if $v$ is the selected block proposer for round $r_{h,v}$, then $v$ multicasts a Pre-Prepare message \tsmessage{PRE-PREPARE}{h, r_{h,v}, B}{p_{r_{h,v}}} to all validators (including itself) where, if $v$ is \emph{locked} on a block then $B={lockedBlock}_{h,v}$, otherwise $B$ can be any valid block for height $h$. 
The pseudocode uses \CreateNewProposedBlock{h, v} to represent the creation of a new block with height $h$ by validator $v$. 
Honest validators employ a fair transaction selection algorithm to decide which transactions to include in the next block.
The definition of such algorithm is outside the scope of this work.
\end{sloppypar}

As specified by \crefrange{ln:upon-pre-prepare}{ln:upon-pre-prepare:end-accept}, a validator $v$ accepts a Pre-Prepare message \tnsmessage{PRE-PREPARE}{h_{pp}, r_{pp}, B} if and only if all of the following conditions are met:
\begin{itemize}
	\item $v$ is currently running the \ibfp{} instance   $ h_{pp} $, i.e $h_{pp} = h$ ;
	\item $v$ is in round $r_{pp}$, i.e. $r_{pp} = r_{h,v} $;
	\item the message is signed by the selected proposer for round $r_{h,v}$ and instance $h$ of the \ibfp{};
	\item $v$ has not already accepted a Pre-Prepare message for round $ r_{h,v} $ in the $h$-th instance of the \ibfp{}, i.e. $ {acceptedBlock}_{h,v} = \bot $;
	\item $v$ is not \lock[ed] on a block different from $B$;
	\item block $B$ is a valid block for height $h$.
\end{itemize}
\begin{sloppypar}
	When a validator $v$ accepts a Pre-Prepare message, it multicasts a Prepare message \tsmessage{PREPARE}{h, r_{h,v}, \kec{B}}{v} (see \cref{ln:broadcast-prepare}) to all validators.
	We refer to the block included in a Pre-Prepare message for round $ r $ that validator $ v $ has accepted in instance $ h $ as the \emph{accepted block for round $ r $}.
	Also, if validator $ v $ accepts message \tnsmessage{PRE-PREPARE}{h, r, B}, by extension we say that validator $ v $ \emph{accepts} block $ B $ in round $ r $.
	Both in the pseudocode and in the remainder of this section, $ {acceptedBlock}_{h,v} $ corresponds to the block accepted by validator $ v $ for the current round of the instance $ h $ of the \ibfp{}.
	$ {acceptedBlock}_{h,v} $ is equal to $ \bot $ if validator $ v $ has not accepted any Pre-Prepare message for the current round.
\end{sloppypar}

\begin{sloppypar}
	As specified by the guarded command at \cref{ln:upon-lock}, if (i) validator $ v $ has accepted a block in the current round ($r_{h,v}$) of instance $ h  $ of the \ibfp{}, i.e. $  {acceptedBlock}_{h,v} \neq \bot  $ and (ii) $ v $ has received at least $\quorum{n_{h,v}} $ Prepare messages for the current round of instance $ h $ of the \ibfp{} with Keccak hash matching the Keccak hash of the accepted block for the current round, then validator $v$  \lock[s] on the accepted block for the current round, i.e $ lock_{h,v} \gets  {acceptedBlock}_{h,v} $.
\end{sloppypar}

As specified by the guarded command at \cref{ln:upon-send-commit}, if validator $ v $ has \lock[ed] on a block, i.e. $ lock_{h,v} \neq \bot $ then the first time that one of the conditions listed below is verified, $v$ multicasts a Commit message \tsmessage{COMMIT}{h,r_{h,v},\kec{lock_{h,v}},\cs{lock_{h,v},v}}{v} to all validators (including itself), where \cs{lock_{h,v},v}, called \emph{commit seal}, corresponds to the signature of $v$ over the \lock[ed] block $lock_{h,v}$.
\begin{itemize}
	\item $v$ has received a Pre-Prepare message for the locked block $lock_{h,v}$ and current round of the $ h $ instance of the \ibfp{};
	\item $v$ has received a Prepare message for a Keccak hash matching the Keccak hash of the \lock[ed] block $\kec{lock_{h,v}}$ and current round of the $ h $ instance of the \ibfp{}.
\end{itemize}
The pseudocode uses the state variable ${commitSent}_{h,v}$ to indicate that, for a given round, the Commit message is sent at most once. Indeed, ${commitSent}_{h,v}$ is set to $true$ at \cref{ln:set-commit-sent} and reset to $false$ in the ${StartNewRound}$ procedure at \cref{ln:reset-commit-sent}.
In the IBFT implementation~\cite{jpmorgangithub}, \cs{B, v} is actually a signature over a modified version of block $B$, but in this work we consider the simplified definition provided above as the differences between this definition and the complete one do not affect the results presented here.

The IBFT protocol also includes the following optimisation that, for brevity, we have omitted from the pseudocode description. 
Commit messages are treated as Prepare messages when evaluating the conditions for \lock[ing] and sending Commit messages.
For example, if $\quorum{n_{h,v}} = 4$ and $v$ accepts a Pre-Prepare message for height $h$, round $r_{h,v}$ and block $B$, 2 Prepare messages for height $h$, round $r_{h,v}$ and block with Keccak hash matching $\kec{B}$, and 2 Commit messages for the same height, round and block hash, then $v$ \lock[s] on $B$ and sends a Commit message for height $h$, round $r_{h,v}$ and block hash $\kec{B}$.

\begin{sloppypar}
	Additionally, as specified by the guarded command at \cref{ln:send-commit-for-old-blocks} in \cref{algo:ibft-protocol}, if a validator $ v $ receives a \tnsmessage{PRE-PREPARE}{h_{pp}, r_{pp}, B_{pp}} message and all of the following conditions are verified, then $ v  $  multicasts a Commit message \tsmessage{COMMIT}{h_{pp},r_{pp},\kec{B_{pp}},\cs{B_{pp},v}}{v} to all validators (including itself):
\end{sloppypar}
\begin{itemize}
	\item the Pre-Prepare message is for a previous block height, i.e. $ h_{pp} < h_v $;
	\item the Pre-Prepare message is signed by the selected proposer for round $ r_{pp} $ and instance $ h_{pp} $;
	\item $ B_{pp} $ matches the Ethereum block contained in the finalised block with height $ h_{pp} $ included in the local blockchain of validator $ v $.
\end{itemize}
As specified by \crefrange{ln:send-finalise-start}{ln:send-finalise-end}, the first time that all of the following conditions are verified for validator $v$, (i) $v$ \lock[s] on the accepted block ${acceptedBlock}_{h,v}$ (\cref{ln:lock-on-quorum-commits}) and (ii) broadcasts a finalised block including the accepted block ${acceptedBlock}_{h,v}$ and related finalisation proof:
\begin{itemize}
	\item $ v $ has accepted a block in the current round ($r_{h,v}$) of instance $ h  $ of the \ibfp{}, i.e. $  {acceptedBlock}_{h,v} \neq \bot $;
	\item $v$ has received Commit messages,  from at least $\quorum{n}$ distinct \avhmacro[], for height $h$, current round $r_{h,v}$ of instance $ h $ and Keccak hash matching the Keccak hash of the accepted block \kec{{acceptedBlock}_{h,v} };
	\item all of the commit seals received (as part of Commit messages) are of the correct size;
	\item the accepted block $ {acceptedBlock}_{h,v}  $ is well formed to allow adding the finalisation proof to it.
\end{itemize} 
As indicated by \cref{ln:create-finalisation-proof}, the finalisation proof includes all of the commit seals included in Commit messages received for the current round ($r_{h,v}$) of instance $ h $ and Keccak hash matching \kec{{acceptedBlock}_{h,v} }.
The pseudocode uses the state variable ${finalisedBlockSent}_{h,v} $ to trigger the transmission of a finalised block only the first time that the conditions listed above are met.
${finalisedBlockSent}_{h,v} $ is set at \cref{ln:set-finalised-block-sent}  and reset in the ${StartNewRound}$ procedure at \cref{ln:reset-finalised-block-sent}.

In alignment with PBFT, IBFT relies on a round change sub-protocol to detect whether the selected proposer may be Byzantine and causing the protocol to never terminate. 
When one of the conditions listed below is satisfied for validator $v$ while in round $r_{h,v}$, $v$ moves to a new round $r'$ and multicasts a Round-Change message \tsmessage{ROUND-CHANGE}{h, r'}{v} to all validators (including itself).
\begin{itemize}
	\item \textbf{Round Timer Expiry} (\cref{ln:expiry}). Expiration of the round timer started by each validator at the beginning of every round (see \cref{ln:start-timer}). 
	The length of the round time is exponential to the round number. 
	In this case $v$ moves to round $r' = r_{h,v}+1$.
	\item \textbf{Pre-Prepare message not matching \lock[ed] block} (\crefrange{ln:pre-prepare-not-matching-start}{ln:pre-prepare-not-matching-end}). Reception of a Pre-Prepare message sent by the selected proposer for round $r_{h,v}$ with proposer block not matching the block on which validator $v$ is \lock[ed].
	In this case $v$ moves to round $r'=r_{h,v}+1$.
	\item \textbf{Reception of $f(n)+1$ Round-Change messages for future round} (\cref{ln:reception-of-f-plus-1-rc-messages}).  Reception of $f(n) + 1$ Round-Change messages for instance $h$ and round $r'$ with $r' > r_{h,v}$.
	In this case $v$ moves to round $r'$.
	\item  \textbf{Failure in creating the finalisation proof} (\crefrange{ln:fail-finalisation-proof-start}{ln:fail-finalisation-proof-end}). 
	$v$ has received at least \quorum{n_{h,v}} Commit messages for instance $h$, current round and Keccak hash \kec{{acceptedBlock}_{h,v} }  and at least one of the following conditions is verified:
	\begin{itemize}
		\item at least one of the commit seals included in the Commit messages received by a validator for round $r_{h,v}$ and instance $h$ is of the wrong size;
		\item the accepted block ${acceptedBlock}_{h,v} $ is not formatted correctly and does not allow adding the finalisation proof to it.
	\end{itemize}
	If any of these conditions are verified, then $v$ also \unlock[s] before moving to the next round (see \cref{ln:unlocks}).
\end{itemize}
When validator $v$ moves to a new round $r'$, the state variable ${acceptedBlock}_{h,v}$ is reset to $\bot$ (see \cref{algo2:set-accept-pre-prepare-1}).
As it can be noted from the pseudocode, moving to a new round (\cref{ln:move-to-new-round}) does not imply starting a new round (\cref{ln:start-new-round}).
As specified by guarded command $ fpgc_7(h,v) $ at \cref{ln:receive-quorum-round-change-messages}, when a validator $v$ receives either \quorum{n_{h,v}} Round-Change messages for instance $h$  and round $ r' $ with $r' > r_{h,v}$ or \quorum{n_{h,v}} Round-Change messages for instance $h$  and round $r'$ matching the current round (i.e. $r' = r_{h,v}$) but the current round has yet to be started, then $v$ starts round $r'$.\\
Starting round $r'$ includes executing the following actions:
\begin{itemize}
	\item $v$ starts the round timer for round $r'$ with length $\RoundTimerTimeout{\ensuremath{r'}}$;
	\item if $v$ is the proposer for round $r'$, then $v$ multicasts the following  Pre-Prepare message to all validators: \tsmessage{PRE-PREPARE}{h, r', B'}{p_{r'}} where, if $p_{r'}$ is \emph{locked} on block $B$ then $B'=B$, otherwise $B'$ can be any valid block for instance $h$.
\end{itemize}

From here on the protocol proceeds as described above.

\section{Definitions}\label{sec:definitions}
In this section we provide a series of definitions that will be used in the presentation of our persistence and liveness analysis in \crefrange{sec:safety-analysis}{sec:liveness-analysis}.

We define $t$-Byzantine-fault-tolerant persistence as follows where $ t $ can be either a constant value or a deterministic function of some sort.
\begin{definition}[$t$-Byzantine-fault-tolerant persistence]
	The IBFT protocol ensures $t$-Byzantine-fault-tolerant persistence if and only if the following statement is true:
	provided that no more than $t$ validators are Byzantine, the IBFT protocol guarantees the persistence property of distributed permissioned transactions ledgers (see \cref{def:robustnes}). 
\end{definition}

The aim of the $h$-th instance of the \ibfp{} is to have all honest validators to eventually decide on the block to be included in the finalised block with height $h$ and broadcast the  finalised block to all the nodes.\\
In the context of the \ibfp{} we define safety as follows:
\begin{definition}[$t$-Byzantine-fault-tolerant safety for the \ibfp{}]\label{def:t-tolerate-ibfp}
	The \ibfp{} ensures $t$-Byzantine-fault-tolerant safety if and only if it guarantees the validity of the following statement:
	in the presence of no more than $t$ Byzantine validators and provided that all honest validators agree on the same validator set, the protocol ensures that instance $ h $ of the \ibfp{} can only produce a valid finalised block for height $ h $ and that no two valid finalised blocks including different blocks can ever be produced by any instance $ h $ of the \ibfp{}.
\end{definition}

Our following definitions of optimal Byzantine-fault-tolerant safety follows directly from the known upper-limit on the number of Byzantine nodes that a consensus protocol operating in an eventually synchronous network can withstand \cite{Dwork:1988:CPP:42282.42283}:
\begin{definition}[Optimal Byzantine-fault-tolerant safety threshold for the \ibfp{}]\label{def:optimal-byz-fault-tol-safety-for-ibfp}
	The \ibfp{} guarantees optimal Byzantine-fault-tolerant safety threshold provided that for any instance $h$ it guarantees $ f(n_h) $	-Byzantine-fault-tolerant safety where $n_h$ is the number of validators for the $h$-th instance of \ibfp{}.
\end{definition}

The following two definitions are related to the liveness property of the IBFT protocol.
\begin{definition}[$t$-Byzantine-fault-tolerant liveness]
	The IBFT protocol ensures $t$-Byzantine-fault-tolerant liveness if and only if the following statement is true:
	provided that no more than $t$ validators are Byzantine, the IBFT protocol guarantees the liveness property of distributed permissioned transactions ledgers (see \cref{def:robustnes}). 
\end{definition}

\begin{definition}[$t$-Byzantine-fault-tolerant weak-liveness of the \ibfp{}]
	The \ibfp{} guarantees $t$-Byzantine-fault-tolerant weak-liveness if and only if, provided that no more than $t$ validators are Byzantine, it guarantees that for any $h$ instance of the \ibfp{} at least one honest validator will eventually be able to produce a valid finalised block for height $h$.
\end{definition}

\section{Persistence Analysis}\label{sec:safety-analysis}
In this section we analyse the persistence property of the IBFT protocol that in conjunction with the liveness property determines that overall robustness of the IBFT protocol as defined in \cref{def:robustnes}.

Any proof omitted in this section can be found in \ref{ap:proofs-for-persistence analysis}.

\begin{lemma}\label{lem:ibft-protocol-safety-reverse}
	If the \ibfp{} does not guarantee  $t$-Byzantine-fault-tolerant safety, then the IBFT protocol does not guarantee  $t$-Byzantine-fault-tolerant persistence. 
\end{lemma}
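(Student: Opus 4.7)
The plan is to prove the contrapositive by an execution-construction argument: starting from any execution of the \ibfp{} that witnesses a violation of $t$-Byzantine-fault-tolerant safety, I extend it into a full IBFT execution in which two honest nodes disagree on what sits at the same ledger position, thereby falsifying the persistence property of \cref{def:robustnes}.

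First, I would unfold \cref{def:t-tolerate-ibfp}. A failure of $t$-Byzantine-fault-tolerant safety means there is an execution with at most $t$ Byzantine validators (and all honest validators sharing the same validator set) in which instance $h$ of the \ibfp{} produces two valid finalised blocks $\fb_1$ and $\fb_2$ for height $h$ with $\ExtractBlock{\fb_1} \neq \ExtractBlock{\fb_2}$. The alternative branch of the negation, namely the production of an outright \emph{invalid} finalised block, need not be addressed separately: such a block would be rejected by the $\textit{isValidFinalisedBlock}$ check inside $igc_1$ of \Cref{algo:ibft-protocol} and hence cannot, on its own, trigger a persistence violation.

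Second, I would construct a global execution by letting the adversary schedule finalised-block deliveries before GST (permitted by the Network Model of \cref{sec:system-model}). I pick two honest validators $v_1$ and $v_2$ with $h_{v_1} = h_{v_2} = h$ at the moment of delivery, and arrange for $v_1$ to receive $\tnsmessage{FINALISED-BLOCK}{\fb_1}$ and $v_2$ to receive $\tnsmessage{FINALISED-BLOCK}{\fb_2}$ as their first finalised block for height $h$. By inspection of $igc_1$, both $\textit{isValidFinalisedBlock}$ checks succeed (both $\fb_1$ and $\fb_2$ are by assumption valid finalised blocks for height $h$ extending the common ledger up to $h-1$), so each node commits the received finalised block to its local chain at position $h$ and then increments $h_v$. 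From that point on, the guard $\BlockHeight{\ExtractBlock{\fb}} = h_v$ forbids any subsequent finalised block from overwriting position $h$ in either node's chain, so the two local chains diverge permanently at height $h$.

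The remaining step, and the main technical nuisance, is to convert this block-level disagreement into a transaction-level disagreement, since persistence in \cref{def:robustnes} is phrased over positions $(h, j)$ of transactions rather than over whole blocks. The cleanest way to close this gap is to observe that the safety-violating execution extends to a family of such executions: the Byzantine proposer(s) producing the conflicting Pre-Prepare messages may freely choose distinct transaction payloads for the two proposals, so without loss of generality $\ExtractBlock{\fb_1}$ and $\ExtractBlock{\fb_2}$ may be assumed to differ already in their first transaction slot. Then $v_1$ and $v_2$ assign different transactions to position $(h, 0)$ of their local ledgers, which directly contradicts clause (i) of the persistence property and completes the argument.
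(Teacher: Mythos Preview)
Your argument is correct and follows essentially the same route as the paper's: obtain two distinct valid finalised blocks for height $h$ from the assumed safety failure, deliver them before GST to two honest nodes at height $h$, and read off a transaction-level disagreement at some position $(h,i)$. Two minor slips worth fixing: what you describe is a \emph{direct} proof, not the contrapositive; and your justification for why the two blocks may be assumed to differ in their first transaction slot appeals to ``Byzantine proposer(s)'', whereas in a generic safety-violating execution the proposers may well be honest---the paper sidesteps this by simply asserting such a position exists because $B\neq B'$.
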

\begin{proof}
	Assume that $v$ and $v'$ are two honest nodes such that the height of their local blockchain is $h$.
	According to the Lemma, in presence of $t$ Byzantine validators for height $h$, the \ibfp{} can produce two different valid finalised blocks including blocks $B$ and $B'$ respectively,  with $B \neq B'$.
	Assume that transactions $T$ and $T'$, with $T \neq T'$, are included at position $i$ of blocks $B$ and $B'$ respectively.
	This is possible as $B \neq B'$.
	If this happens before GST (see the Network Model definition in \cref{sec:system-model}), then $v$ could receive the finalised block including $B$ while $v'$ could receive the finalised block including $B'$.
	Since both finalised blocks are valid, $v$ adds the finalised block including $B$ in position $h$ of its local blockchain while $v'$ adds the finalisation block including $B'$ in position $h$ of its local blockchain. This equates to $T$ being added in position $(h,i)$ of the local ledger of $v$ and $T'$ being added in the same position $(h,i)$ of the local ledger of $v'$, with $T \neq T'$.
\end{proof}

\begin{lemma}\label{lem:transaction-t-uniquiness}
	A transaction $ T $ cannot appear in two different positions of the local blockchain of an honest node. 
\end{lemma}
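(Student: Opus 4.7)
The plan is to argue by the block-validity conditions that honest nodes enforce when updating their local chain. By the specification in \cref{algo:ibft-protocol}, guarded command $igc_1(v)$ only appends a finalised block $\fb$ to ${chain}_v$ when $\mathit{isValidFinalisedBlock}(\fb,v)$ holds, which in particular requires $\Valid{\ExtractBlock{\fb},\ExtractBlock{{chain}_v[h_v-1]}}$. Hence every block in the local chain of an honest node is a valid Ethereum block with respect to its parent, and in particular it admits the transactionsRoot and stateRoot declared in its header. The local chain of an honest node therefore induces a well-defined sequence of Ethereum state transitions starting from the genesis state.

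First I would suppose, for contradiction, that some transaction $T$ appears at two distinct positions $(h_1,i_1) \neq (h_2,i_2)$ of the local blockchain, and without loss of generality that $(h_1,i_1)$ comes strictly before $(h_2,i_2)$ in the canonical transaction order (first by height, then by in-block index). I would then invoke the standard Ethereum replay-protection property: every signed transaction carries the nonce of its sender account, each transaction execution from a valid state increments that account's nonce by one, and a transaction whose nonce does not match the current account nonce cannot be applied to the state. This property is baked into the Ethereum state transition function and is therefore enforced by the stateRoot field that $\Valid{\cdot,\cdot}$ verifies.

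Next I would apply this property to the two occurrences of $T$. Let $s$ be the sender of $T$ and $n$ its nonce. When the first occurrence at $(h_1,i_1)$ is processed, the account nonce of $s$ becomes strictly greater than $n$, and by monotonicity of the account nonce along valid state transitions it remains greater than $n$ at every subsequent point in the chain, including just before $(h_2,i_2)$. But then applying $T$ again at $(h_2,i_2)$ is not a valid state transition, so the stateRoot of the block at height $h_2$ cannot match the one actually declared in its header, contradicting $\Valid{\ExtractBlock{{chain}_v[h_2]},\ExtractBlock{{chain}_v[h_2-1]}}$ and hence $\mathit{isValidFinalisedBlock}$. This covers both the $h_1=h_2$ case (two different indices inside the same block, where the second application of $T$ within the intra-block execution sequence already violates the nonce check) and the $h_1<h_2$ case.

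The main obstacle is purely one of precision rather than depth: the paper explicitly narrows $\Valid{\cdot,\cdot}$ to a specific list of header fields, so the proof has to justify that replay protection is in fact enforced by one of those fields. The cleanest route is through stateRoot, since correct nonce bookkeeping is part of the state transition function computing that root; I would state this as an appeal to the Ethereum Yellow Paper \cite{yellowpaper} rather than reproving it. With that appeal in place, the argument reduces to the monotonicity of account nonces and is then essentially bookkeeping.
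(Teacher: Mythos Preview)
Your proposal is correct and follows essentially the same approach as the paper: both arguments rest on the Ethereum account-nonce replay-protection mechanism enforced through block validity, with an appeal to the Yellow Paper. The paper's proof is terser, citing equations (58) and (61) of \cite{yellowpaper} directly for the nonce-match requirement and the nonce increment, whereas you spell out more carefully that the check is routed through the \texttt{stateRoot} field among those that $\Valid{\cdot,\cdot}$ verifies; this extra care is warranted and does not change the substance of the argument.
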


\begin{lemma}\label{lem:if-an-honest-node-sets-hv-to-h-then-it-has-all-previous-blocks}
	One of the necessary preconditions for an honest validator $ v $ to set $ h_v $ to $ h $ is that all blocks with height $ < h $ have been added to its local blockchain.
\end{lemma}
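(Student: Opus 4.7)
The plan is to proceed by induction on $h$, examining the only two points in the pseudocode where the local height variable $h_v$ is assigned: the initialisation step $ibftInit(v)$ and \cref{ln:increase-h-v} of $igc_1(v)$ in \cref{algo:ibft-protocol}. Since these are the only assignments to $h_v$ in the entire specification (and $ibftExecution(v)$ is a non-deterministic choice between $igc_1(v)$ and $igc_2(v)$, with the latter leaving $h_v$ untouched), the induction is tightly constrained by what the pseudocode does.

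For the base case $h=1$, I would observe that $ibftInit(v)$ assigns ${chain}_v[0] \gets \textit{genesis block}$ before setting $h_v \gets 1$, so all blocks of height $<1$ (namely, the genesis block at height $0$) are indeed present in ${chain}_v$ at the moment $h_v$ becomes $1$. For the inductive step, assume the claim for $h$ and consider an execution in which $v$ sets $h_v$ to $h+1$. This assignment can only occur via \cref{ln:increase-h-v} of $igc_1(v)$, and that guarded command is only enabled when the previous value of $h_v$ is some value $h'$ with $h' = h$ after the increment, i.e.\ when $h_v = h$ immediately before execution. By the inductive hypothesis, at that moment all blocks of height $<h$ are present in ${chain}_v$. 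The body of $igc_1(v)$ then executes \cref{ln:add-block}, which writes a valid finalised block into ${chain}_v[h_v] = {chain}_v[h]$, before finally executing \cref{ln:increase-h-v}. Hence, at the instant $h_v$ is set to $h+1$, all blocks of height $<h+1$ are in ${chain}_v$, closing the induction.

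The only subtlety is making precise the atomicity assumption underlying the argument, since we need the block addition at \cref{ln:add-block} and the increment at \cref{ln:increase-h-v} to be part of the same atomic execution of the guarded command, and we need to rule out that $\textit{\ibfp{}}(h_v,v)$, which runs concurrently, ever modifies $h_v$ or ${chain}_v$. Both facts follow directly from the process-algebra-like specification in \cref{eqn:IBFT-v,eqn:ibftExecution-v}: $ibftInit(v)$ and each $igc_i(v)$ are declared to be sequential atomic steps, and inspection of \crefrange{algo:ibft-block-finalisation-protocol-start-and-init}{algo:ibft-block-finalisation-protocol-commit-and-round-change} confirms that no step of $\textit{\ibfp{}}(h,v)$ reads or writes $h_v$ or ${chain}_v$ directly. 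This is really the main (and essentially the only) obstacle, and once noted the lemma follows by a short and mechanical structural induction on the number of $h_v$-assignments executed so far by node $v$.
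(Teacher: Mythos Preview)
Your proof is correct and follows essentially the same approach as the paper: both arguments identify that the only assignments to $h_v$ are the initialisation and \cref{ln:increase-h-v}, and that \cref{ln:add-block} always precedes the increment within the atomic execution of $igc_1(v)$. Your presentation is more explicit (a formal induction on $h$, plus a careful remark on atomicity and non-interference from the concurrent $\textit{\ibfp{}}(h,v)$ process), whereas the paper gives the same reasoning in a compressed, direct form without spelling out the induction.
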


\begin{proof}
	There is no line of pseudocode that removes a finalised block from the local blockchain.
	The only point where finalised blocks are added to the local blockchain is \cref{ln:add-block} which adds a finalised block in position $ h_v $ of the local chain.
	$ h_v $ is set to $ 0 $ at the initialisation ($ fpInit(v) $) and the only point where $ h_v $ is changed is \cref{ln:increase-h-v} where it is incremented by one.
	The observation that \Cref{ln:increase-h-v} is executed any time that \Cref{ln:add-block} is executed concludes the proof.
\end{proof}

\begin{lemma}\label{lem:if-an-honest-nodes-start-instance-h-then-it-has-all-previous-blocks}
	One of the necessary preconditions for an honest validator $ v $ to start instance $ h $ of the \ibfp{} is that all blocks with height $ < h $ have been added to its local blockchain.
\end{lemma}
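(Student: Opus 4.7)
The plan is to invoke \Cref{lem:if-an-honest-node-sets-hv-to-h-then-it-has-all-previous-blocks} after a simple case analysis on the lines of pseudocode at which an instance of the \ibfp{} can be started. By inspection of \Cref{algo:ibft-protocol}, the only two points where an honest validator $v$ spawns a process $\textit{\ibfp{}}(h_v,v)$ are \cref{ln:start-ibfp-1} (inside $ibftInit(v)$) and \cref{ln:start-ibfp-2} (inside the guarded command $igc_1(v)$). In both cases the argument passed to the spawned process is the current value of the local variable $h_v$, so if $v$ starts instance $h$ of the \ibfp{}, then at the moment of that spawn necessarily $h_v = h$.

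First I would handle \cref{ln:start-ibfp-1}: at this point $ibftInit(v)$ has just set $h_v \gets 1$ and ${chain}_v[0]$ to the genesis block, so all blocks with height strictly less than $h = 1$ (namely, the genesis block at height $0$) are in the local blockchain, and the claim holds trivially. Then I would handle \cref{ln:start-ibfp-2}: this line is executed in $igc_1(v)$ only after \cref{ln:add-block} has appended a finalised block at index $h_v$ and after \cref{ln:increase-h-v} has incremented $h_v$; by the time \cref{ln:start-ibfp-2} is reached, the new value of $h_v$ satisfies the hypothesis of \Cref{lem:if-an-honest-node-sets-hv-to-h-then-it-has-all-previous-blocks}, which immediately yields that all blocks with height less than this new $h_v = h$ have been added to ${chain}_v$.

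Combining the two cases completes the proof. There is no real obstacle here: the lemma is essentially a corollary of \Cref{lem:if-an-honest-node-sets-hv-to-h-then-it-has-all-previous-blocks} together with the syntactic observation that every spawn of $\textit{\ibfp{}}(h,v)$ in \Cref{algo:ibft-protocol} occurs with $h$ equal to the current $h_v$. The only subtle point worth flagging explicitly is that no line of pseudocode ever spawns $\textit{\ibfp{}}(h,v)$ for an $h$ different from the current $h_v$, so the case analysis above is exhaustive.
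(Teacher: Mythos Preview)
Your proposal is correct and follows essentially the same approach as the paper: observe that the only spawn points are \cref{ln:start-ibfp-1,ln:start-ibfp-2}, note that in both cases the argument equals the current $h_v$, and then invoke \Cref{lem:if-an-honest-node-sets-hv-to-h-then-it-has-all-previous-blocks}. Your treatment is slightly more explicit in handling the $h_v=1$ initialisation case separately, but this is just additional detail rather than a different route.
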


\begin{proof}
	According to \cref{ln:start-ibfp-1,ln:start-ibfp-2} of \cref{algo:ibft-protocol}, in an honest node starts instance $ h $ of the \ibfp{}, then $ h_v = h $.
	This and \cref{lem:if-an-honest-node-sets-hv-to-h-then-it-has-all-previous-blocks} concluded the proof.
\end{proof}

%
%

%
%
%
\begin{lemma}\label{lem:if-ibfp-is-t-BFT-then-no-two-valid-blocks-for-the-same-height-are-ever-created}
	If the \ibfp{} guarantees $ t $-Byzantine-fault-tolerance safety and for any possible sequence of finalised blocks $ \fb_0,\ldots,\fb_{h\geq 0} $ stored in the local blockchain of any honest validator, the number of Byzantine validators in \AV{$ \fb_0,\ldots,\fb_h $} is no more than $ t $, then no two valid finalised blocks for the same height but including two different blocks are ever produced by any node, whether Byzantine or honest.
\end{lemma}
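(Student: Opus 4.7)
My plan is to prove this by strong induction on the height $h$. The induction hypothesis at step $h$ will read: \emph{for every $h' < h$, no two valid finalised blocks for height $h'$ containing different blocks are ever produced by any node}. The base case $h=0$ is immediate, since all nodes share the fixed genesis block by assumption of the system model. The main work is the inductive step, and it reduces to verifying that the two preconditions of \Cref{def:t-tolerate-ibfp} (namely ``no more than $t$ Byzantine validators'' and ``all honest validators agree on the same validator set'') hold for instance $h$, so that $t$-Byzantine-fault-tolerant safety of the \ibfp{} can be invoked directly.

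To establish agreement on the validator set, I would first argue that any honest validator $v$ which has reached local height at least $h$ has a uniquely determined prefix ${chain}_v[0:h-1]$: by \Cref{lem:if-an-honest-node-sets-hv-to-h-then-it-has-all-previous-blocks}, such a validator has stored a valid finalised block at every height $0,1,\ldots,h-1$, and by the induction hypothesis each of those heights admits only one possible valid finalised block. Consequently all honest validators that ever start instance $h$ of the \ibfp{} share the same prefix ${chain}[0:h-1]$ and therefore, since \AV{$\cdot$} is a deterministic function of this prefix, compute the same validator set $\avhmacro$. The lemma's hypothesis (applied to the common prefix) then yields that this validator set contains no more than $t$ Byzantine validators, so both preconditions of \Cref{def:t-tolerate-ibfp} are satisfied.

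Invoking $t$-Byzantine-fault-tolerant safety of the \ibfp{} on instance $h$ then delivers exactly the statement that no two valid finalised blocks containing different blocks are produced at height $h$, completing the induction. The subtle point I expect to be the main obstacle is the clause ``by any node, whether Byzantine or honest'': a Byzantine-assembled finalised block is still ``valid'' only if its finalisation proof contains at least $\quorum{\nhmacro}$ commit seals signed by members of the common $\avhmacro$, and since commit seals cannot be forged (by unforgeability of the signature scheme), such a proof necessarily requires genuine signatures from at least $\quorum{\nhmacro} - t \ge t+1$ honest validators. This is precisely the situation that \Cref{def:t-tolerate-ibfp} rules out for instance $h$, so the safety guarantee of the \ibfp{} covers Byzantine-assembled blocks as well and the inductive argument goes through.
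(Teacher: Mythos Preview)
Your proposal is correct and follows essentially the same inductive argument as the paper: the paper also proves this by induction on $h$, using the shared genesis block for the base case and then combining the inductive hypothesis with \Cref{lem:if-an-honest-nodes-start-instance-h-then-it-has-all-previous-blocks} and the determinism of \AV{$\cdot$} to conclude that all honest validators agree on the validator set for instance $h$, at which point $t$-Byzantine-fault-tolerant safety is invoked. Your additional paragraph explaining why the conclusion extends to Byzantine-assembled finalised blocks (via unforgeability of commit seals) is a point the paper leaves implicit in \Cref{def:t-tolerate-ibfp}, so if anything your write-up is slightly more explicit there.
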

\begin{proof}
	The proof is by induction on the length $ h $ of the sequence of finalised blocks.
	Before proceeding with the proof, we shall remind that (a) each validator computes the set of validators for their own instance $ h $ of the \ibfp{} as a deterministic function of the chain of finalised blocks from the genesis block to the block with height $ h-1 $.
	
	\textbf{Base Case $ h = 1. $} 
	Since we assume that all honest nodes have the same genesis block, all honest validators will agree on same set of validators for first instance of the \ibfp{}. The assumptions that no more than $ t $ of these validators are Byzantine and that the \ibfp{} guarantees $ t $-Byzantine-fault-tolerant safety imply the Lemma for the base case.
	
	\textbf{Inductive Case.}
	For the inductive case we assume that the Lemma is true for all $ h' $ such that $ 1 \leq h' < h $ and then we prove that the Lemma also holds for $ h $.
	By \cref{lem:if-an-honest-nodes-start-instance-h-then-it-has-all-previous-blocks}, the assumption for the inductive case and property (a), all honest validators that start instance $ h $ of the \ibfp{} agree on the same set of validators.
	The assumption that the \ibfp{} guarantees $ t $-Byzantine-fault-tolerant safety imply the Lemma for the inductive case.
\end{proof}

\begin{lemma}\label{lem:common-prefix}
	If for any possible sequence of blocks $ \fb_0,\ldots,\fb_{h\geq0} $ stored in the local blockchain of any honest validator, the number of Byzantine validators in \AV{$ \fb_0,\ldots,\fb_h $} is no more than $ t $, then the local blockchains of any two honest nodes $ v $ and $ v' $ agree on any block with height $ h  $ such that  $0 \leq h < \min(h_v, h_{v'}) $. 
\end{lemma}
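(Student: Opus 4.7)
The plan is to proceed by induction on the height $h$, proving that ${chain}_v[h] = {chain}_{v'}[h]$ for every $h$ with $0 \leq h < \min(h_v, h_{v'})$.

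For the base case $h = 0$, the System Model guarantees that all nodes initialise ${chain}_v[0]$ with the same genesis block inside $ibftInit$ of \cref{algo:ibft-protocol}, so ${chain}_v[0] = {chain}_{v'}[0]$ holds trivially.

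For the inductive step, assume ${chain}_v[h'] = {chain}_{v'}[h']$ for all $0 \leq h' < h$, with $h < \min(h_v, h_{v'})$. Blocks are added to a local chain only at \cref{ln:add-block}, which is gated by $igc_1$ and therefore requires $isValidFinalisedBlock$ to hold at the moment of addition; by \cref{lem:if-an-honest-node-sets-hv-to-h-then-it-has-all-previous-blocks} both nodes have already stored finalised blocks at every position $0,\ldots,h$. Using the inductive hypothesis, the prefixes ${chain}_v[0:h-1]$ and ${chain}_{v'}[0:h-1]$ coincide, so by determinism of \AV{$\cdot$} both nodes compute the same validator set $\avhmacro[]$ for instance $h$ of the \ibfp{}; consequently ${chain}_v[h]$ and ${chain}_{v'}[h]$ are both valid finalised blocks with respect to a single common validator set.

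I will then invoke \cref{lem:if-ibfp-is-t-BFT-then-no-two-valid-blocks-for-the-same-height-are-ever-created}, which under the Byzantine-bound hypothesis (together with \ibfp{} $t$-BFT safety, which I read as an implicit precondition of the present lemma) asserts that no two valid finalised blocks for the same height but containing different underlying blocks are ever produced anywhere in the network. Applied at height $h$, this forces ${chain}_v[h] = {chain}_{v'}[h]$, closing the induction. The only non-mechanical step is aligning the notions of validity used by $v$ and $v'$, which is exactly what the common prefix plus the determinism of \AV{$\cdot$} accomplish; once the validator sets coincide, \cref{lem:if-ibfp-is-t-BFT-then-no-two-valid-blocks-for-the-same-height-are-ever-created} does all the remaining work.
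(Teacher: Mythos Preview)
Your proposal is correct and uses the same ingredients as the paper: \cref{lem:if-an-honest-node-sets-hv-to-h-then-it-has-all-previous-blocks} to ensure both nodes have stored a finalised block at height $h$, and \cref{lem:if-ibfp-is-t-BFT-then-no-two-valid-blocks-for-the-same-height-are-ever-created} to conclude uniqueness. The paper's proof is more direct: for each $h$ it simply invokes \cref{lem:if-ibfp-is-t-BFT-then-no-two-valid-blocks-for-the-same-height-are-ever-created} without a separate induction, because the inductive argument establishing that all honest validators agree on the validator set at every height is already carried out \emph{inside} the proof of \cref{lem:if-ibfp-is-t-BFT-then-no-two-valid-blocks-for-the-same-height-are-ever-created}. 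Your explicit induction on $h$ therefore re-derives that agreement on prefixes, which is harmless but redundant once you invoke \cref{lem:if-ibfp-is-t-BFT-then-no-two-valid-blocks-for-the-same-height-are-ever-created} anyway.

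One minor imprecision: you conclude ${chain}_v[h] = {chain}_{v'}[h]$, but \cref{lem:if-ibfp-is-t-BFT-then-no-two-valid-blocks-for-the-same-height-are-ever-created} only guarantees that the two finalised blocks contain the same underlying \emph{block}; the finalisation proofs (sets of commit seals) may differ. The lemma statement speaks of agreement on the block, and the paper's proof is phrased accordingly. You are also right that $t$-BFT safety of the \ibfp{} is an implicit hypothesis inherited from \cref{lem:if-ibfp-is-t-BFT-then-no-two-valid-blocks-for-the-same-height-are-ever-created}; the paper's statement of \cref{lem:common-prefix} omits it but its proof relies on it.
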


\begin{proof}
	For a given $ h $, such that $0 \leq h < \min(h_v, h_{v'}) $, \cref{lem:if-an-honest-node-sets-hv-to-h-then-it-has-all-previous-blocks} guarantees that both $ v $ and $ v' $ have added a finalised block with height $ h $ to their respective local blockchains, and \cref{lem:if-ibfp-is-t-BFT-then-no-two-valid-blocks-for-the-same-height-are-ever-created} implies that both finalised blocks include the same block.
	Since  all honest nodes have the same genesis block and honest nodes only add valid finalised block to their local blockchains (see \cref{ln:finalise-block-start}), the Lemma is proven.
\end{proof}


%
%

	\begin{lemma}\label{cor:guarantee-of-persitence-i}
	The IBFT protocol guarantees condition (i) of $t$-Byzantine-fault-tolerant persistence if and only if the \ibfp{} guarantees $t$-Byzantine-fault-tolerant safety. 
\end{lemma}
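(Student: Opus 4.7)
The statement is an if-and-only-if, so I would split it into the two implications and handle each separately.

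For the ``if'' direction, I would assume that the \ibfp{} guarantees $t$-Byzantine-fault-tolerant safety and show that condition (i) of persistence holds. Suppose two honest nodes $v$ and $v'$ add transactions $T$ and $T'$ respectively at the same position $i=(h,j)$ of their local ledgers. By the positional convention introduced just before \cref{def:robustnes}, this means that both nodes have a finalised block at height $h$ in their local blockchain, whence $h < \min(h_v,h_{v'})$ by \cref{lem:if-an-honest-node-sets-hv-to-h-then-it-has-all-previous-blocks}. I would then invoke \cref{lem:common-prefix} (common-prefix) to conclude that the finalised blocks stored at height $h$ by $v$ and $v'$ are identical, and hence their $j$-th transactions coincide, giving $T=T'$. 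The small piece I would need to verify carefully is that the Byzantine-bound hypothesis of \cref{lem:common-prefix} is inherited from the ``no more than $t$ Byzantine validators'' premise of $t$-BFT persistence; since the validator set of any instance is a subset of the global node set, the bound propagates.

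For the ``only if'' direction, I would argue by contrapositive: assume the \ibfp{} does not guarantee $t$-BFT safety, and show that condition (i) of persistence fails. This is essentially \cref{lem:ibft-protocol-safety-reverse}, whose proof explicitly exhibits, in a pre-GST execution, two honest nodes storing distinct blocks $B\neq B'$ at the same height $h$ and therefore distinct transactions $T\neq T'$ at the same position $(h,j)$. This violates condition (i) specifically (not merely condition (ii)), which is exactly what is needed.

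The main obstacle, such as it is, is bookkeeping rather than any deep technical step: making sure (a) the Byzantine bound in the hypothesis of $t$-BFT persistence implies the per-instance Byzantine bound required by \cref{lem:if-ibfp-is-t-BFT-then-no-two-valid-blocks-for-the-same-height-are-ever-created} and \cref{lem:common-prefix} for every prefix of finalised blocks reachable in honest local chains, and (b) that \cref{lem:ibft-protocol-safety-reverse}'s construction of diverging honest ledgers genuinely witnesses a failure of condition (i) and not merely of condition (ii). Once both points are clear, the proof collapses to a one-paragraph application of \cref{lem:common-prefix} in one direction and a citation of \cref{lem:ibft-protocol-safety-reverse} in the other.
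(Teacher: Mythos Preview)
Your proposal is correct and follows essentially the same route as the paper: the paper's proof is a two-line affair citing \cref{lem:common-prefix} (together with \cref{lem:transaction-t-uniquiness}) for the ``if'' direction and \cref{lem:ibft-protocol-safety-reverse} for the ``only if'' direction, which is exactly your decomposition. The only minor difference is that the paper also invokes \cref{lem:transaction-t-uniquiness} alongside the common-prefix lemma; you may wish to add that citation for completeness, though your argument via \cref{lem:common-prefix} already pins down the transaction at position $(h,j)$ uniquely.
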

\begin{proof}
	The forward direction is proved by \cref{lem:transaction-t-uniquiness,lem:common-prefix}.
	The reverse direction is proved by \Cref{lem:ibft-protocol-safety-reverse}.
\end{proof}

\Cref{def:optimal-byz-fault-tol-safety-for-ibfp} and \cref{cor:guarantee-of-persitence-i} lead to the following definition of optimal Byzantine-fault-tolerant persistence threshold.
\begin{definition}[Optimal Byzantine-fault-tolerant persistence threshold for the IBFT protocol]\label{def:optimal-Byzantine-fault-tolerance-persistence}
	The IBFT protocol guarantees optimal Byzantine-fault-tolerant persistence threshold provided that all possible instances of the \ibfp{}  guarantee optimal Byzantine-fault-tolerant safety threshold.
\end{definition}

\begin{lemma}\label{lem:n-1-geq-quorum}
	$n-1 \geq \quorum{n}$ for any $n \geq 4$.
\end{lemma}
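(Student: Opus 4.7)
The plan is to unwind the definitions and reduce the inequality to an elementary fact about integers. Recall that $\quorum{n} \equiv 2f(n) + 1$ and $f(n) \equiv \lfloor (n-1)/3 \rfloor$, so the claim rewrites as
\[
    n - 1 \;\geq\; 2\left\lfloor \tfrac{n-1}{3} \right\rfloor + 1
    \quad\Longleftrightarrow\quad
    n - 2 \;\geq\; 2\left\lfloor \tfrac{n-1}{3} \right\rfloor.
\]

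First I would drop the floor using the obvious upper bound $\lfloor (n-1)/3 \rfloor \leq (n-1)/3$, yielding the sufficient inequality $2(n-1)/3 + 1 \leq n - 1$. Clearing the denominator, this is equivalent to $2(n-1) + 3 \leq 3(n-1)$, i.e.\ $n - 1 \geq 3$, which holds exactly for $n \geq 4$. Since the intermediate step used a sufficient condition, this chain directly establishes the lemma.

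As a sanity check I would tabulate the boundary case $n = 4$ (where $\quorum{4} = 2\cdot 1 + 1 = 3 = n-1$, tight) and $n = 7$ (where $\quorum{7} = 2\cdot 2 + 1 = 5 \leq 6 = n-1$) to confirm the statement is not vacuous and that the bound is tight at $n = 4$. Alternatively, one could perform a three-case analysis based on $n \bmod 3$ (writing $n-1 = 3q + s$ with $s \in \{0,1,2\}$) and verify the inequality in each case, but this is heavier than necessary. There is no real obstacle here: the only subtlety is remembering to justify the replacement of the floor by a real-valued bound, which preserves the direction of the inequality since $\lfloor \cdot \rfloor \leq (\cdot)$.
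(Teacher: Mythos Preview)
Your proof is correct and follows essentially the same approach as the paper: both unfold $\quorum{n} = 2\lfloor (n-1)/3\rfloor + 1$, bound the floor by $\lfloor x\rfloor \leq x$ to obtain $\quorum{n} \leq (2n+1)/3$, and then verify $(2n+1)/3 \leq n-1$ reduces to $n \geq 4$. Your added sanity checks at $n=4,7$ and the remark on a mod-$3$ case split are extras the paper omits, but the core argument is identical.
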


\begin{lemma}\label{lem:ibfp-not-safe}
	The \ibfp{} does not guarantee Byzantine-fault-tolerant safety when operating in an eventually synchronous network (see Network Model definition in \cref{sec:system-model}).
\end{lemma}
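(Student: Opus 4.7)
The plan is to exhibit an adversarial pre-GST execution in which $f = f(n) = \lfloor (n-1)/3 \rfloor$ Byzantine validators cause two honest nodes to add different finalised blocks at the same height, which together with \Cref{def:t-tolerate-ibfp} witnesses the failure of $f(n)$-Byzantine-fault-tolerant safety for the \ibfp{}. Fix $n = 3f+1$, so $\quorum{n} = 2f+1$; denote the honest validators by $v_1,\dots,v_{2f+1}$ and the Byzantine ones by $b_1,\dots,b_f$. Two quorums of size $2f+1$ inside a set of size $3f+1$ intersect in at least $f+1$ validators, so the attack must force some honest validator to send Commit messages for two different blocks, which inside a single instance is only possible by changing its locked block through the failure branch of $fpgc_4$ at \cref{ln:unlocks} (\cref{ln:unlock-at-initialisation}, which lives in $fpInit$, is the only other line that resets ${lockedBlock}_{h,v}$). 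The construction is therefore built around triggering exactly one firing of this failure branch.

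In round $0$ of some instance $h$, the honest proposer multicasts a Pre-Prepare for a valid $B$ and the Byzantine validators all multicast the matching Prepare. The adversary, still operating before GST, schedules deliveries so that $v_1,\dots,v_{f+1}$ each receive $2f+1$ Prepare messages (from $v_1,\dots,v_{f+1}$ and $b_1,\dots,b_f$) and therefore lock on $B$ via $fpgc_2$ and multicast Commit messages via $fpgc_3$, while $v_{f+2},\dots,v_{2f+1}$ receive fewer than $\quorum{n}$ Prepares and never lock. The Byzantine validators then deliver Commits for $B$ with correctly-sized, valid seals to $v_1,\dots,v_f$, each of whom takes the success branch of $fpgc_4$, locally finalises $B$, advances to instance $h+1$ via $igc_1$, and broadcasts a FINALISED-BLOCK message carrying $B$. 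To $v_{f+1}$, instead, at least one Byzantine node delivers a Commit whose seal has the wrong size; $v_{f+1}$ thus also accumulates $\quorum{n}$ Commits for $B$ in round $0$ but is driven into the failure branch of $fpgc_4$, unlocks at \cref{ln:unlocks}, and multicasts Round-Change for round $1$. Combined with the $f$ Byzantine Round-Change messages for round $1$, this yields the $f+1$ messages that, via $fpgc_5$, pull $v_{f+2},\dots,v_{2f+1}$ into round $1$ as well (the round-timer of $fpgc_6$ would also suffice).

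Round $1$ of instance $h$ is then populated by exactly $v_{f+1},\dots,v_{2f+1}$ (all unlocked) together with $b_1,\dots,b_f$: a set of $2f+1$ validators, every one free to accept any valid block. By choosing, if necessary, a slightly later round $r\geq 1$ whose proposer lies in $\{v_{f+1},\dots,v_{2f+1}\}$, that unlocked proposer can broadcast a Pre-Prepare for a valid $B'\neq B$, after which the adversary replays, symmetrically to round $0$, a delivery of Prepares and valid Commits that drives all $f+1$ unlocked honest validators through $fpgc_2$, $fpgc_3$ and the success branch of $fpgc_4$, producing a valid FINALISED-BLOCK carrying $B'$. Because we are still before GST, the earlier FINALISED-BLOCK carrying $B$ is simply not delivered to $v_{f+2},\dots,v_{2f+1}$ until after they have added $B'$ at height $h$; by then the height-matching guard of $igc_1$ has already failed and the message is silently discarded. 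Hence $v_1$ ends up with $B$ and $v_{2f+1}$ with $B'$ at the same position of their local blockchains, contradicting $f(n)$-Byzantine-fault-tolerant safety. The main delicate point will be verifying that the failure branch of $fpgc_4$ can be made to fire for $v_{f+1}$ before its success branch, i.e.\ that the wrong-sized Commit message is among the first $\quorum{n}$ Commits $v_{f+1}$ processes for round $0$; before GST this is arranged by simply delaying the honest Commits until after the wrong-sized Byzantine one has been queued in $RM_{v_{f+1}}$.
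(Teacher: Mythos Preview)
Your counterexample is correct and exploits the same vulnerability as the paper---the unlock at \cref{ln:unlocks} in the failure branch of $fpgc_4$---but the partition of validators is genuinely different. The paper lets \emph{all} honest validators lock on $B$ in the first round, then sends wrong-sized commit seals to every honest validator except a single distinguished $v$; thus $v$ alone finalises $B$ while all of $W_{honest}$ (size $n_h-1-f(n_h)$) unlock simultaneously and, together with the Byzantine validators, form the quorum that finalises $B'$ in the very next round. Your construction instead lets only $v_1,\dots,v_{f+1}$ lock on $B$, keeps $v_{f+2},\dots,v_{2f+1}$ unlocked by withholding Prepares, has $v_1,\dots,v_f$ finalise $B$, and uses the failure branch only once (for $v_{f+1}$); the second quorum is then $\{v_{f+1},\dots,v_{2f+1}\}\cup\{b_1,\dots,b_f\}$. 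The paper's decomposition is a bit cleaner: it needs no scheduling of Prepares to prevent locking, and since only one validator is excluded it can assume without loss of generality that the round-$r{+}1$ proposer is already in $W$, avoiding your detour through later rounds until a suitable proposer appears. Your version, on the other hand, shows that a \emph{single} invocation of the failure branch already suffices. One small point: your concluding sentence frames the contradiction in terms of local chains diverging, which is persistence; for the lemma as stated you should phrase it as ``two valid finalised blocks for height $h$ containing $B\neq B'$ have been produced,'' matching \cref{def:t-tolerate-ibfp}. Also note that your ``delicate point'' about ordering the bad Commit is moot in your own setup: $v_{f+1}$ sees only $f{+}1$ honest Commits, so any $CM$ of size $2f{+}1$ with distinct senders necessarily includes a Byzantine one.
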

\begin{proof}	
	The proof is by counterexample.
	For this proof we use the inductive assumption of \cref{lem:common-prefix} which states that the local blockchains of all honest nodes are identical until the finalised block with height $h-1$.
	Therefore, since the set of validators for the $h$-th instance of the \ibfp{} is a function of the local blockchain until the block with height $h-1$, this set is identical amongst all honest validators.
	We denote the total number of validators for the $h$-th instance of the \ibfp{} with $n_h$.
	
	Below we provide a possible sequence of events that may occur before GST which leads two honest nodes to create two valid finalised blocks for the same height $h$ that include two different blocks.
	We assume $f(n_{h}) \geq 1$ which implies $n_h \geq 4$.
	Let $v$ be an honest validator, $W$ be a set of size $n_{h}-1$ including all validators except $v$ and $W_{honest}$ be the subset of all honest validators included in $W$.
	\begin{enumerate}
		\item Let $r$ be any round such that no honest validator has \lock[ed] on any block.
		This is always the case for $r=0$.
		The proposer for round $r$, $p_r$, multicasts a \tsmessage{PRE-PREPARE}{h, r, B}{p_r} message to all validators (including itself).
		\item All validators (both Byzantine and honest)  receive the Pre-Prepare message sent by the proposer and multicast  the corresponding \tnsmessage{PREPARE}{h, r, \kec{B}} message to all validators (including themselves).
		\item  \label{itm:ibfp-not-safe-step-3}
		\begin{sloppypar}
			All honest validators ($W_{honest} \cup \{v\}$) receive the $n_{h}$ Prepare messages sent by all validators.
			Since $n_h \geq \quorum{n_h}$, all honest validators \lock[] on block $B$  and multicast a \tsmessage{COMMIT}{h, r, \kec{B}, \cs{B,sender}}{sender} message to all other validators.
		\end{sloppypar}
		\item \label{itm:ibfp-not-safe-step-4} Byzantine validators also send a well-formed \tsmessage{COMMIT}{h, r, \kec{B}, \cs{B,sender}}{sender} message to honest validator $v$.
		
		\item \label{itm:ibfp-not-safe-step-5} However, Byzantine validators send Commit messages containing a commit seal of the wrong size to all honest validators included in set $W_{honest}$.
		\item As result of events \labelcref{itm:ibfp-not-safe-step-3,itm:ibfp-not-safe-step-4,itm:ibfp-not-safe-step-5}, each of the $n_h$ validators (whether Byzantine or honest) has sent a Commit message to all the honest validators, $v$ will receive only well-formed Commit messages, whereas the Commit messages sent by Byzantine validators that the validators in $W_{honest}$ will receive will include a commit seal of the wrong size.
		Assume, without loss of generality, that at least one of the Commit messages with the commit seal of the wrong size is included in the  first \quorum{n_{h}} Commit messages received by each of the honest validators included in set $W_{honest}$.
		Therefore, once each validator included in $W_{honest}$ receives \quorum{n_{h}} Commit messages, since at least one of them has a commit seal of the wrong size, it \unlock[s] and sends a \tnsmessage{ROUND-CHANGE}{h, r'=r+1} message to all validators (including itself).\\
		In contrast, $v$ only receives Commit messages with valid commit seals and therefore it creates a valid finalised block including block $B$ and broadcasts it to all the nodes.
		However, we assume that the finalised block created by $v$ will not be received by any validator before the end of the last event of this sequence of events.
		This is possible as GST has yet to be reached.
		\label{item:create-finalised-for-B}\label{item:ulock-event}
		\item Like validators in $W_{honest}$, all Byzantine validators in $W$ also send a \tnsmessage{ROUND-CHANGE}{h, r'} message to all validators.
		\item All validators in $W_{honest}$ receive all of the Round-Change messages for round $r'$ sent by all validators in $W$, both Byzantine and honest.
		According to Lemma~\ref{lem:n-1-geq-quorum}, since $n_h \geq 4$, the following holds $n_h-1 \geq \quorum{n_h}$. 
		Therefore all validators in $W_{honest}$ start round 1.
		\item Assume, without loss of generality, that the proposer of round $r'=r+1$, $p_{r'}$, is not $v$.
		$p_{r'}$ multicasts a \tsmessage{PRE-PREPARE}{h, r', B'}{p_{r'}} message with $B' \neq B$ to all validators.
		This is possible as $p_{r'}$ is either Byzantine or honest but included in set $W_{honest}$ and has therefore \unlock[ed] at event \ref{item:ulock-event}.
		\item All validators in $W$ receive the Pre-Prepare message sent by $p_{r'}$ and therefore broadcast a \tnsmessage{PREPARE}{h, r', \kec{B'}} message to all validators (including themselves).
		\item \begin{sloppypar}
			All honest validators included in $W_{honest}$ receive the $n_{h}-1$ Prepare messages sent by all validators included in set $W$.
			According to Lemma~\ref{lem:n-1-geq-quorum}, since $n_h \geq 4$, the following holds $n_h-1 \geq \quorum{n_h}$. 
			Therefore all honest validators included in set $W_{honest}$ \lock[] on block $B'$  and multicast a \tsmessage{COMMIT}{h, r', \kec{B'}, \cs{B',sender}}{sender} message to all other validators.	
			Byzantine validators also multicasts the same Commit message.
			All Commit messages sent for this round include a valid commit seal.
		\end{sloppypar} 
		
		\item All honest validators included in set $W_{honest}$ receive the $n_{h}-1$ Commit messages sent by all validators included in set $W$.
		According to Lemma~\ref{lem:n-1-geq-quorum}, since $n_h \geq 4$, the following holds $n_h-1 \geq \quorum{n_h}$. 
		Therefore, all honest validators in set $W_{honest}$ create a valid finalised block for block $B'$.\label{item:create-finalised-for-B-prime}
	\end{enumerate}
	This concludes the proof as two valid finalised blocks including different blocks ($B$ and $B'$) have been created at events \ref{item:create-finalised-for-B} and \ref{item:create-finalised-for-B-prime}.
\end{proof}

\begin{lemma}\label{lem:ibft-violates-codition-i-of-persistence}
	The IBFT protocol does not guarantee condition (i) of $t$-Byzantine-fault-tolerant persistence.
\end{lemma}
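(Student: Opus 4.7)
The plan is to invoke the machinery already assembled in the preceding lemmas so that the statement reduces to a one-line corollary. First I would note that \cref{cor:guarantee-of-persitence-i} establishes the biconditional ``IBFT guarantees condition (i) of $t$-Byzantine-fault-tolerant persistence $\iff$ \ibfp{} guarantees $t$-Byzantine-fault-tolerant safety''. Hence to show that IBFT violates condition (i), it suffices to exhibit some $t$ for which \ibfp{} fails to be $t$-Byzantine-fault-tolerantly safe.

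Second, I would appeal to \cref{lem:ibfp-not-safe}, whose counterexample (carried out before GST with Byzantine validators selectively malforming commit seals) shows that \ibfp{} does not guarantee Byzantine-fault-tolerant safety in the eventually synchronous model. In particular, for the specific scenario constructed in that proof the number of Byzantine validators is at most $f(n_h)$, so safety fails even at the optimal threshold $t = f(n_h)$.

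Third, applying the reverse direction of \cref{cor:guarantee-of-persitence-i} (equivalently, the contrapositive of its forward direction, which is itself supplied by \cref{lem:ibft-protocol-safety-reverse}) to this failure of \ibfp{} safety yields the claim: IBFT cannot guarantee condition (i) of $t$-Byzantine-fault-tolerant persistence. Since there is no genuine obstacle beyond correctly quoting the two lemmas, the only care needed is to match the quantification on $t$ with what \cref{lem:ibfp-not-safe} actually delivers (namely, failure already at the optimal BFT threshold), so that the concluding statement is as strong as possible and aligns with \cref{def:optimal-Byzantine-fault-tolerance-persistence}.
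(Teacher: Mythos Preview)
Your proposal is correct and matches the paper's own proof, which simply states that the lemma is a direct consequence of \cref{cor:guarantee-of-persitence-i} and \cref{lem:ibfp-not-safe}. Your additional remarks about the quantification on $t$ and the connection to \cref{lem:ibft-protocol-safety-reverse} are sound elaborations, but the core argument is identical.
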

\begin{proof}
	Direct consequence of \cref{cor:guarantee-of-persitence-i,lem:ibfp-not-safe}.
\end{proof}

%

\begin{lemma}\label{lem:n->2-f(n)}
	$\forall n \geq 1: n > 2\cdot f(n) $
\end{lemma}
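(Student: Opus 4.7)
The plan is to prove this by a direct arithmetic argument using the explicit formula $f(n) \equiv \lfloor (n-1)/3 \rfloor$ given in the system model.

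First, I would use the standard bound on the floor function, namely $\lfloor (n-1)/3 \rfloor \leq (n-1)/3$, to obtain
\[
2 \cdot f(n) \;=\; 2 \left\lfloor \frac{n-1}{3} \right\rfloor \;\leq\; \frac{2(n-1)}{3}.
\]
Then I would show that $\tfrac{2(n-1)}{3} < n$ for every $n \geq 1$. This inequality is equivalent to $2n - 2 < 3n$, i.e.\ $-2 < n$, which holds trivially for all $n \geq 1$. Chaining the two inequalities yields $2 \cdot f(n) < n$, as required.

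There is no substantive obstacle here; the statement is a routine bound on the floor function. The only care needed is to ensure the outer inequality is strict: the first step gives only a weak inequality, but the second step is strict, so the composition remains strict, which is what the lemma claims.
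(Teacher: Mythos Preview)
Your proof is correct and essentially identical to the paper's: both bound $2f(n)=2\lfloor (n-1)/3\rfloor$ by $2(n-1)/3$ using the floor inequality and then note that $2(n-1)/3<n$ for all $n\geq 1$. Your explicit remark that the strictness comes from the second step (the first being only weak) is a nice clarification the paper leaves implicit.
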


\begin{lemma}\label{lem:ibft-violates-codition-ii-of-persistence}
	The IBFT protocol does not guarantee condition (ii) of $t$-Byzantine-fault-tolerant persistence even if the \ibfp{} guarantees $t$-Byzantine-fault-tolerant safety and $t$-Byzantine-fault-tolerant weak-liveness.
\end{lemma}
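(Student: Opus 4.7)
The plan is to construct an explicit counter-example execution in which the \ibfp{} satisfies both safety and weak-liveness, yet only a single honest validator ever obtains the finalised block for height $h$. The key observation is that, before GST, the network adversary is allowed to drop any subset of messages; once the FINALISED-BLOCK multicast of the producing validator is lost, the protocol contains no mechanism that would cause that message to be re-emitted or pulled by other honest nodes, except via the block-synching sub-protocol, which is only triggered on node start-up or on reception of a finalised block of strictly greater height than the next expected one.

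First I would instantiate $n = 3f+1$ validators and designate one honest validator $v$, letting $W$ denote the remaining honest validators. I would then describe a schedule of round $0$ of instance $h$ in which a Pre-Prepare for a block $B$ is accepted by every validator, and the adversary delivers enough Prepare and Commit messages to $v$ (and only to $v$) so that $v$ meets the $\quorum{n_h}$ thresholds of $fpgc_2$ and $fpgc_4$. Consequently $v$ creates a valid finalised block $\fb$ containing $B$ and multicasts \tnsmessage{FINALISED-BLOCK}{\fb}; since we are before GST, the adversary drops every copy of that multicast. By the semantics of $igc_1$, $v$ then adds $\fb$ to ${chain}_v[h]$, advances to $h_v = h+1$, and starts instance $h+1$.

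Next I would argue that no other honest node ever obtains $\fb$. By \Cref{lem:if-an-honest-nodes-start-instance-h-then-it-has-all-previous-blocks}, no validator in $W$ can start instance $h+1$ without first adding a finalised block for height $h$; hence on the honest side only $v$ participates in instance $h+1$, and with at most $f$ Byzantines the $\quorum{n_{h+1}}$ threshold is unreachable, so no finalised block of height $\geq h+1$ is ever broadcast. In particular the trigger of the block-synching sub-protocol is never activated in $W$ after start-up, so the validators in $W$ never pull $\fb$ from $v$. By choosing the deliveries to $W$ so that the number of valid Prepare and Commit messages they accumulate for height $h$ remains strictly below $\quorum{n_h}$, the validators in $W$ can never complete instance $h$ either, so (also consistently with the safety hypothesis) they produce no finalised block for height $h$. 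Picking any transaction $T$ included in $B$ at position $i$ then yields a violation of condition (ii): $T$ sits at position $(h,i)$ of $v$'s local ledger but at no position of the ledger of any $w \in W$.

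The main obstacle I anticipate is checking that the message-loss pattern required to keep the $W$-validators below quorum is consistent with $v$ reaching quorum. This is delicate because the optimisation that counts Commit messages toward the Prepare quorum in $fpgc_2$, together with the guarded command $igc_2$ (which can cause $v$ to re-emit Commits for instance $h$ even after advancing), may silently deliver extra useful messages to $W$, even after GST. I would handle this by making $v$'s Commits to every $w \in W$ permanently lost (admissible since they are sent before GST) and by stipulating that no well-formed Pre-Prepare for instance $h$ is ever received by $v$ after round $0$, so that $igc_2$ is never activated in $v$ for instance $h$; the upper bound on the useful messages that any $w \in W$ can accumulate for height $h$ then remains at $1 + f < \quorum{n_h}$, as required.
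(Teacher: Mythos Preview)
Your route is genuinely different from the paper's. The paper partitions the validators the other way round: it lets the \emph{large} set $W$ of size $n_h-t$ (containing all $t$ Byzantines) produce the finalised block, obtaining this directly from the weak-liveness hypothesis as a black box, while a set $V$ of $t$ honest validators is cut off. After $W$ advances to instance $h{+}1$ the $t$ Byzantines in $W$ halt, so that for instance $h{+}1$ exactly $2t>t$ validators are absent and for instance $h$ exactly $n_h-t>t$ are absent; in both cases the paper simply notes that the weak-liveness hypothesis no longer applies and concludes that no further finalised block is guaranteed. Your construction instead pushes a \emph{single} validator $v$ through the concrete $fpgc_2/fpgc_4$ thresholds and argues deadlock by direct quorum counting. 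The paper's argument is shorter and uses the hypotheses of the lemma in an essential way; yours shows the failure already in the unmodified protocol without invoking any hypothetical liveness strength.

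Two points in your sketch need repair. First, the stipulation that ``no well-formed Pre-Prepare for instance $h$ is ever received by $v$ after round $0$'' cannot be enforced after GST: the $W$-proposers keep sending Pre-Prepares for instance $h$ in later rounds, and after GST these are delivered to $v$ within $\Delta$. The correct argument (which the paper makes, with the roles reversed) is that the $W$-validators are unlocked and therefore call \CreateNewProposedBlock{}, while the Byzantines are under your control; hence none of these later Pre-Prepares carries the block $B$ already stored in $chain_v[h]$, so the guard of $igc_2$ is never satisfied in $v$. Second, your bound ``$1+f$'' on the messages any $w\in W$ can accumulate is off: with $|W|=2f$ honest validators exchanging messages after GST, each $w$ receives up to $2f$ Prepare messages per round. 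The conclusion survives because $2f<\quorum{n_h}=2f{+}1$, but the bound as stated is not the right one.
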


\begin{proof}
	The proof is by counterexample. 
	Indeed, we assume that the \ibfp{} guarantees both $t$-Byzantine-fault-tolerant safety and $t$-Byzantine-fault-tolerant weak-liveness and show a sequence of events that, if occurring before GST, lead the IBFT protocol to a state where condition (ii) of the persistence property is violated.
	We denote the number of validators for the $h$-th instance of the \ibfp{} with $n_h$.
	Considering that no Byzantine-fault-tolerant safety or weak-liveness can be ensured if $t=0$, we assume $t \geq 1$. 
	Since $f(n_h)$ is the upper limit for $t$, this implies $f(n_h) \geq 1$,  which in turns implies $n_h \geq 4$.
	\Cref{lem:n->2-f(n)} proves that $n_h > 2\cdot f(n_h)$ which implies that $n_h > 2 \cdot t$.
	We also assume that all validators are running the $h$-th instance of the \ibfp{}.
	Let $ V $ be a set of $t$ honest validators and $W$ be the complement of $V$ for instance $h$ of the \ibfp{}.
	$W$ is therefore of size $n_h-t$ and may contain up to $t$ Byzantine validators.
	\begin{enumerate}
		\item Validators in set $W$ produce a finalised block, say \fb{}, but no messages sent by validators in $V$ while in instance $h$, including the Finalised-Block message for \fb{}, will ever be received by validators in set $V$.
		This is possible as (i) GST has yet to be reached and (ii) we assume that the \ibfp{} guarantees $t$-Byzantine-fault-tolerant weak-liveness which implies that a set of $n_h -t$ validators can produce a finalised block without needing to communicate with any validators outside that set.
		Consequently, validators in set $W$ terminate their $h$-th instance of the \ibfp{} and start the $h+1$-th instance, while all validators in set $V$ continue to run the $h$-th instance of the \ibfp{}.
		Without loss of generality, assume that finalised block \fb{} does not change the validator set, i.e. the validator set for instance $h+1$ matches the validator set of instance $h$ which implies that $n_{h+1} = n_h$.
		\item $t$ of the validators in set $W$ stop communicating and never restart.
		This is possible as we assume that up to $t$ of the validators in set $W$ may be Byzantine and therefore act arbitrarily.
		As consequence of this, out of the $n_h$ validators in the validator set for instance $h+1$, $2 \cdot t$ of them do not participate in this instance as $t$ validators are still running instance $h$ and $t$ Byzantine validators have stopped and will never restart.
		Since $2\cdot t>t$, weak-liveness of the \ibfp{} for instance $h+1$ does not hold and therefore there is no guarantee that any new finalised block in instance $h+1$ is produced under this condition.
		
		Also, the remaining $n_h -2\cdot t$ honest validators of set $W$, running the $h+1$-th instance of the \ibfp{} send Commit messages for instance $h$ only if they receive a Pre-Prepare message for instance $h$ with a proposed block matching the block included in \fb{}.
		Since none of the messages transmitted by validators in set $W$ while they were running instance $h$ has been received by validators in set $V$, there is no guarantee that a validator in set $V$ will eventually send a Pre-Prepare message with a proposed block matching the block included in \fb{} and therefore there is no guarantee that any of the honest validators in set $W$ will ever again participate in instance $h$ by sending Commit messages.
		
		Since $n_h-t$ of the $n_h$ validators do not actively participate in instance $h$,  and $n_h>2\cdot t \rightarrow n_h -t>t$ implies that weak-liveness of the \ibfp{} does not hold for instance $h$ either, there is no guarantee that any finalised block in instance $h$  will ever be produced by validators in set $V$.
	\end{enumerate}
	In the system state resulting from the last event of the sequence of events presented above there is no guarantee that any new finalised block for height $h$ will ever be broadcast. Considering that we assume that finalised block \fb{} produced by validators in set $W$ will never be received by validators in set $V$, there is no guarantee that any of the transactions included in the block included in the finalised block \fb{} will ever be added to the transaction ledger of any validator in set $V$.
\end{proof}


\begin{theorem}\label{lem:ibft-not-safe}
	The IBFT finalisation protocol is not Byzantine-fault-tolerant when operating in an eventually synchronous network.
\end{theorem}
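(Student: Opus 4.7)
The plan is to observe that this theorem is essentially a direct corollary of the two immediately preceding lemmas, so the proof should be short and simply package them together against the definition of robustness.

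First I would unpack what ``Byzantine-fault-tolerant'' means in this setting. By \cref{def:robustnes}, the IBFT protocol is $t$-Byzantine-fault-tolerant only if it implements a robust distributed permissioned transaction ledger, which in particular requires the persistence property to hold. Persistence itself is the conjunction of condition (i) (uniqueness of the transaction at any ledger position across honest nodes) and condition (ii) (eventual inclusion of any added transaction in every honest node's ledger). Thus the failure of either condition is sufficient to refute $t$-Byzantine-fault-tolerance for any $t \geq 1$.

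Next I would invoke \cref{lem:ibft-violates-codition-i-of-persistence}, which already establishes that condition (i) fails for the IBFT protocol in the eventually synchronous model; this alone is enough to conclude the theorem. For completeness and to strengthen the negative result, I would additionally cite \cref{lem:ibft-violates-codition-ii-of-persistence}, noting that condition (ii) fails as well, even under the optimistic assumption that the underlying \ibfp{} guarantees $t$-Byzantine-fault-tolerant safety and weak-liveness. Combining these two lemmas shows that both components of persistence are violated, so no value of $t \geq 1$ can be tolerated.

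There is no substantial obstacle here: the heavy lifting was done in constructing the counterexamples inside \cref{lem:ibfp-not-safe} and \cref{lem:ibft-violates-codition-ii-of-persistence}. The only care needed is to make explicit that the Network Model used in those lemmas is exactly the eventually synchronous network referenced in the theorem statement, so that the pre-GST sequences of events they rely on are admissible; this is immediate from the network model in \cref{sec:system-model}. With that remark, the theorem follows as a one-line conclusion from \cref{lem:ibft-violates-codition-i-of-persistence} (and, optionally, \cref{lem:ibft-violates-codition-ii-of-persistence}).
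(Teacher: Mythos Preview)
Your proposal is correct and matches the paper's own proof exactly: the paper's proof is the single line ``Direct consequence of \cref{lem:ibft-violates-codition-i-of-persistence,lem:ibft-violates-codition-ii-of-persistence}.'' Your additional unpacking of \cref{def:robustnes} and the remark about the network model are fine elaborations but not strictly needed.
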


\begin{proof}
	Direct consequence of \cref{lem:ibft-violates-codition-i-of-persistence,lem:ibft-violates-codition-ii-of-persistence}.
\end{proof}

\section{Modification \ibftmonep{}: Achieve optimal Byzantine-fault-tolerance persistence}\label{sec:modification-ibftmonep-achieve-optimal-byzantine-fault-tolerance-persistence}

\newcommand{\changed}[1]{\colorbox{yellow}{#1}}
\newcommand{\deleted}[1]{\changed{\sout{#1}}}
\begin{algorithm}[t!]
	\setcounter{AlgoLine}{0}
	\Functions{ 
		\colorbox{yellow}{\lFn{\mbox{\hspace{-0.4em}}\hl{$ \quorumoptimal(n) $}}{\hl{$ \left \lceil \frac{2n}{3}\right \rceil$}}}\;
		\Fn{${isValidFinalisationProof}(\fb,v)$}{
			\SizeOf{$ \{
				cs \in \ExtractFinalisationProof{\fb} \mid \newline$
				\mbox{\hspace{3em}}$\EthAddressRecover{\ensuremath{\kec{\ExtractBlock{\fb}},cs }}\in \avibfp[v]\}$\newline
				\mbox{\hspace{1.3em}}} $\geq \colorbox{yellow}{\ensuremath{\quorumoptimal(\nmacro[v])}}$
		}	
		\Fn{${isValidFinalisedBlock}(\fb, v)$}{
			${isValidFinalisationProof}(\fb,v) \:\land$\newline
			\mbox{\hspace{-2em}}\Valid{${\ExtractBlock{\fb},}$
				$\ExtractBlock{\ensuremath{{chain}_v[h_v-1]}}$}
			
	}}
	\Init{
		\Fn{$ ibftInit(v)  $}
		{
			${chain}_v[0] \gets \textit{genesis block}$\;
			$h_v \gets 1 $\;
			\deleted{$ oldPrePrepareMessagesAlreadyProcessed_v \gets \{\} $}\;
			\If{$ v \in \avibfp[v] $}
			{
				\Start $\textit{\ibfpmone{}}(h_v,v)$
			}
		}
	}
	\UponRules{	 
	\Gc{%
		$igc_1(v) \equiv \newline \big( {\ensuremath{\exists \tnsmessage{FINALISED-BLOCK}{\fb} \in {RM}_v }} :$\newline
		\mbox{\hspace{1em}}\mbox{$\BlockHeight{\ExtractBlock{\fb}} = h_v \:\land$}\newline
		\mbox{\hspace{1em}}${isValidFinalisedBlock}(\fb,v)\newline\big)$
	} {	
		\Indp\Indp        
		${chain}_v[h_v] \gets \fb$\;
		\If{$ v \in \avibfp[v] $}{
			\Stop $ \textit{\ibfpmone{}}(h_v,v)$
		}
		$h_v \gets h_v + 1$\;
		\If{$ v \in \avibfp[v] $}
		{
			\Start $\textit{\ibfpmone{}}(h_v,v)$\
		}
	}
	\deleted{$igc_2(v) \equiv \ldots$}
	}
	\caption{\ibftmone{} protocol for node $v$}
	\label{algo:ibftm1-protocol}
\end{algorithm}

\begin{algorithm}[t!]
	\caption{Sequential steps of the $ \textit{\ibfpmone{}}(h,v) $ process (initialisation).}
	\label{algo:ibftm1-block-finalisation-protocol-start-and-init}
	\Expansions{
		\lExpand{\nhmacro[]}{\nmacro[]}
		\lExpand{\avhmacro[]}{\avibfp[]}
		\lExpand{\hproposer[] } {\proposermacro[]}}
	\Functions{
		\lFn{$ {RoundTimerTimeout}(r)$}{$\TimeoutForRoundZero \cdot 2^r$}	
		\lFn{$ \cs{B,v} $}{\Sign{$ \kec{B},v $}}
	}
	\Init{
		\Fn{$ fpInit(h,v) $}
		{
		${lockedBlock}_{h,v} \gets \bot $\;
		$ StartNewRound(0,h,v) $\;
		}
	}
	\Procedures{
		\Proc{$moveToNewRound(r, h, v)$\label{ln:move-to-new-round}}
		{
			$r_{h,v} \gets r$\;
			${roundAlreadyStarted}_{h,v} \gets {false}$\;
			${acceptedBlock}_{h,v} \gets \bot$\label{ln:ibft1:set-accept-pre-prepare-1}\;
		}
		\Proc{$ StartNewRound(r,h,v) $}{
			$moveToNewRound(r, h, v)$\;
			${roundAlreadyStarted}_{h,v} \gets {true}$\;
			${commitSent}_{h,v} \gets {false}$\;
			${finalisedBlockSent}_{h,v} \gets false$\;
			${roundTimerExpiration}_{h,v}[r_{h,v}] \gets now(v) + RoundTimerTimeout(\ensuremath{r_{h,v}})$\;
			\If{$v = \hproposer[] $}{
				\eIf{${lockedBlock}_{h,v} \neq \bot$}{
					$B \gets {lockedBlock}_{h,v} $\;
				}
				{
					$B \gets \CreateNewProposedBlock{\ensuremath{h,v}} $
				}
				\Multicastv{\tsmessage{PRE-PREPARE}{h, r_{h,v}, B}{v}} \To \avhmacro[]\;
			}
		}
		\Proc{$moveToNewRoundAndSendRoundChange(r, h, v)$}
		{
			$moveToNewRound(r, h, v)$\;
			\Multicastv{\tsmessage{ROUND-CHANGE}{h, r_{h,v}}{v}} \To \avhmacro[]\;
		}
	}
\end{algorithm}

In this section we describe a modification to the IBFT protocol to achieve optimal Byzantine-fault-tolerance persistence.
We denote the protocol resulting from this modification as \ibftmonep{} and the block finalisation protocol of \ibftmone{} as \ibfpmone{}.

\cref{algo:ibftm1-protocol} details the sequential steps for the \ibftmonep{} while \crefrange{algo:ibftm1-block-finalisation-protocol-pre-prepare-and-prepare}{algo:ibftm1-block-finalisation-protocol-commit-and-round-change} detail the sequential steps for \ibfpmone{} where yellow background colour is used to highlight the pieces of pseudocode that differ from the original protocol and strikethrough lines are used to indicate pieces of pseudocode that are no longer present in the modified protocol.

\begin{algorithm}[t!]
	
	\caption{Sequential steps of the $ \textit{\ibfpmone{}}(h,v) $ process (PRE-PREPARE and PREPARE).}
	\label{algo:ibftm1-block-finalisation-protocol-pre-prepare-and-prepare}
	
	\UponRules{
		\Gc{$fpgc_1(h,v) \equiv \newline
			\big( \hspace{0.4em}{acceptedBlock}_{h,v} = \bot \:\land\newline
			\mbox{\hspace{0.9em}}\big( \mbox{\ensuremath{\exists\tsmessage{PRE-PREPARE}{h, r_{h,v}, B}{sv } \in {RM}_v :  sv = \hproposer[] \big) }} \newline\big)
			$
		}{
			\Indp\Indp
			\eIf{$ ({lockedBlock}_{h,v} = \bot \lor {lockedBlock}_{h,v} = B) \land \Valid{\ensuremath{B,{chain_v}[h-1]}} $}
			{
				${acceptedBlock}_{h,v} \gets B$\;\label{ln:ibft1:set-accept-pre-prepare-2}
				\Multicastv{\tsmessage{PREPARE}{h, r_{h,v}, \kec{B}}{v}} \To \avhmacro[]\;	
			}
			{
				$moveToNewRoundAndSendRoundChange(r_{h,v}+1, h, v)$\;
			}
		}
		
		\Gc{%
			$fpgc_2(h,v) \equiv \newline
			\big( \hspace{0.4em} {commitSent}_{h,v} = {false} \: \land$\newline 
			\mbox{\hspace{1em}}${{acceptedBlock}_{h,v}}  \neq \bot \: \land$\newline
			\mbox{\hspace{1em}}\SizeOf{$ \{ 
				\tsmessage{PREPARE}{h, r_{h,v}, \kec{{acceptedBlock}_{h,v}}}{sv} \in {RM}_v  \mid \newline
				\mbox{\hspace{28em}}sv \in \avhmacro[]\}$ \newline				    
				\mbox{\hspace{4.3em}}$  $} $ \geq \changed{\ensuremath{\quorumoptimal(\nhmacro[])}}$  \newline $ \big) $
		}{
			\Indp\Indp${lockedBlock}_{h,v} \gets {acceptedBlock}_{h,v}$\;\label{ln:ibftm1:lock1}
		}		
		\Gc{$fpgc_3(v) \equiv \newline
			\bigg( {commitSent}_{h,v} = {false} \:\land $ \newline
			\mbox{\hspace{0.9em}}$ {lockedBlock}_{h,v} \neq \bot \: \land $ \newline
			\mbox{\hspace{0.8em}}$ \ensuremath{\Big(\mbox{\hspace{0.0em}}\big( \exists \tsmessage{PRE-PREPARE}{h, r_{h,v}, {lockedBlock}_{h,v} }{sv} \in {RM}_v  : \mbox{\ensuremath{sv = \hproposer[] \big) \:\lor}}}\newline			
			\mbox{\hspace{0.8em}}\mbox{\hspace{0.6em}}\ensuremath{\big(\exists\tsmessage{PREPARE}{h, r_{h,v}, \kec{{lockedBlock}_{h,v} }}{sv}  \in {RM}_v  : sv \in \avhmacro[]  \big)\Big) } $\newline$ \bigg) $
		}{
			\Indp\Indp
			\Multicastv{}\newline
			{\tsmessage{COMMIT}{h, r_{h,v}, \kec{{lockedBlock}_{h,v}},\cs{{lockedBlock}_{h,v},v}}{v}} \To \avhmacro[]\;\label{ln:ibftm1:send-commit}
			${commitSent}_{h,v} \gets {true}$\;
		}
	}
\end{algorithm}

The summary of list of changes to be applied to obtain the modified protocol follows:
\begin{enumerate}[label=(PM-\arabic{*}),leftmargin=*]
	\item \label{itm:pm-devp2p-mod} The protocol used by IBFT to transmit finalised blocks over the \devptwop{} protocol is modified so that nodes query their peers about the availability of new finalised blocks on a regular basis;
	\item \label{itm:pm-check-signature-for-commit-messages} Only consider Commit messages that have a commit seal signed by the sender of the Commit message.
	Since we assume a digital signature scheme that ensures signature uniqueness, verifying the signature implies verifying that the size of the commit seal is correct.
	\item \label{itm:pm-move-block-verification} Embed the verification that the block included in the accepted Pre-Prepare message allows adding the finalisation proof to it in the series of verifications performed by the \Valid{$\cdot$} function and, therefore, remove the \WellFormedToAddFinalProof{$B$} check from the list of verifications listed at \cref{ln:wellFormedBlock} of \cref{algo:ibft-block-finalisation-protocol-commit-and-round-change}.
	\item \label{itm:pm-remove-failure-in-creating-the-finalisation-proof}  Remove the \textbf{Failure in creating the finalisation proof} condition from the list of conditions that trigger sending Round-Change messages. This corresponds to the condition $  \big( \forall \tsmessage{*}{*, *, *,cs}{*} \in CM:  cs\text{ is of the correct size} \big) $ at \cref{ln:wellFormedBlock} and block \crefrange{ln:fail-finalisation-proof-start}{ln:fail-finalisation-proof-end} of \cref{algo:ibft-block-finalisation-protocol-commit-and-round-change}.
	This means that this condition is never checked and therefore no validator can \unlock[] as consequence of this condition being true.
	\item \label{itm:pm-remove-send-commit-for-old-blocks}Remove the guarded command at \cref{ln:send-commit-for-old-blocks} as it does not provide any value to the protocol.
	While we do not provide a formal proof of such a statement in this paper, its correctness can be informally proved by noticing that (i) as shown in \cref{sec:modification-ibftmonep-achieve-optimal-byzantine-fault-tolerance-persistence}, the modified protocol incorporating the modifications listed here guarantees optimal Byzantine-fault-tolerance persistence and (ii) none of the proposed modifications listed in \cref{sec:solution-to-the-liveness-issue} for addressing the liveness issue of the IBFT protocol involves sending Commit messages in response to Pre-Prepare messages for old blocks, which is essentially the behaviour that the guarded command at \cref{ln:send-commit-for-old-blocks} specifies.	      
	\item \label{itm:pm-replace-quorum-formula} Replace the \quorum{n} function of  the \ibfp{} with the $ \quorumoptimal(n) $ function defined as $ \left \lceil \frac{2n}{3} \right \rceil $.
\end{enumerate}
\begin{algorithm}[t!]
	\caption{Sequential steps of the $ \textit{\ibfpmone{}}(h,v) $ process (COMMIT and ROUND-CHANGE).}
	\label{algo:ibftm1-block-finalisation-protocol-commit-and-round-change}
	
	\UponRules{
		\Gc{$fpgc_4(h,v) \equiv \newline
			\bigg(\hspace{0.2em}{acceptedBlock}_{h,v} \neq \bot \land \newline
			\mbox{\hspace{1em}}\big(\exists \mathit{CM} \in \mathcal{P}({RM}_v) : $\newline
			\mbox{\hspace{2.5em}}\mbox{$ \SizeOf{\ensuremath{CM}} \geq \changed{\ensuremath{\quorumoptimal({\nhmacro[]})}} \:\land$}\vspace{-0.2em}\newline	
			\mbox{\hspace{2.5em}}$ (\forall cm \in CM: cm = \tsmessage{COMMIT}{h, r_{h,v}, \kec{{acceptedBlock}_{h,v}},cs}{sv} \land \vspace{-0.2em}\newline \mbox{\hspace{14.5em}}\changed{\ensuremath{\EthAddressRecover{\ensuremath{\kec{{acceptedBlock}_{h,v}},cs }} = sv  )}}  \: \land $ \newline
			\mbox{\hspace{2.5em}}$ (\forall \smessage{cm}{sv},\smessage{cm'}{sv'} \in CM:  cm \neq cm' \implies sv \neq sv') \newline
			\mbox{\hspace{1.2em}}\big) \:\land $
			\newline
			\mbox{\hspace{1.0em}}$ {finalisedBlockSent}_{h,v} = {false}$ \newline$ \bigg) $
			\label{ln:send-finalise-start}\label{ln:new-send-finalisation-proof-upon-condition}
		}
		{
			\Indp\Indp
				${finalisedBlockSent}_{h,v} \gets {true}$\;
				\EmptyIf{\hspace{-0.3em}\deleted{\textbf{\textup{if}}  $ \big( \forall \tsmessage{*}{*, *, *,cs}{*} \in CM:  cs\text{ is of the correct size} \big) \land$} \vspace{-0.2em}\newline
					\mbox{\hspace{-0.6em}}\deleted{\WellFormedToAddFinalProof{$ {acceptedBlock}_{h,v} $ } \textbf{\textup{then}}}
				}
				{
					${lockedBlock}_{h,v} \gets {acceptedBlock}_{h,v}$\;
					\Let ${finalisationProof} \equiv  \{ cs \mid $
					\mbox{$\tsmessage{*}{*, *, *,cs}{sv} \in CM \} $} \;
					${\fb} \gets \langle {acceptedBlock}_{h,v}, {finalisationProof}\rangle $\;
					\Multicast{$ \tnsmessage{FINALISED-BLOCK}{\fb} $} \To all nodes\;					
				}
				\eEmptyIf{\deleted{\textbf{\textup{else}}}}
				{
					\deleted{$moveToNewRoundAndSendRoundChange(r_{h,v}+1, h, v)$}\vspace{-1.5em}\;
					\deleted{${lockedBlock}_{h,v} \gets \bot$}\;			
				}				
			
			
		}
		
		\Gc{$fpgc_5(h,v) \equiv \newline
			\big( \exists r_{rc}: $
			\mbox{\hspace{1em}}\mbox{\SizeOf{$ \{ \tsmessage{ROUND-CHANGE}{h, r_{rc}}{sv} \in {RM}_v \mid sv \in \avhmacro[] \} $}
				$\geq f(\nhmacro[])+1 \; \land$} \newline$  $
			\mbox{\hspace{1em}}$r_{rc} > r_{h,v}$\newline$ \big) $} {
			
			\Indp\Indp$moveToNewRoundAndSendRoundChange(r_{rc}, h, v)$\;
		}
		\Gc{$ fpgc_6(h,v) \equiv \newline $
			$ \big( now(v) \geq {roundTimerExpiration}_{h,v}[r_{h,v}] \big)$} {
			\Indp\Indp$moveToNewRoundAndSendRoundChange(r_{h,v}+1, h, v)$\;
		}
		\Gc{$ fpgc_7(h,v) \equiv \newline
			\Big( \exists r_{rc}: $\newline 
			\mbox{\hspace{1em}}\SizeOf{$ \{ \tsmessage{ROUND-CHANGE}{h, r_{rc}}{sv} \in {RM}_v \mid sv \in \avhmacro[] $\newline
				\mbox{\hspace{4.2em}}}
			$\geq \changed{\ensuremath{\quorumoptimal({\nhmacro[]})}} \:\land$ \newline  
			\mbox{\hspace{1em}}$\big( r_{rc} > {r}_{h,v} \lor 
			(r_{rc} = {r}_{h,v} \land 
			{roundAlreadyStarted}_{h,v} = {false})\big)$\newline$ \Big) $
			}{
			\Indp\Indp$StartNewRound(r_{rc},h,v)$\;
		}
	}
\end{algorithm}

Modification \ref{itm:pm-devp2p-mod} is applied to ensure condition (ii) of the Persistence property (see \cref{def:robustnes}) provided that the \ibfp{} guarantees $t$-Byzantine-fault-tolerant weak-liveness. Modifications \ref{itm:pm-check-signature-for-commit-messages}, \ref{itm:pm-move-block-verification} and \ref{itm:pm-remove-failure-in-creating-the-finalisation-proof} are applied to achieve Byzantine-fault-tolerant persistence. 
Modification \ref{itm:pm-remove-send-commit-for-old-blocks} removes unnecessary protocol behaviours.
Modification \ref{itm:pm-replace-quorum-formula} is applied to achieve optimal Byzantine-fault-tolerance persistence.


\subsection{Persistence Analysis of the  \ibftmonep{}}\label{sec:safety-analysis-of-the-first-solution}

In this section we analyse the persistence property of the \ibftmonep{} and show that the \ibftmonep{} guarantees optimal Byzantine-fault-tolerant persistence threshold.

We denote the total number of validators for the $h$-th instance of the \ibfpmone{} with $n_h$.
This definition is however meaningful only if all honest nodes agree on the same set of validators, thus we will add this assumption to all the Lemmas where $ n_h $ is used.

Any proof omitted in this section can be found in \ref{ap:proofs-for-safety-analysis-of-the-first-solution}.

\begin{lemma}\label{lem:invariant-for-locked-block}
	For all $ gc \in \{ fpgc_1(h,v),\ldots,fpgc_7(h,v)\}$ executed after the first execution of either $ fpgc_2(h,v) $ or $ fpgc_4(h,v) $ the following invariant, expressed as a Hoare triple, holds for some $ B \neq \bot $
	\begin{equation*}
		\{lockedBlock_{h,v}=B\}gc\{lockedBlock_{h,v}=B\}
	\end{equation*}
\end{lemma}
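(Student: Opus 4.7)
The plan is to prove a stronger conjoint invariant on the pair $(lockedBlock_{h,v}, acceptedBlock_{h,v})$ by structural induction on the sequence of executed guarded commands, noting that a post-condition of the form $lockedBlock_{h,v}=B$ is preserved exactly when no command can assign $lockedBlock_{h,v}$ a value different from $B$.

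First I would enumerate every line in \crefrange{algo:ibftm1-block-finalisation-protocol-start-and-init}{algo:ibftm1-block-finalisation-protocol-commit-and-round-change} that writes to $lockedBlock_{h,v}$. Apart from the initialisation in $fpInit(h,v)$ (which is executed before the first lock and is therefore outside the scope of the invariant), the only writes are in $fpgc_2(h,v)$ and in the positive branch of $fpgc_4(h,v)$. Crucially, modification \ref{itm:pm-remove-failure-in-creating-the-finalisation-proof} has removed the former $else$ branch of $fpgc_4(h,v)$ that reset $lockedBlock_{h,v}$ to $\bot$, so no guarded command can ever \unlock{}. Moreover, in both remaining writes the right-hand side is $acceptedBlock_{h,v}$, so the invariant on $lockedBlock_{h,v}$ reduces to controlling the value of $acceptedBlock_{h,v}$ at the moments when these two commands fire.

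Next I would strengthen the induction hypothesis to the conjunction
\begin{equation*}
  lockedBlock_{h,v}=B \;\land\; acceptedBlock_{h,v}\in\{\bot,B\},
\end{equation*}
and check preservation for each of $fpgc_1,\ldots,fpgc_7$. The writes to $acceptedBlock_{h,v}$ are in $fpgc_1(h,v)$ (assignment $acceptedBlock_{h,v}\gets B'$ at \cref{ln:ibft1:set-accept-pre-prepare-2}) and inside $moveToNewRound$ (assignment to $\bot$ at \cref{ln:ibft1:set-accept-pre-prepare-1}, invoked from $fpgc_1, fpgc_5, fpgc_6, fpgc_7$). The $\bot$ assignment trivially preserves the disjunct $acceptedBlock_{h,v}\in\{\bot,B\}$. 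The assignment in $fpgc_1$ is guarded by $(lockedBlock_{h,v}=\bot \lor lockedBlock_{h,v}=B')\land \Valid{B',chain_v[h-1]}$; combined with the hypothesis $lockedBlock_{h,v}=B\neq\bot$, this forces $B'=B$, so the assignment preserves the conjunction. Finally, when either $fpgc_2(h,v)$ or $fpgc_4(h,v)$ fires, its guard requires $acceptedBlock_{h,v}\neq\bot$, which by the companion invariant means $acceptedBlock_{h,v}=B$; hence the assignment $lockedBlock_{h,v}\gets acceptedBlock_{h,v}$ is in fact $lockedBlock_{h,v}\gets B$, and the invariant is preserved.

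The main obstacle I foresee is being exhaustive about the inductive case: one must verify that \emph{every} control path through each $fpgc_i$ (including the branches that invoke $moveToNewRoundAndSendRoundChange$) leaves the conjunction intact, and that the companion invariant on $acceptedBlock_{h,v}$ genuinely holds at the \emph{entry} of $fpgc_2$ and $fpgc_4$, not merely at some earlier point. Once the conjunction is carried as the inductive hypothesis through the finite case analysis on the seven guarded commands, the Hoare triple stated in the lemma follows immediately, taking $B$ to be the value assigned at the first successful execution of $fpgc_2(h,v)$ or $fpgc_4(h,v)$ (which, by the guards of those commands, is non-$\bot$).
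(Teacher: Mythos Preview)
Your proof is correct and, in fact, more careful than the paper's. Both proofs start the same way: they observe that in the \ibftmone{} pseudocode the only assignments to $lockedBlock_{h,v}$ are in $fpgc_2$ and the (now unconditional) body of $fpgc_4$, and that in both cases the right-hand side is $acceptedBlock_{h,v}$, so preservation of the lock reduces to controlling $acceptedBlock_{h,v}$ at the moments those two commands fire. The paper then argues that $acceptedBlock_{h,v}$ ``can only be changed by the execution of $fpgc_1(h,v)$'' and that $fpgc_1$ is thereafter disabled because its guard requires $acceptedBlock_{h,v}=\bot$. That shortcut glosses over the fact that $moveToNewRound$ (reachable from $fpgc_1$'s \textbf{else} branch and from $fpgc_5,fpgc_6,fpgc_7$) resets $acceptedBlock_{h,v}$ to $\bot$, which re-enables $fpgc_1$ in later rounds. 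Your strengthened invariant $lockedBlock_{h,v}=B \land acceptedBlock_{h,v}\in\{\bot,B\}$ is exactly what is needed to close that gap: the $\bot$ disjunct absorbs all the round-change resets, and the \textbf{if}-condition inside $fpgc_1$ (not its outer guard) forces any newly accepted block to equal $B$ once $lockedBlock_{h,v}=B\neq\bot$. So your approach buys a genuinely inductive argument that survives round changes, whereas the paper's argument, read literally, only covers the remainder of the round in which the first lock occurs.
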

\begin{proof} 	We will show that the Lemma holds for $ fpgc_2(h,v) $.
	Exactly the same reasoning can be applied to prove that it also holds for $ fpgc_4(h,v) $.
	
	$ fpgc_2(h,v) $ can only be executed if $ acceptedBlock_{h,v} \neq \bot $.
	Since, $ fpgc_2(h,v)  $ assigns $acceptedBlock_{h,v}  $ to $ lockedBlock_{h,v} $, then $ lockedBlock_{h,v} = acceptedBlock_{h,v} = B \neq \bot $ for some $ B\neq \bot $ after the execution of $ fpgc_2(h,v) $.
	The only guarded commands that can change the value of $ lockedBlock_{h,v} $ are $ fpgc_2(h,v) $ and $ fpgc_4(h,v) $.
	The execution of both guarded commands preserves the invariant stated in the Lemma if $ acceptedBlock_{h,v} = B $ is true before their execution.
	$ acceptedBlock_{h,v} = B $ can only be changed by  the execution of $ fpgc_1(h,v) $ which is however disabled after the first execution of $ fpgc_2(h,v) $ as the guard of $ fpgc_1(h,v) $ requires $  acceptedBlock_{h,v} = \bot $.	
\end{proof}

\begin{lemma}\label{lem:no-change-lock-in-future-block-events}
	If an honest validator $v$ \lock[s] on block $B$ while executing the $ h $-th instance of the \ibfp{}, then $v$ will be \lock[ed] on block $B$ for rest of the execution of the $ h $-th instance of the \ibfp{}.
\end{lemma}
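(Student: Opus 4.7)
The plan is to derive this lemma as a direct corollary of \cref{lem:invariant-for-locked-block}. First, I would pin down exactly when ``$v$ locks on $B$'' can occur: within the $h$-th instance of \ibfpmone{}, inspection of \crefrange{algo:ibftm1-block-finalisation-protocol-start-and-init}{algo:ibftm1-block-finalisation-protocol-commit-and-round-change} shows that the only assignments to $lockedBlock_{h,v}$ which produce a non-$\bot$ value are \cref{ln:ibftm1:lock1} inside $fpgc_2(h,v)$ and the corresponding assignment inside $fpgc_4(h,v)$. Hence the event ``$v$ locks on block $B$'' corresponds precisely to the first execution, in the $h$-th instance, of either $fpgc_2(h,v)$ or $fpgc_4(h,v)$, and that execution leaves the state $lockedBlock_{h,v} = B$ with $B \neq \bot$.

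Next, I would invoke \cref{lem:invariant-for-locked-block} directly: it states that after this first lock event, every subsequent guarded command $gc \in \{fpgc_1(h,v),\ldots,fpgc_7(h,v)\}$ satisfies the Hoare triple $\{lockedBlock_{h,v}=B\}\, gc\, \{lockedBlock_{h,v}=B\}$. By the recursive definition of $fpExecution(h,v)$ in \cref{eqn:fpExecution-v-h}, the remainder of the $h$-th instance consists of a sequential composition of such guarded commands, so chaining these Hoare triples along that composition yields $lockedBlock_{h,v} = B$ throughout the rest of the instance, which is exactly the claim.

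The only step that needs genuine care is the enumeration of assignments to $lockedBlock_{h,v}$. One has to check that modifications \ref{itm:pm-move-block-verification} and \ref{itm:pm-remove-failure-in-creating-the-finalisation-proof} have removed every pseudocode path of the original \ibfp{} that could reassign $lockedBlock_{h,v}$ to $\bot$, in particular the \unlock{} at \cref{ln:unlocks} of \cref{algo:ibft-block-finalisation-protocol-commit-and-round-change}, which in \cref{algo:ibftm1-block-finalisation-protocol-commit-and-round-change} no longer appears. Once this line-by-line check is done, the two ingredients above combine with no further argument, so I expect the proof to be just a few lines.
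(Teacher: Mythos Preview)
Your proposal is correct and follows essentially the same approach as the paper: identify that the only places $lockedBlock_{h,v}$ can be set to a non-$\bot$ value are in $fpgc_2(h,v)$ and $fpgc_4(h,v)$, and then invoke \cref{lem:invariant-for-locked-block} to conclude. Your write-up is more explicit (spelling out the Hoare-triple chaining along \cref{eqn:fpExecution-v-h} and the line-by-line check that the \unlock{} at \cref{ln:unlocks} has been removed), but the paper's proof is just the two-sentence version of the same argument.
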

\begin{proof}
	The only sequential steps where validator $ v $ can \lock[] on a block $ B $, i.e. $ lockedBlock_{h,v} \gets B $, is in the execution of either $ fpgc_2(h,v) $ or $ fpgc_4(h,v) $.
	This observation and \cref{lem:invariant-for-locked-block} imply the Lemma.
\end{proof}

\begin{lemma} \label{lem:no-relocking}
	If an honest validator $v$ is \lock[ed] on block $B$ while in round $r$ of the $h$-th instance of the \ibfpmone{}, then $v$ will be \lock[ed] on block $B$ for any round $r'\geq r$ of the same instance $h$.
\end{lemma}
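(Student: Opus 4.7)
The plan is to reduce this claim almost immediately to \cref{lem:no-change-lock-in-future-block-events}, which already establishes the stronger statement that once a validator \lock[s] on block $B$ during the $h$-th instance, it remains \lock[ed] on $B$ for the rest of that instance's execution. Since the successive rounds $r' \geq r$ of a single instance $h$ are all part of the same continuous execution, persistence of the lock across future rounds follows as a direct corollary, once we identify the moment at which $v$ first set $lockedBlock_{h,v}$ to $B$.

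First I would argue that, because $v$ is \lock[ed] on $B$ in round $r$, by inspection of \cref{algo:ibftm1-block-finalisation-protocol-pre-prepare-and-prepare,algo:ibftm1-block-finalisation-protocol-commit-and-round-change} there must exist a most recent sequential step (either an execution of $fpgc_2(h,v)$ at \cref{ln:ibftm1:lock1} or of $fpgc_4(h,v)$) in which the assignment $lockedBlock_{h,v} \gets B$ was performed; this step occurred during some round $r'' \leq r$ of the same instance $h$. Next I would apply \cref{lem:no-change-lock-in-future-block-events} starting from that moment: $v$ then stays \lock[ed] on $B$ throughout the remainder of the $h$-th instance, and in particular in every round $r' \geq r$.

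There is no real obstacle here; the only detail worth stressing is that in \ibfpmone{} the modification \ref{itm:pm-remove-failure-in-creating-the-finalisation-proof} has deleted the branch that would have reset $lockedBlock_{h,v}$ to $\bot$ in $fpgc_4(h,v)$, and $fpInit(h,v)$ (the only remaining place where $lockedBlock_{h,v}$ is set to $\bot$) runs only once at the start of instance $h$. This is precisely what underpins \cref{lem:invariant-for-locked-block,lem:no-change-lock-in-future-block-events}, so no additional invariant needs to be established here.
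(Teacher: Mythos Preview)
Your proposal is correct and follows essentially the same route as the paper: both reduce directly to \cref{lem:no-change-lock-in-future-block-events} after identifying the step at which $lockedBlock_{h,v}$ was first set to $B$. The only point the paper makes explicit that you leave implicit is the monotonicity of $r_{h,v}$, which is what lets ``for the rest of the execution'' be recast as ``for every round $r' \geq r$''.
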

\begin{proof}
	After the initialisation of instance $ h $ validator $ v $ is not \lock[ed] on any block.
	Let $ r' $ be the first round where $ v $ \lock[s] on block $ B $.
	\cref{lem:no-change-lock-in-future-block-events} and the observation that round number $r_{h,v}$ increments monotonically prove the Lemma.
\end{proof}

\begin{lemma}\label{lem:safety-solution-forward}
	For any $n_h \geq 1$, provided that the number of Byzantine validators is $\leq f(n_h)$, the intersection of any two sets of \quorumoptimal($ n_h $) validators is guaranteed to include an honest validator.
\end{lemma}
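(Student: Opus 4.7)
The plan is to apply the standard pigeonhole argument for quorum intersection: since the two quorums are both subsets of the validator set of size $n_h$, their intersection has size at least $2 \quorumoptimal(n_h) - n_h$ by inclusion-exclusion. If I can show that this lower bound strictly exceeds $f(n_h)$, then even if all $f(n_h)$ Byzantine validators conspire to lie inside the intersection, at least one honest validator must remain there.

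First I would unfold the definitions: $\quorumoptimal(n_h) = \lceil 2 n_h / 3 \rceil$ and $f(n_h) = \lfloor (n_h - 1)/3 \rfloor$. The statement then reduces to the purely arithmetic inequality
\begin{equation*}
2 \left\lceil \frac{2 n_h}{3} \right\rceil - n_h \;>\; \left\lfloor \frac{n_h - 1}{3} \right\rfloor.
\end{equation*}
I would prove this by a case split on $n_h \bmod 3$. Writing $n_h = 3k$, $3k+1$, or $3k+2$, the left-hand side evaluates to $k$, $k+1$, and $k+2$ respectively, while the right-hand side evaluates to $k-1$, $k$, and $k$. Hence in every case the strict inequality holds (and one checks the boundary $n_h \in \{1,2,3\}$ separately, where the inequality is either vacuous or holds trivially).

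To finish, let $A$ and $B$ be two arbitrary sets of $\quorumoptimal(n_h)$ validators. Since both are subsets of the validator set of size $n_h$, I would invoke $|A \cap B| \geq |A| + |B| - n_h = 2\quorumoptimal(n_h) - n_h$, which by the inequality above is strictly greater than $f(n_h)$. Because the number of Byzantine validators is at most $f(n_h)$, the set of Byzantine validators cannot cover the whole intersection, so $A \cap B$ must contain at least one honest validator.

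I do not anticipate a genuine obstacle: the only slightly delicate step is handling the ceiling and floor carefully in the three residue cases, and verifying that the small values of $n_h$ (particularly $n_h \in \{1,2,3\}$, where $f(n_h) = 0$ and no Byzantine validators are tolerated) do not create a degenerate exception.
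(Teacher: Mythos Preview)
Your proposal is correct and follows essentially the same approach as the paper: both argue via inclusion--exclusion that the intersection of two quorums has size at least $2\quorumoptimal(n_h)-n_h$, and then show this strictly exceeds $f(n_h)$ so that an honest validator must remain. The only cosmetic difference is in verifying the arithmetic inequality: you do a case split on $n_h \bmod 3$, while the paper bounds $\lceil\cdot\rceil$ and $\lfloor\cdot\rfloor$ by their real-valued arguments to obtain $2\quorumoptimal(n_h)-n_h-f(n_h)\geq \tfrac{1}{3}$ and then invokes integrality.
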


\begin{lemma}\label{lem:if-sent-commit-then-locked}
\begin{sloppypar}		If an honest validator $v$ sends a Commit message for block $B$ (i.e \linebreak \tsmessage{COMMIT}{h, r, \kec{B},\cs{B,v}}{v}) while running the $ h $-th instance of the \ibfpmone{}, then (i) validator $v$ is \lock[ed] on $B$ and (ii) the height of block $ B $ is $ h $, i.e. $ \BlockHeight{B} =h$.
\end{sloppypar}
\end{lemma}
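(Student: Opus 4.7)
The plan is to trace the Commit message back to its origin in the modified protocol. First, I would identify the single sequential step that can emit a Commit in IBFT-M1. Modification (PM-5) deletes the guarded command $igc_2(v)$ of \cref{algo:ibft-protocol}, which in the original protocol was the only way to send a Commit outside the block finalisation process; consequently the only remaining source of Commit messages is the guarded command $fpgc_3(h,v)$ in \cref{algo:ibftm1-block-finalisation-protocol-pre-prepare-and-prepare}. Its guard forces $\mathit{lockedBlock}_{h,v}\neq\bot$, and the message it emits is syntactically $\tsmessage{COMMIT}{h, r_{h,v}, \kec{\mathit{lockedBlock}_{h,v}}, \cs{\mathit{lockedBlock}_{h,v},v}}{v}$. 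Matching this with the Commit of the statement identifies $B = \mathit{lockedBlock}_{h,v}$ at the moment of transmission. Combined with \cref{lem:no-change-lock-in-future-block-events}, which states that once locked within an instance a validator stays locked on the same block, this immediately yields part~(i).

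For part~(ii) my plan is to prove a small invariant: while an honest $v$ is in the $h$-th instance of the \ibfpmone{}, whenever $\mathit{lockedBlock}_{h,v}\neq\bot$ the locked block has height exactly $h$. Writes to $\mathit{lockedBlock}_{h,v}$ occur only in $fpgc_2(h,v)$ (\cref{ln:ibftm1:lock1}) and in $fpgc_4(h,v)$, and both copy the current value of $\mathit{acceptedBlock}_{h,v}$ into it; their guards require $\mathit{acceptedBlock}_{h,v}\neq\bot$. In turn, $\mathit{acceptedBlock}_{h,v}$ is reset to $\bot$ inside $moveToNewRound$ (\cref{ln:ibft1:set-accept-pre-prepare-1}) and receives a non-$\bot$ value only in $fpgc_1(h,v)$ (\cref{ln:ibft1:set-accept-pre-prepare-2}), where the assigned block $B'$ has just passed the check $\Valid{B',\ExtractBlock{{chain}_v[h-1]}}$. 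As specified in \cref{sec:protocol-description}, the $\Valid{\cdot,\cdot}$ predicate verifies the Ethereum header's \texttt{number} field against the parent's, so a block deemed valid against a parent of height $h-1$ must have height $h$. Chaining the three assignments then shows that $\mathit{lockedBlock}_{h,v}$, and hence $B$, has height $h$.

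The main obstacle is a careful bookkeeping argument rather than any deep combinatorial content: I must (a) explicitly invoke modification (PM-5) to discard the legacy old-block Commit pathway, (b) appeal to \cref{lem:invariant-for-locked-block} to ensure $\mathit{lockedBlock}_{h,v}$ cannot be silently overwritten between the lock event and the commit event, and (c) rely on the stated semantics of $\Valid{\cdot,\cdot}$ to turn ``valid child of ${chain}_v[h-1]$'' into ``has height $h$''. Each of these is a single citation, but all three must appear for the proof to be airtight.
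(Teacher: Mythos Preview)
Your proposal is correct and follows essentially the same route as the paper: identify $fpgc_3(h,v)$ as the sole Commit-emitting step in \ibftmone{} (the paper simply points to \cref{ln:ibftm1:send-commit} in the modified pseudocode rather than invoking PM-5 explicitly), then trace $\mathit{lockedBlock}_{h,v}\leftarrow\mathit{acceptedBlock}_{h,v}$ back through $fpgc_2/fpgc_4$ to the \Valid{} check in $fpgc_1$, where the paper cites equation~(50) of the Yellow Paper for the height constraint. Your appeals to \cref{lem:no-change-lock-in-future-block-events} and \cref{lem:invariant-for-locked-block} are harmless but unnecessary, since part~(i) only asserts that $v$ is locked on $B$ at the instant the Commit is sent, which follows directly from $B=\mathit{lockedBlock}_{h,v}$ in the emitted message.
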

\begin{proof}
	The only line of the pseudocode of \crefrange{algo:ibftm1-protocol}{algo:ibftm1-block-finalisation-protocol-commit-and-round-change} where a Commit message is sent is \cref{ln:ibftm1:send-commit} and this line only sends Commit messages for blocks on which validator $ v $ is \lock[ed].
	This proves condition (i) of the Lemma.

	The only sequential steps where node $ v $ can \lock[] on a block are $ fpgc_2(h,v) $ and $ fpgc_4(h,v) $.
	In either case, $ lockedBlock_{h,v} $ is set to $ acceptedBlock_{h,v} $ and either guarded command can be executed only if $ acceptedBlock_{h,v} \neq \bot $.
	The only sequential step where $ acceptedBlock_{h,v} $ is set is $ fpgc_1(h,v) $ where $ acceptedBlock_{h,v} $ is set to $ B $ provided that   \Valid{\ensuremath{B,{chain_v}[h-1]}} is true.
	According to equation (50) of the Ethereum Yellow Paper \cite{yellowpaper}, \mbox{\Valid{\ensuremath{B,B_{parent}}}} can be true only if $ \BlockHeight{\ensuremath{B}} = \BlockHeight{\ensuremath{B_{parent}}}+1 $.
	This concludes the proof of condition (ii).
\end{proof}

\begin{lemma}\label{lem:safety-forward}
	For any instance $ h $ of the \ibfpmone{} such that all honest nodes agree on the same set of validators and the number of Byzantine validators is $ \leq f(n_h) $,	
	no two valid finalised blocks for the same height $h$ including different blocks can be produced.
\end{lemma}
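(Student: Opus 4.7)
The plan is to argue by contradiction, using quorum intersection as the driver. Suppose there exist two valid finalised blocks $\fb$ and $\fb'$ for the same height $h$ whose included blocks $B$ and $B'$ satisfy $B \neq B'$. By the validity criterion \emph{isValidFinalisedBlock}, the finalisation proof of $\fb$ (resp.\ $\fb'$) contains at least $\quorumoptimal(n_h)$ commit seals, each recoverable to a distinct validator in the validator set for instance $h$. Let $V$ and $V'$ be the corresponding sets of signing validators. Since $|V|, |V'| \geq \quorumoptimal(n_h)$ and the number of Byzantine validators is at most $f(n_h)$, \cref{lem:safety-solution-forward} yields an honest validator $v \in V \cap V'$.

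Next I would argue that $v$ must have personally sent a Commit message for $B$ and a Commit message for $B'$ while running the $h$-th instance of the \ibfpmone{}. Modification \ref{itm:pm-check-signature-for-commit-messages} requires each commit seal in a valid finalisation proof to be a genuine signature of the recovered validator over the hash of the enclosed block; the unforgeability assumption on the signature scheme then prevents Byzantine validators from fabricating a commit seal attributable to $v$. Consequently, $v$ signed commit seals for both $B$ and $B'$, and since commit seals are only produced at \cref{ln:ibftm1:send-commit} of \cref{algo:ibftm1-block-finalisation-protocol-pre-prepare-and-prepare}, $v$ transmitted Commit messages for both blocks. Because \ref{itm:pm-remove-send-commit-for-old-blocks} has removed the only other source of Commit messages from the \ibftmonep{}, both transmissions occurred while $v$ was executing the $h$-th instance.

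Now I apply the lock-stability results. By \cref{lem:if-sent-commit-then-locked}, $v$ was \lock[ed] on $B$ at the moment of sending one Commit and \lock[ed] on $B'$ at the moment of sending the other. Without loss of generality let $r_1 \leq r_2$ be the rounds in which these two Commits were sent. By \cref{lem:no-relocking}, since $v$ was \lock[ed] on $B$ in round $r_1$, $v$ is still \lock[ed] on $B$ in every round $r \geq r_1$ of instance $h$, including round $r_2$. Hence $B = B'$, contradicting our starting assumption, and the proof is complete.

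The main obstacle is ensuring that a commit seal appearing in a finalisation proof genuinely corresponds to a Commit message actually sent by the honest validator $v$; this is precisely what modification \ref{itm:pm-check-signature-for-commit-messages} guarantees, together with signature unforgeability. A secondary subtlety is ruling out that $v$ could emit a Commit for a block of height $h$ from a context other than the $h$-th instance, which is handled by clause (ii) of \cref{lem:if-sent-commit-then-locked} combined with modification \ref{itm:pm-remove-send-commit-for-old-blocks}; once those two observations are in place, the proof is just an application of the quorum-intersection and lock-monotonicity lemmas.
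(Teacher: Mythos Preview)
Your proof is correct and follows essentially the same approach as the paper: contradiction via quorum intersection (\cref{lem:safety-solution-forward}) to obtain an honest validator that must have sent Commit messages for both $B$ and $B'$, then \cref{lem:if-sent-commit-then-locked} and \cref{lem:no-relocking} to derive the contradiction. You are somewhat more explicit than the paper in justifying why a commit seal in a finalisation proof must correspond to a Commit message actually sent by the honest validator (invoking modifications \ref{itm:pm-check-signature-for-commit-messages} and \ref{itm:pm-remove-send-commit-for-old-blocks}); conversely, the paper explicitly invokes collision resistance of $\kec{\cdot}$ to pass from distinct blocks to distinct hashes, a small step you leave implicit.
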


\begin{proof}
	A block for height $h$ is considered finalised only if it contains a valid finalisation proof.
	A finalisation proof is valid only if it contains at least $\quorumoptimal(n_h)$ commit seals signed by different validators in \avhmacro[].
	The proof is by contradiction. 
	Assume that (a) there exist two finalised blocks $\fb$ and ${\fb{}}'$ including block $B$ and $B'$ respectively, with $B \neq B'$.
	Since, according to \cref{lem:safety-solution-forward}, the intersection of any two sets of $\quorumoptimal(n_h)$ validators is guaranteed to include an honest validator and we assume unforgeability of digital signature, assumption (a) implies that there exists one honest validator $v$ that produced both a commit seal for block $ B $ and commit seal for block  $B'$.
	This, in turn, implies that validator $v$ sent a Commit message with block hash matching $\kec{B}$ and a Commit message with block hash matching $\kec{B'}$ as commit seals are only included in Commit messages.
	However, according to condition (i) of \cref{lem:if-sent-commit-then-locked}, honest validators only send Commit messages for the block that they are \lock[ed] on. 
	Therefore, since we assume that \kec{\cdot} is a collision-resistant hash function (i.e $B \neq B' \rightarrow \kec{B} \neq \kec{B'}$ with high probability ), for assumption (a) to be satisfied $v$ must have \lock[ed] on both block $B$ and $B'$.
	This is in clear contradiction with \cref{lem:no-relocking}.
\end{proof}

\begin{lemma}\label{lem:no-valid-finalised-block-with-heigh-h'-can-be-produce-by-instance-h-of-ibfpone}
	For any instance $ h $ of the \ibfpmone{} such that all honest nodes agree on the same set of validators and the number of Byzantine validators is $ \leq f(n_h) $,	
	only valid finalised blocks for height $ h $ can be produced.
\end{lemma}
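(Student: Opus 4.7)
The plan is to show that any valid finalised block $\fb$ that could ever be output by instance $h$ of the \ibfpmone{} necessarily satisfies $\BlockHeight{\ExtractBlock{\fb}} = h$, via a quorum-intersection-with-honesty argument that then appeals to \cref{lem:if-sent-commit-then-locked}.

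First I would unpack the definition of ${isValidFinalisedBlock}(\fb,v)$: its finalisation proof must contain at least $\quorumoptimal(n_h)$ commit seals signed by distinct validators in \avhmacro[]. Since by hypothesis at most $f(n_h)$ of those validators are Byzantine, and a short arithmetic check gives $\quorumoptimal(n_h) > f(n_h)$ (namely $\lceil 2 n_h / 3 \rceil - \lfloor (n_h-1)/3 \rfloor \geq 1$ for every $n_h \geq 1$), at least one contributing signer must be an honest validator $v^*$.

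Next, by unforgeability of the ECDSA signature scheme, $v^*$ genuinely signed $\kec{\ExtractBlock{\fb}}$, producing the commit seal $\cs{\ExtractBlock{\fb}, v^*}$. Since in the modified pseudocode (\cref{algo:ibftm1-protocol,algo:ibftm1-block-finalisation-protocol-pre-prepare-and-prepare,algo:ibftm1-block-finalisation-protocol-commit-and-round-change}) the only site where an honest validator emits a commit seal is \cref{ln:ibftm1:send-commit}, I would conclude that $v^*$ multicast a Commit message carrying that seal while executing the $h$-th instance of the \ibfpmone{}. Collision-resistance of \kec{\cdot} then identifies the block referenced in that Commit message with $\ExtractBlock{\fb}$.

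Finally, I would apply condition (ii) of \cref{lem:if-sent-commit-then-locked} to this Commit message and immediately obtain $\BlockHeight{\ExtractBlock{\fb}} = h$, which is the statement of the Lemma. The whole argument is essentially routine; the only mildly delicate piece is verifying $\quorumoptimal(n_h) > f(n_h)$, which reduces to direct manipulation of the floor and ceiling definitions and thus poses no real obstacle.
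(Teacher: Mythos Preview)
Your proposal is correct and follows essentially the same route as the paper's proof: both argue that a valid finalisation proof needs $\quorumoptimal(n_h)$ commit seals, use $\quorumoptimal(n_h) > f(n_h)$ to extract an honest signer, observe that honest validators only emit commit seals inside Commit messages, and then invoke condition~(ii) of \cref{lem:if-sent-commit-then-locked}. The paper phrases it as a proof by contradiction while you give the direct version, but the content is identical.
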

\begin{proof}
	The proof is by contradiction.
	Let us assume that there exists a valid finalised block \fb{} for height $ h' $ produced by the $ h $-th instance of the \ibfpmone{}, with $ h' \neq h $.
	Since \fb{} is a valid finalised block produced by instance $ h $ of the \ibfpmone{}, it must contain $ \quorumoptimal(n_h) $ commit seals.
	Given that $ \quorumoptimal(n_h) > f(n_h)$ for any $ n_h \geq 1 $, then \fb{} must contain at least one commit seal signed by an honest validator.
	Since commit seals are only included in Commit messages, this in turn implies that at least one honest validator sent a Commit message for a block with height $ h' $ while running the $ h $-th instance of the \ibfpmone{}.
	This is in contradiction with condition (ii) of \cref{lem:if-sent-commit-then-locked}.
\end{proof}

\begin{lemma}\label{lem:safety-iff-intersection}
	For any instance $ h $ of the \ibfpmone{} such that all honest nodes agree on the same set of validators, then Byzantine-fault-tolerant safety is guaranteed if and only if the number of Byzantine validators is $ \leq f(n_h) $.
\end{lemma}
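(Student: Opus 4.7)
The plan is to prove the biconditional by handling the two directions separately, with the forward direction being an immediate consequence of previously established lemmas and the reverse direction appealing to the classical impossibility result for eventually synchronous networks.

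For the forward direction, suppose the number of Byzantine validators is at most $f(n_h)$. Then \cref{lem:safety-forward} guarantees that no two valid finalised blocks including different blocks can ever be produced for height $h$, and \cref{lem:no-valid-finalised-block-with-heigh-h'-can-be-produce-by-instance-h-of-ibfpone} guarantees that no valid finalised block for a height other than $h$ can be produced by the $h$-th instance. Together, these match precisely the two conjuncts in the definition of $t$-Byzantine-fault-tolerant safety given in \cref{def:t-tolerate-ibfp}, so this direction requires no additional argument beyond citing these two lemmas.

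For the reverse direction, I would proceed by contraposition: assuming more than $f(n_h)$ validators are Byzantine, Byzantine-fault-tolerant safety cannot be guaranteed. This follows from the classical lower bound of Dwork, Lynch and Stockmeyer \cite{Dwork:1988:CPP:42282.42283}, which establishes that no consensus protocol operating over an eventually synchronous network can tolerate more than $\lfloor (n-1)/3 \rfloor$ Byzantine participants while simultaneously guaranteeing safety and liveness. Since the \ibfpmone{} is such a protocol and $f(n_h) = \lfloor (n_h-1)/3\rfloor$, the bound applies here directly.

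The main obstacle is that the impossibility result is stated for protocols achieving both safety and liveness, whereas here we are isolating only the safety property; a careful reading is needed to justify that the cited bound is in fact applicable. If direct appeal to \cite{Dwork:1988:CPP:42282.42283} is deemed insufficient, an explicit counterexample can be constructed in the spirit of the proof of \cref{lem:ibfp-not-safe}: $f(n_h)+1$ Byzantine validators collude by signing commit seals for two different blocks $B$ and $B'$, partitioning their contributions between two disjoint subsets of honest validators so that each subset, together with the Byzantine signatures, accumulates $\quorumoptimal(n_h)$ commit seals before GST. Verifying that the counts work out requires a short arithmetic check against the definition $\quorumoptimal(n_h) = \lceil 2n_h/3\rceil$, but does not present any conceptual difficulty.
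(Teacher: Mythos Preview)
Your proposal is correct and follows essentially the same route as the paper: the forward direction is dispatched by citing \cref{lem:safety-forward} and \cref{lem:no-valid-finalised-block-with-heigh-h'-can-be-produce-by-instance-h-of-ibfpone}, and the reverse direction is handled by appeal to the Dwork--Lynch--Stockmeyer bound \cite{Dwork:1988:CPP:42282.42283}. Your added caveat---that the DLS impossibility concerns protocols guaranteeing both safety \emph{and} liveness, so some care is needed when isolating safety alone---is a legitimate observation that the paper's own proof does not address; your fallback counterexample sketch is a reasonable hedge, though the paper is content with the bare citation.
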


\begin{proof}
	The forward direction of the Lemma is proved by \cref{lem:safety-forward,lem:no-valid-finalised-block-with-heigh-h'-can-be-produce-by-instance-h-of-ibfpone} while the reverse direction is a direct consequence of $ f(n_h) $ being the maximum number of Byzantine validators that any consensus protocol operating in an eventually synchronous network can withstand \cite{Dwork:1988:CPP:42282.42283} (see also \cref{sec:system-model}).
\end{proof}




\begin{theorem}
	The \ibftmonep{} guarantees optimal Byzantine-fault-tolerance persistence threshold. 
\end{theorem}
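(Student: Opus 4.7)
The plan is to unfold two nested definitions and then appeal to the per-instance safety lemma already proved. By \cref{def:optimal-Byzantine-fault-tolerance-persistence}, it suffices to prove that every instance of the \ibfpmone{} guarantees the optimal Byzantine-fault-tolerant safety threshold; by \cref{def:optimal-byz-fault-tol-safety-for-ibfp}, this further reduces to establishing $f(n_h)$-Byzantine-fault-tolerant safety for each instance $h$, where $n_h$ denotes the size of $\avhmacro[]$. The desired safety statement is exactly the content of the forward direction of \cref{lem:safety-iff-intersection}, modulo the side hypothesis that all honest validators participating in instance $h$ agree on the same validator set.

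To discharge that hypothesis at every height, I would mirror the induction used in the proof of \cref{lem:if-ibfp-is-t-BFT-then-no-two-valid-blocks-for-the-same-height-are-ever-created}. Induct on $h$: the base case $h=1$ is immediate from the shared genesis block. For the inductive step, assume all honest validators that ever start any instance $h' < h$ agree on their validator set; then, by \cref{lem:safety-forward,lem:no-valid-finalised-block-with-heigh-h'-can-be-produce-by-instance-h-of-ibfpone}, no two valid finalised blocks containing different Ethereum blocks can coexist at any height $h' < h$. Combined with the \ibftmone{} analogues of \cref{lem:if-an-honest-node-sets-hv-to-h-then-it-has-all-previous-blocks,lem:if-an-honest-nodes-start-instance-h-then-it-has-all-previous-blocks} (whose proofs transport verbatim, since the outer-loop pseudocode governing $h_v$ and ${chain}_v$ in \cref{algo:ibftm1-protocol} is unchanged from \cref{algo:ibft-protocol}), the local blockchains of any two honest validators then agree on every finalised block of height $<h$. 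Because $\avhmacro[]$ is a deterministic function of ${chain}_v[0:h-1]$, this prefix agreement lifts to agreement on the validator set of instance $h$, closing the induction.

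The main hurdle will be confirming that the auxiliary lemmas in \crefrange{lem:if-an-honest-node-sets-hv-to-h-then-it-has-all-previous-blocks}{lem:common-prefix}, originally stated for the unmodified IBFT, still apply to \ibftmone{}. Since \cref{algo:ibftm1-protocol} only deletes the $igc_2(v)$ handler together with the $oldPrePrepareMessagesAlreadyProcessed_v$ state (neither of which touches $h_v$ or ${chain}_v$) and preserves the unique assignment site ${chain}_v[h_v] \gets \fb$ inside $igc_1(v)$, the $h_v$-bookkeeping proofs transport with only cosmetic changes. The new role of $\quorumoptimal(n_h)$ and the restructured $fpgc_4(h,v)$ guard concern only intra-instance safety and are therefore invisible to the induction driving agreement on validator sets.
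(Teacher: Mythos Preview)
Your proposal is correct and follows the same route as the paper: reduce via \cref{def:optimal-Byzantine-fault-tolerance-persistence} to per-instance optimal safety and then invoke \cref{lem:safety-iff-intersection}. The paper's proof is a single sentence citing exactly those two ingredients; you go further by explicitly discharging the ``all honest nodes agree on the same validator set'' hypothesis of \cref{lem:safety-iff-intersection} via the induction from \cref{lem:if-ibfp-is-t-BFT-then-no-two-valid-blocks-for-the-same-height-are-ever-created}, which the paper leaves implicit.
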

\begin{proof}
\begin{sloppypar}
		Follows from \cref{def:optimal-Byzantine-fault-tolerance-persistence} of optimal Byzantine-fault-tolerance persistence threshold and \cref{lem:safety-iff-intersection}.
\end{sloppypar}
\end{proof}

\subsection{Liveness impact of modification \ibftmone{}}\label{sec:liveness-impact-of-modification-ibftmone}

In this section we show that modification \ibftmone{} does not affect the weak-liveness of the \ibfp{}.

\begin{lemma}\label{lem:qopt+f-leq-n}
	$ \forall n \geq 0 : \quorumoptimal(n) + f(n)\leq n $
\end{lemma}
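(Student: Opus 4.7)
The plan is to prove the inequality by a straightforward case analysis on the residue of $n$ modulo $3$, since both $\quorumoptimal(n) = \lceil 2n/3 \rceil$ and $f(n) = \lfloor (n-1)/3 \rfloor$ are piecewise-linear in $n$ with a period-$3$ pattern. Once the residue is fixed, both expressions reduce to closed-form affine functions of the quotient, and the inequality becomes a trivial numerical check.

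Concretely, I would write $n = 3k + i$ for $i \in \{0,1,2\}$ and $k \geq 0$, then compute the two quantities in each case. For $i = 0$: $\quorumoptimal(n) = 2k$ and $f(n) = k - 1$, so the sum is $3k - 1 \leq 3k = n$. For $i = 1$: $\quorumoptimal(n) = 2k + 1$ and $f(n) = k$, so the sum is $3k + 1 = n$. For $i = 2$: $\quorumoptimal(n) = 2k + 2$ and $f(n) = k$, so the sum is $3k + 2 = n$. In every case the inequality holds, with equality precisely when $n \not\equiv 0 \pmod 3$.

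The only mildly delicate case is $n = 0$, where $f(0) = \lfloor -1/3 \rfloor = -1$ and $\quorumoptimal(0) = 0$, giving $-1 \leq 0$; this is covered by the $i = 0$, $k = 0$ instance above provided one uses the standard convention that $\lfloor \cdot \rfloor$ rounds toward $-\infty$. I do not expect any real obstacle here: the statement is essentially a bookkeeping lemma whose purpose is to enable the subsequent argument that the $\quorumoptimal$ subset of honest validators that participate in a round is non-empty despite up to $f(n)$ Byzantine validators. The case analysis is the cleanest presentation, and no induction or deeper structural argument is needed.
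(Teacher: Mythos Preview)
Your case analysis is correct and complete; each residue class is handled accurately, including the boundary case $n=0$.

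The paper takes a different, slightly slicker route: it rewrites $\lceil 2n/3\rceil$ as $\lfloor(2n-1)/3\rfloor+1$, then drops both floors to obtain the real bound $\quorumoptimal(n)+f(n)\le n+\tfrac{1}{3}$, and finally invokes integrality to conclude $\le n$. Your approach trades that algebraic manipulation for an explicit residue-by-residue computation, which is a bit longer but has the advantage of pinning down exactly when equality holds (namely $n\not\equiv 0\pmod 3$), information the paper's argument does not surface. Either argument is entirely adequate for this bookkeeping lemma.
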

The proof of the previous Lemma can be found in \ref{ap:proof-for-liveness-impact-of-modification-ibftmone}.

\begin{lemma}\label{lem:liveness-not-affected-by-replacement-of-q-with-qopt}
	The liveness property of the IBFT protocol is not affected if \quorum{n} is replaced by $\quorumoptimal(n)$ provided that the number of Byzantine validators for any instance $h$ of the \ibftmonep{} is $ \leq f(n_h)$.
\end{lemma}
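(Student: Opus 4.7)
The plan is to argue that every usage of the quorum threshold in the IBFP continues to be satisfiable when $\quorum{n_h}$ is replaced by $\quorumoptimal(n_h)$, so every trigger that fired in the original liveness argument still fires. Concretely, I would first enumerate the three places where $\quorum{n_h}$ appears as a triggering threshold in \Crefrange{algo:ibft-block-finalisation-protocol-start-and-init}{algo:ibft-block-finalisation-protocol-commit-and-round-change}: the Prepare-collection guard of $fpgc_2(h,v)$ (which enables locking), the Commit-collection guard of $fpgc_4(h,v)$ (which enables producing a finalised block), and the Round-Change-collection guard of $fpgc_7(h,v)$ (which enables starting a new round). A fourth occurrence, inside $\mathit{isValidFinalisationProof}$, is relevant only to persistence, not to the progress steps that weak-liveness depends on.

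Next I would invoke \Cref{lem:qopt+f-leq-n} to obtain $\quorumoptimal(n_h) \le n_h - f(n_h)$. Combined with the hypothesis that at most $f(n_h)$ validators in instance $h$ are Byzantine, this gives at least $\quorumoptimal(n_h)$ honest validators. Hence whenever the original protocol argument concludes that ``all honest validators for instance $h$ send message $m$ (of some type) under the current state,'' the set of such messages already has cardinality $\ge \quorumoptimal(n_h)$, and after GST each of these messages is delivered to every other honest validator within $\Delta$. Therefore each of the three thresholded guards is still satisfiable for honest validators once they reach the corresponding state, which is exactly what is needed to preserve weak-liveness.

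The final step is to verify that the modification does not disturb the conditions under which honest validators \emph{enter} those states. The guards that produce Prepare, Commit, and Round-Change messages (i.e.\ $fpgc_1$, $fpgc_3$, and the various \textit{moveToNewRoundAndSendRoundChange} invocations) do not themselves depend on $\quorum{\cdot}$; the only threshold appearing in the triggering of a Round-Change send is $f(n_h)+1$ in $fpgc_5$, which is untouched by the replacement. Consequently, honest validators still emit the same messages, only the downstream collection thresholds changed, and by the previous paragraph those thresholds are still reachable.

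The main obstacle is that the underlying weak-liveness argument for the IBFP is inherited from Saltini \cite{IBFT-Liveness-Analysis} and is not reproduced here, so the proof must be phrased as a reduction: any execution witnessing weak-liveness of the original IBFP can be matched step-for-step in the modified protocol, with the quorum-sized sets of honest messages that the original argument uses now also serving as $\quorumoptimal$-sized sets. The only subtlety is to confirm that $\quorumoptimal(n_h) \le n_h - f(n_h)$ for \emph{every} admissible $n_h$ (including small values such as $n_h \in \{1,2,3\}$ where $f(n_h)=0$), which is precisely the content of \Cref{lem:qopt+f-leq-n}.
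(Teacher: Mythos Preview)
Your proposal is correct and rests on the same key observation as the paper's proof: \Cref{lem:qopt+f-leq-n} gives at least $\quorumoptimal(n_h)$ honest validators, so after GST the honest messages alone suffice to meet every quorum-sized threshold. The paper's argument is considerably terser---it only spells out the Prepare and Commit thresholds in the ``good round'' scenario (honest proposer, all honest validators unlocked)---whereas you systematically enumerate all four occurrences of the threshold and explicitly verify that the message-emitting guards are untouched; this extra care is welcome but does not constitute a different route.
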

\begin{proof}
	\Cref{lem:qopt+f-leq-n} implies that for any round of the $h$-th instance of the \ibftmonep{}, there exists at least $ \quorumoptimal(n_h) $ honest validators. 
	This, in turn, implies that once GST is reached, if all $n_h$ honest validators are \unlock[ed] and the proposer for the current round is honest, then Byzantine validators cannot prevent honest validators from receiving $\quorumoptimal(n_h)$ Prepare messages and $\quorumoptimal(n_h)$ Commit messages for the block proposed by the block proposer.
\end{proof}

\begin{theorem}\label{lem:liveness-not-affected-by-ibftmone}
	The modification \ibftmone{} does not affect the weak-liveness property of the \ibfp{}.
\end{theorem}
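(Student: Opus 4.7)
The plan is to walk through the six modifications PM-1 through PM-6 constituting \ibftmone{} and argue, one by one, that none of them removes an execution path along which honest validators could produce a valid finalised block in the $h$-th instance of the \ibfp{}. Since weak-liveness only requires that \emph{eventually at least one} honest validator can produce a valid finalised block for height $h$, it suffices to show that any post-GST scenario witnessing progress under the original IBFP still witnesses progress under the modified protocol.

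My first step would be to dispatch the modifications that clearly do not touch the relevant IBFP dynamics. Modification PM-1 only changes how already-produced finalised blocks are disseminated over \devptwop{}, and hence has no effect on the internal progress of any IBFP instance. PM-5 deletes guarded command $igc_2(v)$, which only emits Commit messages for heights already finalised locally; this behaviour is orthogonal to whether a fresh instance can decide. PM-4 removes the ``failure in creating the finalisation proof'' round-change trigger, which can only keep honest validators in their current round longer, and hence cannot hurt progress towards producing a finalised block.

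Next I would handle PM-2 and PM-3. The key observation for PM-2 is that honest validators always produce commit seals by signing $\kec{B}$ correctly via \cs{B,v}, so any Commit message emitted by an honest validator satisfies the added $\EthAddressRecover{\ensuremath{\kec{B},cs}} = sv$ check at \cref{ln:new-send-finalisation-proof-upon-condition}; hence a quorum of Commit messages can still be collected from the honest validators alone whenever the pre-modification protocol could collect them. For PM-3, folding the well-formedness predicate into \Valid{\ensuremath{\cdot,\cdot}} is harmless because honest proposers construct their blocks via \CreateNewProposedBlock{\ensuremath{h,v}}, which we may assume produces blocks that admit a finalisation proof, so the honest Pre-Prepare still propagates through the Prepare and Commit phases unchanged.

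Finally, PM-6 is precisely the content of \cref{lem:liveness-not-affected-by-replacement-of-q-with-qopt}, which relies on $\quorumoptimal(n_h)+f(n_h)\leq n_h$ from \cref{lem:qopt+f-leq-n} to guarantee that once GST is passed and the current proposer is honest, Byzantine behaviour cannot prevent the honest validators from collectively producing a $\quorumoptimal(n_h)$ of Prepare messages and a $\quorumoptimal(n_h)$ of Commit messages for the proposed block. Combining these per-modification arguments yields the theorem. The least obvious step will be PM-2: one must be confident that replacing a crude ``correct size'' check with a full signature-verification check does not discard any Commit messages that honest validators rely on, but this follows from the uniqueness and unforgeability of the signature scheme assumed in \cref{sec:system-model}.
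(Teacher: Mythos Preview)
Your proposal is correct and follows essentially the same approach as the paper: an informal per-modification case analysis over PM-1 through PM-6, invoking \cref{lem:qopt+f-leq-n} and \cref{lem:liveness-not-affected-by-replacement-of-q-with-qopt} for PM-6. The only minor difference is your justification for PM-4 (removing a round-change trigger cannot hurt progress) versus the paper's (the removed branch is unreachable once PM-2 and PM-3 are in place), but both lines of reasoning are valid for the weak-liveness claim.
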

\begin{proof}
	The proof is given informally by contending that each of the modifications listed in \cref{sec:modification-ibftmonep-achieve-optimal-byzantine-fault-tolerance-persistence} does not affect the weak-liveness property  of the \ibfp{}:
\begin{itemize}
	\item Modification \ref{itm:pm-devp2p-mod}: this modification only deals with blocks that have already been finalised.
	
	\item Modification \ref{itm:pm-check-signature-for-commit-messages}: honest validators only send Commit messages with a commit seal signed by them and, according the \cref{lem:qopt+f-leq-n}, there are at least $ \quorumoptimal(n_h) $ honest validators.
	
	\item Modification \ref{itm:pm-move-block-verification}: honest proposers send Pre-Prepare messages that include a block well formed for adding a finalisation proof to it and according the \cref{lem:qopt+f-leq-n}, there are at least $ \quorumoptimal(n_h) $ honest validators.
	
	\item Modification \ref{itm:pm-remove-failure-in-creating-the-finalisation-proof}:  (i) by Modification \ref{itm:pm-move-block-verification} only blocks that are well formed for adding a finalisation proof to them are ever going to be accepted by an honest validator and (ii) 
	\cref{ln:unlocks} in \cref{algo:ibft-block-finalisation-protocol-commit-and-round-change}
	is only executed if a block included in a Pre-Prepare message has been accepted.
	
	\item Modification \ref{itm:pm-remove-send-commit-for-old-blocks}:  this modification only deals with sending Commit messages upon reception of a Pre-Prepare message for an old instance of the \ibfp{} for which the node receiving the Pre-Prepare message has already received a valid finalised block, i.e. instances of the \ibfp{} that have already achieved weak-liveness.
	
	\item Modification \ref{itm:pm-replace-quorum-formula}: see \Cref{lem:liveness-not-affected-by-replacement-of-q-with-qopt}.
\end{itemize}
\end{proof}

\section{Liveness Analysis}\label{sec:liveness-analysis}

In this section we analyse the liveness property of the IBFT and \ibftmone{} protocols. 
Specifically, we show that neither the IBFT protocol nor the \ibftmone{} protocol guarantee liveness when operating in an eventually synchronous network model when either a Byzantine or a fail-stop failure model are considered, and that the IBFT-M1 protocol may be modified in two ways to guarantee liveness.

\subsection{Analysis of the IBFT and \ibftmone{} protocols}\label{sec:analysis-of-the-ibft-and-ibftmone-protocols}

Any proof omitted in this section can be found in \ref{ap:proofs-for-analysis-of-the-ibft-and-ibftmone-protocols}.

\begin{lemma}\label{lem:liveness-weak-liveness-relationship}
	The IBFT protocol (\ibftmonep{}, resp.) guarantees $t$-Byzantine-fault-tolerant liveness only if the \ibfp{} (\ibfpmone{}, resp.) guarantees $t$-Byzantine-fault-tolerant weak-liveness.
\end{lemma}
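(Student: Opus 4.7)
I would prove the lemma by contraposition: assuming the \ibfp{} (\ibfpmone{}, resp.) does not guarantee $t$-Byzantine-fault-tolerant weak-liveness, I would construct an execution of the IBFT protocol (\ibftmonep{}, resp.) with at most $t$ Byzantine validators in which the liveness property of \cref{def:robustnes} fails. The key observation is that \cref{algo:ibft-protocol} (and its modified counterpart \cref{algo:ibftm1-protocol}) only ever increments $h_v$ on \cref{ln:increase-h-v}, and that line can fire only when guarded command $igc_1(v)$ is enabled, which requires a valid finalised block for the current height $h_v$ to be present in ${RM}_v$. Hence an honest node's local chain can only grow when someone produces a valid finalised block for the next expected height.

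\textbf{Key steps.} First, by the negation of $t$-BFT weak-liveness, there exists a scheduling with at most $t$ Byzantine validators and some instance $h^\star$ such that no honest validator ever produces a valid finalised block for height $h^\star$. Second, I would argue that in any extension of that execution no valid finalised block for height $h^\star$ is ever delivered to any honest node: only honest validators running instance $h^\star$ could have assembled and multicast such a block (Byzantine nodes, by unforgeability, cannot manufacture the $\quorum{n}$ honest commit seals needed to pass the ${isValidFinalisationProof}$ check on \cref{ln:check-block-validity}, and the argument in the \ibftmonep{} case is identical with $\quorumoptimal$ in place of $\quorum{\cdot}$). Third, invoking \cref{lem:if-an-honest-node-sets-hv-to-h-then-it-has-all-previous-blocks}, no honest node can ever advance $h_v$ past $h^\star$. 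Fourth, I would pick a fresh transaction $T$ submitted to every honest validator after the point at which the chain has stopped growing, and observe that $T$ can only appear in a block of height $\geq h^\star$; since no such block is ever added to any honest node's local chain, $T$ is never included in the ledger, contradicting liveness.

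\textbf{Main obstacle.} The only delicate point is the third step: I must rule out the possibility that Byzantine validators, together with a small minority of honest validators, somehow produce and gossip a valid finalised block for height $h^\star$ even when weak-liveness fails. For the IBFT protocol this needs care because weak-liveness as defined only requires \emph{some} honest validator to eventually produce a valid finalised block, so its negation does not literally say no finalised block appears; however, any valid finalised block would, upon delivery, allow every honest node to execute $igc_1(v)$ and therefore witness weak-liveness for the first such honest node that receives it. Making this last implication precise, and checking that the argument is robust against the \ibftmonep{}-specific changes (in particular the $\quorumoptimal$ threshold and the removal of $igc_2(v)$), is the bulk of the work; everything else is a routine unwinding of the pseudocode in \crefrange{algo:ibft-protocol}{algo:ibftm1-block-finalisation-protocol-commit-and-round-change}.
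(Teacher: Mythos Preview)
Your approach---contraposition, then arguing no honest node can advance past the stalled height, then exhibiting a transaction that is never included---is exactly the paper's, and the paper's proof is in fact much terser than yours: it simply asserts that the failure of weak-liveness yields an instance $h$ ``such that no finalised block is ever produced'', invokes the sequentiality of $h_v$ (your \cref{lem:if-an-honest-node-sets-hv-to-h-then-it-has-all-previous-blocks}), and concludes.

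The subtlety you flag as the ``main obstacle'' is real, but neither of your two proposed resolutions closes it. Your unforgeability argument in step~2 only shows that Byzantine validators cannot \emph{forge} honest commit seals; it does not exclude the scenario in which honest validators did broadcast enough Commit messages (but never received a quorum themselves) and a Byzantine node simply \emph{collects} those genuine seals and assembles a valid finalised block. Your fallback argument---that delivery of such a block would ``witness weak-liveness for the first such honest node that receives it''---conflates \emph{receiving} a finalised block (executing $igc_1$) with \emph{producing} one in the sense of the weak-liveness definition; an honest node that merely adds a block handed to it by a Byzantine peer has not produced anything. The clean fix is the one the paper implicitly uses: since you are constructing a counterexample execution and Byzantine validators may behave arbitrarily, take the weak-liveness-violating execution and additionally stipulate that Byzantine validators never multicast any \textsf{FINALISED-BLOCK} message (all their other messages stay the same). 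Honest behaviour is unchanged, no valid finalised block for $h^\star$ ever enters any honest ${RM}_v$, and your steps~3 and~4 go through verbatim.
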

\begin{proof}
	If the \ibfp{} (\ibfpmone{}, resp.)  does not guarantee $t$-Byzantine-fault-tolerant weak-liveness, then there could exist an instance $h$ of the \ibfp{} (\ibfpmone{}, resp.) such that no finalised block is ever produced.
	This implies that no finalised block with height $\geq h$ will ever be added to the local blockchain of any honest node as honest nodes must add a finalised block with height $h$  to their local blockchain before being able to add any finalised block with  height higher than $h$.
	Hence, any transaction sent to all honest validators that is not already included in a finalised block with height $< h$ will never be added to the local blockchain of any honest validator.
\end{proof}


\begin{lemma}\label{lem:n-q<q}
	$ \forall n\geq 4 \land n\neq 6: n - \quorum{n} < \quorum{n} \label{eq:n-quorum<quorum} $ 
\end{lemma}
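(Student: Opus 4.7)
The inequality $n - \quorum{n} < \quorum{n}$ is equivalent to $2\quorum{n} > n$, i.e., using $\quorum{n} = 2f(n)+1$, to $4f(n) + 2 > n$. Since $f(n) = \lfloor (n-1)/3 \rfloor$ is piecewise linear with a different closed form on each residue class mod $3$, the natural approach is a three-case analysis on $n \bmod 3$.

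First I would handle the case $n \equiv 1 \pmod 3$: writing $n = 3k+1$ with $k \geq 1$ gives $f(n) = k$ and $\quorum{n} = 2k+1$, so $n - \quorum{n} = k < 2k+1 = \quorum{n}$, which holds for every $k \geq 0$. Next, for $n \equiv 2 \pmod 3$, writing $n = 3k+2$ with $k \geq 1$ gives $f(n) = k$, $\quorum{n} = 2k+1$, and $n - \quorum{n} = k+1 < 2k+1$ whenever $k \geq 1$, which is guaranteed by $n \geq 4$.

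The only delicate case is $n \equiv 0 \pmod 3$: here $n = 3k$ with $k \geq 2$ (from $n \geq 4$), and $f(n) = k-1$, so $\quorum{n} = 2k-1$ and $n - \quorum{n} = k+1$. The desired inequality $k+1 < 2k-1$ reduces to $k > 2$, i.e., $n \geq 9$. This is precisely where the excluded value $n = 6$ (the $k=2$ subcase) shows up, and it is the one place the argument can fail; the hypothesis $n \neq 6$ rules it out exactly. So this case is the main obstacle, and it is handled simply by observing that $n \geq 4$ with $n \equiv 0 \pmod 3$ and $n \neq 6$ forces $n \geq 9$, hence $k \geq 3$.

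Combining the three cases covers all $n \geq 4$ with $n \neq 6$, completing the proof. No subtlety beyond the arithmetic is required, and the whole argument amounts to unfolding the definition of $\quorum{n}$ and checking a small inequality per residue class.
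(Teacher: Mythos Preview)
Your proof is correct. Both you and the paper begin by rewriting the claim as the equivalent inequality $2\,\quorum{n} > n$, i.e.\ $4f(n)+2>n$, but you then verify it differently. The paper uses the identity $f(n)=\lceil n/3\rceil-1$ to obtain the uniform lower bound $2\,\quorum{n}=4\lceil n/3\rceil-2\ge \tfrac{4n}{3}-2$, observes that $\tfrac{4n}{3}-2>n$ exactly when $n>6$, and then disposes of $n\in\{4,5\}$ by direct computation. Your residue-class analysis instead makes the role of the excluded value $n=6$ completely transparent: it is precisely the boundary case $k=2$ in the $n\equiv 0\pmod 3$ branch where the strict inequality $k+1<2k-1$ fails. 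Both arguments are equally elementary; yours is a cleaner explanation of \emph{why} $n=6$ must be excluded, while the paper's single-bound approach is slightly shorter.
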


%

\begin{lemma}\label{lem:n-qopt<qopt}
	$ \forall n\geq 0 : n - \quorumoptimal(n) < \quorumoptimal(n) $
\end{lemma}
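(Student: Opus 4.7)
The statement reduces to showing $n < 2\quorumoptimal(n) = 2\lceil 2n/3\rceil$, so the plan is to bound the ceiling from below and conclude arithmetically. First I would rewrite the desired inequality as $n < 2\quorumoptimal(n)$ and then note the elementary bound $\lceil 2n/3\rceil \geq 2n/3$, which immediately gives $2\quorumoptimal(n) \geq 4n/3$. For $n \geq 1$ one then has $4n/3 > n$, which closes the argument; the case $n=0$ is handled by inspection (and in that corner case $\quorumoptimal(0)=0$, so the strict inequality either needs the standing hypothesis $n\geq 1$ implicit in the use of the lemma for validator counts, or can be set aside as a degenerate case already excluded by \cref{lem:qopt+f-leq-n}'s usage context).

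If a fully self-contained combinatorial proof is preferred over the real-arithmetic route, I would instead perform a case split on $n \bmod 3$. Writing $n = 3k$, $3k+1$, or $3k+2$, the values of $\quorumoptimal(n)$ evaluate to $2k$, $2k+1$, and $2k+2$ respectively, so $n-\quorumoptimal(n)$ equals $k$, $k$, and $k$, while $\quorumoptimal(n)$ equals $2k$, $2k+1$, $2k+2$. In each residue class the inequality $k < 2k+c$ (with $c \in \{0,1,2\}$) holds whenever $k\geq 1$ or $c\geq 1$, which covers all $n\geq 1$. This second route has the advantage of eliminating the use of ceiling estimates and therefore leaves nothing for the reader to verify beyond a one-line arithmetic check per case.

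Neither route has a real obstacle — the lemma is essentially an arithmetic fact — and the only subtle point is the strict inequality at $n=0$, which I would handle with a parenthetical remark noting that the lemma is invoked in \cref{lem:liveness-not-affected-by-replacement-of-q-with-qopt} only for validator-set sizes satisfying $n_h \geq 1$. I would present the ceiling-bound proof in a single display, since it is the shortest, and relegate the case split to a footnote if desired.
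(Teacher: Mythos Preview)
Your primary route is exactly the paper's proof: rewrite as $2\cdot\quorumoptimal(n)>n$, use $\lceil 2n/3\rceil\geq 2n/3$ to obtain $2\cdot\quorumoptimal(n)\geq 4n/3>n$. Your observation about the degenerate case $n=0$ is well taken---the paper silently restates the lemma with $n>0$ in the appendix before proving it, which matches your suggested handling.
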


\begin{lemma}\label{lem:no-liveness}
	For any $h$ instance of the block finalisation protocol where $n_h \geq 4$, neither the \ibfp{} nor the \ibfpmone{} guarantee weak-liveness in the presence of a single faulty node.
	This is true even if the only type of node failure considered is fail-stop.
\end{lemma}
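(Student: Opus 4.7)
The plan is to build a counterexample execution that takes place before GST and drives the protocol into an irrecoverable configuration in which no finalised block for height $h$ is ever produced, even after GST. The crucial observation is that once an honest validator has locked on a block within an instance, it never unlocks in \ibfpmone{} (by \cref{lem:no-relocking}), and in \ibfp{} the only unlock trigger is a malformed Commit message or a malformed accepted block, neither of which a fail-stop node can produce. Hence, any configuration in which two disjoint non-empty subsets of honest validators are locked on distinct blocks is terminal.

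The construction proceeds in two phases. Let $c$ denote the fail-stop node (silent throughout) and $W = \avhmacro[] \setminus \{c\}$ the set of $n_h - 1$ honest validators. By \cref{lem:n-q<q} (for \ibfp{}) and \cref{lem:n-qopt<qopt} (for \ibfpmone{}), we can partition $W$ into two non-empty subsets $W_0$ and $W_1$ with $|W_0|, |W_1| < q$, where $q$ denotes the relevant quorum size. In the first phase, an honest proposer in round $0$ sends a Pre-Prepare for a block $B_0$; the pre-GST scheduler delivers Prepare messages so that exactly the validators in $W_0$ see a quorum of Prepares and lock on $B_0$, while those in $W_1$ receive too few Prepares to trigger locking, and eventually time out and issue Round-Change. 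No finalisation is possible because $|W_0| < q$ and $c$ is silent.

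In the second phase we advance to a round whose proposer is an honest validator in $W_1$ (not locked, hence free to propose a fresh $B_1 \neq B_0$); the scheduler delivers messages so that every validator in $W_1$ locks on $B_1$, while validators in $W_0$ reject the Pre-Prepare for $B_1$ because their existing lock does not match, move to the next round and send Round-Change. From this configuration onwards, by the no-unlock argument, every future round finds $W_0$ locked on $B_0$ and $W_1$ locked on $B_1$; any honest proposer is constrained to propose its own locked block, which is rejected by the opposite partition, and since $|W_0|, |W_1| < q$ neither partition alone reaches a Commit quorum. Thus no honest validator ever produces a valid finalised block for height $h$, contradicting weak-liveness.

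The main obstacle is verifying that the \emph{Commit-treated-as-Prepare} optimisation mentioned in the protocol text cannot cause unintended locking across the partition: a validator in $W_1$ could in principle reach the locking threshold on $B_0$ by combining its own Prepare evidence with Commit messages sent by $W_0$. This is precisely what the bound $|W_0| < q$ rules out, since the total number of Commit-plus-Prepare messages any single validator can collect for $B_0$ from the opposite side is at most $|W_0|$, which remains strictly below the locking threshold; the symmetric argument handles the other direction. The strictness of the inequalities in \cref{lem:n-q<q,lem:n-qopt<qopt} is exactly what makes the construction go through.
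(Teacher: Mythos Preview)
There is a genuine gap in the second phase of your construction. You stipulate that the fail-stop node $c$ is ``silent throughout,'' and then claim that the scheduler can deliver messages so that every validator in $W_1$ locks on $B_1$. But locking requires the reception of $q$ Prepare messages for $B_1$ in the current round, and the only validators that will send a Prepare for $B_1$ are those that accept the Pre-Prepare for $B_1$. The validators in $W_0$ are already locked on $B_0$ and therefore reject $B_1$ (sending Round-Change instead of Prepare), and $c$ is silent. Hence at most $|W_1| < q$ Prepare messages for $B_1$ are ever sent, and no validator in $W_1$ can lock on $B_1$. The scheduler controls delivery, not creation, of messages, so it cannot manufacture the missing Prepares.

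The paper's proof repairs exactly this point by placing the faulty node \emph{inside} the set that is to acquire the second lock, sizing that set to exactly $q$, and having the faulty node behave honestly long enough to send its Prepare for $B'$ before stopping. That single extra Prepare is what allows the remaining honest validators in that set to reach the locking threshold; once they have locked, the faulty node halts, leaving only $q-1$ active validators on that side. This is still a pure fail-stop failure, but the timing of the stop is essential. As a secondary consequence of the same miscount, note that your appeal to \cref{lem:n-q<q} excludes $n_h = 6$ for the \ibfp{}; the paper handles that value with a separate three-way partition, which your argument would also need.
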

\begin{proof}
	The proof is by counterexample.
	This is achieved by providing a possible sequence of events, that may occur before GST, which causes the protocol to be unable to create any new finalised block, even once GST is reached.
	We consider events occurring during the $h$-th instance of the \ibfp{}and \ibftmonep{}, and assume that one of the $n_h$ validators for the $h$-th instance is faulty and therefore can stop communicating and never restart.
	Also, let us assume that all of the events outlined below happen before GST.
	We consider three possible cases.
	\begin{description}[style=nextline]
		\item[Case 1: \textnormal{\itshape\ibfp{} with $n_h \geq 4 \land n_h \neq 6$.}]
				Let  $W$ be a subset of $\quorum{n_h}$ validators containing the proposer for round 1, $p_1$, and the faulty validator.
				Let $V$ be the complement of $W$. 
				The size of $V$ is therefore $n_h - \quorum{n_h}$ which, according to \cref{lem:n-q<q}, is $< \quorum{n_h}$ for $n_h \geq 4 \land n_h \neq 6$.
				The immediate consequence of the definition above is that all validators in $V$ are honest. .
				\begin{enumerate}
					\item 	The proposer for round 0, $p_0$, multicasts a Pre-Prepare message for block $B$ \linebreak \tsmessage{PRE-PREPARE}{h, 0, B}{p_0}.
					\item All validators reply with a Prepare message for block $ B $ \tnsmessage{PREPARE}{h, 0, \kec{B}}.
					\item All validators in $ V $ receive the Pre-Prepare message for block $ B $ and all of the Prepare messages. 
					Therefore, all validators in $ V $ \lock[] on block $ B $.
					\item The round-timer expires before any other validator, except for those in set $ V $, receives any of the Prepare messages.
					Note that since $\| V \| < \quorum{n_h}$ and  only validators in $V$ may send a Commit messages for block $ B $, no honest validator will ever produce a finalised block including block $ B $ due to the Commit messages sent in round 0.
					\item All validators send a Round-Change message \tnsmessage{ROUND-CHANGE}{h, 1} to move to round 1.
					\item All validators receive $\quorum{n_h}$ Round-Change messages and move to round 1.
					\item  $ p_1 $ sends a Pre-Prepare message for block $B'$ \tsmessage{PRE-PREPARE}{h, 1, B'}{p_1}, with $B' \neq B$. 
					This is possible as $p_1$ is not \lock[ed] on any block since it is an element of set $W$ which is disjoint from set $ V $.
					\item All validators included in set $ V $ are locked  on $ B $, therefore when they receive the Pre-Prepare message sent by $ p_1 $ they reply with a Round-Change \tnsmessage{ROUND-CHANGE}{h, 2} to move to round 2.
					\item All of the validators in set $ W $ respond to the Pre-Prepare for block $B'$ that they received with a Prepare message for block $B'$ and round 1 \tnsmessage{PREPARE}{h, 1, \kec{B'}}. 
					However, one of these validators is faulty and stops just after sending a Prepare message and before sending any Commit message.
					Set $W$ now contains only $\quorum{n_h}-1 < \quorum{n_h}$ active validators.
					\item All honest validators in set $W$ receive all the Prepare messages for block $B'$ sent by all validators in $W$ including the Prepare message sent by the faulty validator before stopping. 
					Since $\| W \| \geq \quorum{n_h}$, all honest validators in set $W$ \lock[] on block $B'$.
				\end{enumerate}
				Let $W_{active}$ be the subset of active validators in $W$.
				By the end of the last step of the sequence of events presented above, all validators, except the faulty one that stopped, are locked on a block.
				Specifically, validators in $V$ are locked on $B$ and validators in $W_{active}$ are locked on $B'$.
				In the \ibfp{} honest validators can release the \lock[] only if they receive \quorum{n_h} Commit messages with height, round and block matching the Pre-Prepare message, while in the \ibftmonep{} honest validators never release a lock.
				Since (i) the union of sets $W_{active}$ and $V$ corresponds to the entire set of all honest and active validators, (ii) both sets $W_{active}$ and $V$ have size $< \quorum{n}$ and (iii) honest validators that are locked on a block only send messages for this block, then no validator will ever receive $\quorum{n_h}$ Commit messages for the same block and therefore no honest validator will either \unlock[] or create a finalised block.\\
		\item[Case 2: \textnormal{\itshape\ibfp{} with $n_h = 6$.}]
			Compared to Case 1, we assume that (i) set $W$ also contains the proposer for round 2, $p_2$, (ii) set $V$ is defined to be a subset of $\quorum{n_h}-1$ validators with no intersection with set $W$ and (iii) we define set $Z$ to be the complement of the union of $W$ and $V$.
			The three sets are therefore disjoint sets, and set $Z$ has size $\|Z\|=n_h-(2\cdot\quorum{n_h}+1)$ which for $n_h=6$ corresponds to  $1 < \quorum{n_h}$.
			The immediate consequence of the definition above is that all validators in $V$ and $Z$ are honest. 
			The sequence of events happening before GST is similar to the one presented for Case 1 with the addition of events \labelcref{item:start} to \labelcref{item:end}.			
			\begin{enumerate}
				\item 	The proposer for round 0, $p_0$, multicasts a Pre-Prepare message for block $B$ \linebreak \tsmessage{PRE-PREPARE}{h, 0, B}{p_0} to all validators (including itself).
				\item All validators reply with a Prepare message for block $ B $ \tnsmessage{PREPARE}{h, 0, \kec{B}}.
				\item All validators in $ V $ receive the Pre-Prepare message for block $ B $ and all of the Prepare messages. 
				Therefore, all validators in $ V $ \lock[] on block $ B $.
				\item The round-timer expires before any other validator, except for those in set $ V $, receives any of the Prepare messages.
				Note that since $\| V \| < \quorum{n_h}$ and  only validators in $V$ may send a Commit message for block $ B $, no honest validator will ever produce a finalised block including block $ B $ due to the Commit messages sent in round 0.
				\item All validators send a Round-Change message \tnsmessage{ROUND-CHANGE}{h, 1} to move to round 1.
				\item All validators receive $\quorum{n_h}$ Round-Change messages and move to round 1.
				\item  $ p_1 $ sends a Pre-Prepare message for block $B'$ \tsmessage{PRE-PREPARE}{h, 1, B'}{p_1}, with $B' \neq B$. 
				This is possible as $p_1$ is not \lock[ed] on any block since it is an element of set $W$ which is disjoint from set $ V $.
				\item All validators included in set $ V $ are locked  on $ B $. Therefore when they receive the Pre-Prepare message sent by $ p_1 $ they reply with a Round-Change \tnsmessage{ROUND-CHANGE}{h, 2} to move to round 2.			
				\item \label{item:start} All of the validators in sets $ Z $ and $W$ respond to the Pre-Prepare for block $B'$ that they received with a Prepare message for block $B'$ and round 1 \tnsmessage{PREPARE}{h, 1, \kec{B'}}.
				\item All validators in $ Z $ receive the Pre-Prepare message for block $ B' $ and all of the Prepare messages sent by validators in sets $Z$ and $W$.
				Since $\|Z \cup W\| \geq \|W\| =\quorum{n_h}$,  all validators in $ Z $ \lock[] on block $ B' $.
				
				\item The round-timer for round 1 expires before any other validator, except for those in sets $V$ and $Z$, receives any of the Prepare messages.
				Note that since $\| Z \| < \quorum{n_h}$ and  only validators in $Z$ may send a Commit message for block $ B' $, no honest validator will ever produce a finalised block including block $ B' $  due to the Commit messages sent in round 1.
				\item All validators send a Round-Change message  \tnsmessage{ROUND-CHANGE}{h, 2} to move to round 2.
				\item All validators receive $\quorum{n_h}$ Round-Change messages and move to round 2
				\item  \label{item:end} $ p_2 $ sends a Pre-Prepare message for block $B''$ \tsmessage{PRE-PREPARE}{h, 2, B''}{p_2}, with $B'' \neq B' \neq B$.
				This is possible as $p_2$ is not \lock[ed] on any block since it is an element of set $W$ which is disjoint from the union of sets $V$ and $Z$. 
				
				\item All validators included in sets $ V $ and $Z$ are locked  on $ B $ and $B'$, respectively. Therefore when they receive the Pre-Prepare message sent by $ p_2 $ they reply with a Round-Change to move to round 3.	
				All of the validators in set $W$ respond with a Prepare message for block $B''$ and round 2.
				However, one of these validators is faulty and stops just after sending the Prepare message and before sending the related Commit message.
				Set $W$ now contains only $\quorum{n_h}-1 < \quorum{n_h}$ active validators.
				\item All honest validators in set $W$ receive all the Prepare messages for block $B''$ sent by all validators in $ W$ including the Prepare message sent by the faulty validator before stopping. 
				Since $\| W \| = \quorum{n_h}$, all honest validators in set $W$ \lock[] on block $B''$.
				
			\end{enumerate}
			Let $W_{active}$ be the subset of active validators in $W$.
			By the end of the last step of the sequence of events presented above, all validators, except the faulty one that stopped, are locked on a block.
			Specifically, validators in $V$ are locked on $B$, validators in $Z$ are locked on $B'$ and validators in $W_{active}$ are locked on $B''$.
			In the \ibfp{} honest validators can release the \lock[] only if they receive \quorum{n_h} Commit messages with height, round and block matching the Pre-Prepare message, while in the \ibftmonep{} honest validators never release a lock.
			Since (i) the union of sets $W_{active}$, $V$ and $Z$ corresponds to the entire set of all honest and active validators, (ii) all sets $W_{active}$, $V$ and $Z$ have size $< \quorum{n}$ and (iii) honest validators that are locked on a block only send messages for this block, then no validator will ever receive $\quorum{n_h}$ Commit messages for the same block and therefore no honest validator will either \unlock[] or create a finalised block.\\				
	\item[Case 3: \textnormal{\itshape\ibfpmone{} with $n_h \geq 2$.}]
			This case corresponds to Case 1 with \quorum{n_h} replaced by $\quorumoptimal(n_h)$.
			In this case, the size of $V$ is $n_h - \quorumoptimal(n_h)$ which, according to \cref{lem:n-qopt<qopt}, is $< \quorumoptimal(n_h)$ for $n_h \geq 1$. 
			The resulting sequence of events leads the \ibftmonep{} to a state where it will be unable to create any new finalized block even once GST is reached.
			
	\end{description}

	\paragraph{Remark}
	For $n_h = 2 \lor n_h = 3$, the \ibfp{} guarantees pathological weak-liveness as in this case $\quorum{n_h}=1$ and therefore each honest validator can create a valid finalised block without requiring interaction with any other validator.
\end{proof}

\begin{theorem}
	When operating in an eventually synchronous network model, neither the IBFT protocol nor the \ibftmonep{} guarantees liveness. This is true even if the only type of node failure considered is fail-stop.
\end{theorem}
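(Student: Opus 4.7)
The plan is to derive this theorem as a direct corollary of the two preceding lemmas, \cref{lem:liveness-weak-liveness-relationship,lem:no-liveness}, which together already capture the full argument. I would not attempt to reconstruct the adversarial sequence of events again (that is the heavy lifting done in \cref{lem:no-liveness}); instead, I would simply compose the two implications.

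First, I would invoke \cref{lem:liveness-weak-liveness-relationship}, taking the contrapositive: if the \ibfp{} (respectively the \ibfpmone{}) fails to guarantee $t$-Byzantine-fault-tolerant weak-liveness for some $t$, then the IBFT protocol (respectively the \ibftmonep{}) cannot guarantee $t$-Byzantine-fault-tolerant liveness. The intuition, already recorded in the lemma, is that a stalled instance $h$ blocks every later instance, so transactions submitted to all honest validators after that point are never committed.

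Next, I would apply \cref{lem:no-liveness} with $t=1$: as soon as there exists any instance $h$ with $n_h \geq 4$ in which a single fail-stop failure can prevent weak-liveness, weak-liveness is violated under a fail-stop adversary. Since we are considering an arbitrary IBFT deployment, there is no guarantee that every instance has $n_h < 4$; in fact, the interesting deployments (those intended to tolerate even a single Byzantine validator) have $n_h \geq 4$, so the hypothesis of \cref{lem:no-liveness} is met. Combining the two lemmas yields the non-existence of any $t\geq 1$ for which liveness is guaranteed, even under the strictly weaker fail-stop failure model, which is exactly the theorem.

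The only subtlety I would flag explicitly in the write-up is the handedness of the implication in \cref{lem:liveness-weak-liveness-relationship}: it is an ``only if'' statement, so chaining it with \cref{lem:no-liveness} must be done via contraposition. No additional calculation or construction is needed, so I do not anticipate any real obstacle; the proof is essentially a one-line composition. The main risk is merely notational, namely ensuring that the same $t$ and the same instance $h$ are used consistently across the two invoked lemmas.
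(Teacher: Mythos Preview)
Your proposal is correct and matches the paper's own proof, which is likewise a one-line derivation from \cref{lem:liveness-weak-liveness-relationship,lem:no-liveness} together with the observation that the interesting parameter range is $n_h \geq 4$. The extra care you take with the contrapositive and with aligning $t$ and $h$ is fine but not strictly needed in the write-up.
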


\begin{proof}
	Direct consequence of \cref{lem:liveness-weak-liveness-relationship,lem:no-liveness} and the fact that we are interested only in the cases where Byzantine-fault-tolerance can be guaranteed, i.e the number of validators is $\geq 4$.
\end{proof}

\subsection{Modification \ibftmtwop{}: Ensure liveness}\label{sec:solution-to-the-liveness-issue}
In this section we explore two options of how the IBFT and \ibftmone{} protocols can be modified to guarantee liveness when operating in an eventually synchronous network.

\subsubsection{PBFT-like solution}
The first option that we present is based on the original PBFT protocol \cite{Castro:1999:PBF:296806.296824}.
The key change to IBFT is to remove the locking mechanism and ensure safety between round changes via a round change sub-protocol similar to the one in the PBFT protocol.
The key modifications can be summarised as follows:
\begin{enumerate}[label=(S1.M-\arabic{*}),leftmargin=*]
	\item Remove the locking logic. \label{en:remove-lock}
	\item \label{en:add-prepared-certificate} 	
	Add a \emph{Prepared Certificate} to the Round-Change message.
	A Prepared Certificate includes a set of $ \quorumoptimal(n_h) $ Prepare messages received for the same round of instance $ h $ of the block finalisation protocol.
	If a validator has  received multiple set of $ \quorumoptimal(n_h) $ Prepare messages for different rounds of instance  $ h $ of the block finalisation protocol, then its Prepared Certificate only contains the set of  $ \quorumoptimal(n_h) $ Prepare messages for the highest round number.
	\item \label{en:add-new-round-message} Add a New-Round message to be sent by the proposer of the new round once $\quorumoptimal(n_h)$ valid Round-Change messages for the same new round are received by the proposer.\\
	The content of the New-Round message is similar to the content of the New-View message of the PBFT protocol, but with the following differences:
	\begin{enumerate}[label=(S1.M-\arabic{enumi}.\arabic{*}),leftmargin=*]
		\item Since IBFT decides on a block at a time, only one Pre-Prepare message is included in the New-Round message.
		\item The Pre-Prepare message included in the New-Round message always contains a valid block. 
		If the set of Round-Change messages included in the New-Round message contains at least one valid Prepared Certificate, then the block included in the Pre-Prepare message must match the block that has a valid Prepared Certificate with the highest round number.
		Otherwise, the Pre-Prepare message can contain any valid  block.
		This is in contrast with PBFT where the Pre-Prepare messages included in the New-View message may contain \emph{null} requests.
	\end{enumerate}
	\item In contrast to PBFT, no checkpoint is required here as the finalisation proof added to each IBFT block serves as checkpoint.
\end{enumerate}

Modification~\ref{en:remove-lock} obviously solves the liveness issue, but if introduced in isolation, it would make the protocol unsafe in the case that a round change occurs.
Modification~\ref{en:add-prepared-certificate} and \ref{en:add-new-round-message} ensure safety without compromising liveness. 

A few protocol optimisations are required to this solution to reduce the size of Round-Change and New-Round messages.
One of the potential solutions for reducing the size of the New-Round messages is replacing the block included in the signed Pre-Prepare messages with a digest of the block and piggyback the full block to the signed Pre-Prepare, Round-Change and New-Round messages.
Other solutions are available but their discussion is outside the scope of this work.

The correctness proof of this solution is quite extensive and therefore it will be presented as part of a separate body of work.
 
\subsubsection{Tendermint-like solution}
In contrast, the second option looks at implementing some of the concepts from the Tendermint variant of the PBFT protocol \cite{buchman2018latest}.
The availability of this type of solution stems from a private conversation with Clearmatics.
The key idea here is borrowing the \emph{relocking} mechanism from the Tendermint protocol \cite{buchman2018latest}.
The list of modifications that should be applied to IBFT to implement this solution are summarised below:
\begin{enumerate}[label=(S2.M-\arabic{*}),leftmargin=*]
	\item Add the \emph{locked round} value to the Pre-Prepare message. The locked round value corresponds to the latest round number where a validator \lock[ed] on the locked block.
	\item Allow relocking if a validator receives $\quorumoptimal(n)$ Prepare messages for round $r$ and one Pre-Prepare message with locked round equal to $r$ provided that $r$  is higher than the current round.
\end{enumerate}

\subsubsection{Comparison between the two solutions}
The two proposed solutions differ in behaviour and performance only when either the proposer is Byzantine or the network delay is longer than the duration of the round timer for round $0$.
If the proposer for round  $0$ is honest and all messages are delivered within the duration of the round timer for round $0$, then the two solutions behave identically from a practical perspective.
Performance wise, the main differences between the two solutions can be summarised as follows:
\begin{itemize}
	\item The PBFT-like solution ensures that after GST, if the proposer of the current round is honest, then every honest node decides within that round. In contrast, the Tendermint-like solution may require $n_h-1$ round changes before an honest validator reaches a decision where $n_h$ is the number of validators for the generic $h$-th instance of the \ibfp{};
	\item The PBFT-like solution can be extended to achieve block finalisation in two communication phases only (Pre-Prepare and Prepare) in the optimal case that no Byzantine validators are present and the network delay is longer than the duration of the round timer for round $0$.
	This can be achieved by adapting the protocol for very fast learning presented by Dutta et al \cite{best-case-complexity-asynch-byz-consnesus} to work within the definition of the IBFT protocol;
	\item The Tendermint-like solution requires lower overall bandwidth than the PBFT-like solution when the protocol moves to a new round.

\end{itemize}

\section{Conclusions and Further Work}

To our knowledge, this work represents the first analysis of the robustness of the IBFT consensus protocol. Our work analysed the persistence and liveness properties of the protocol, and discovered issues in both. 

Persistence guarantees blockchain consistency and immutability amongst all honest nodes. Specifically, persistence ensures that if an honest node adds transaction $ T $ in position $ i $ of its local transaction ledger, then (i) $ T $ is the only transaction that can ever be added in position $i$ by any other honest node, (ii) $T$ will eventually be added to the local transaction ledger of any other honest node. We showed that the IBFT protocol does not guarantee persistence under an eventually synchronous network model.

Liveness ensures that a transaction sent to all honest validators will eventually be included in the blockchain. We showed that the IBFT protocol does not guarantee liveness under an eventually synchronous network model even if the only failure mode considered is fail-stop.

%

We briefly proposed five modifications to the IBFT protocol so that persistence is ensured when operating in eventually synchronous networks. We also proposed two possible ways to modify the IBFT protocol to ensure liveness when operating in eventually synchronous networks: modification of IBFT based on lessons from the Practical Byzantine Fault Tolerance (PBFT) consensus protocol, or modification based on lessons from the Tendermint consensus protocol. Such proposals would need to be both fully developed and formally analysed to be properly assessed.

\section*{Acknowledgements}

The authors wish to thank Rob Dawson for reviewing multiple versions of this document, providing insightful comments and helping to improve the overall readability, Dr. Sandra Johnson for verifying its overall soundness, fixing several typos and further improving the document readability, Franck Cassez for reviewing the document, suggesting to use process algebra notation and identifying a few minor issues with the definitions and proofs, Trent Mohay for reviewing the description of the IBFT protocol in one of the initial versions of the document and identifying a few errors and omissions, Nicolas Liochon and Ben Edgington for providing insightful comments upon review of earlier versions of the work, and Gautam Botrel for verifying the pseudocode description against the IBFT protocol implementation.

Acknowledgements also go to Clearmatics, who suggested the Tendermint-like solution and pointed out that one of our earlier versions of the protocol description failed to specify how validators respond to old Pre-Prepare messages.

\bibliographystyle{elsarticle-num} 
\bibliography{bibliography}

\appendix
\section{Proofs for \Cref{sec:safety-analysis}}\label{ap:proofs-for-persistence analysis}

\newcommand{\setlemmanumberto}[1]{
	\setcounter{lemma}{\getrefnumber{#1}}
	\addtocounter{lemma}{-1}
}
\setlemmanumberto{lem:transaction-t-uniquiness}
\begin{lemma}
	A transaction $ T $ cannot appear in two different positions of the local blockchain of an honest node. 
\end{lemma}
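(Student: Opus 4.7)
The plan is to reduce the claim to a property of valid Ethereum blocks, which the IBFT protocol inherits via the \Valid{$\cdot,\cdot$} check that gates every addition to an honest node's local chain. I will argue by cases on whether the two supposed positions of $T$ lie inside the same block or in two different blocks.

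First I would inspect the pseudocode to see exactly when an honest node $v$ extends ${chain}_v$. The only line that writes to ${chain}_v$ is \cref{ln:add-block} in \cref{algo:ibft-protocol}, and its guard $igc_1(v)$ requires ${isValidFinalisedBlock}(\fb,v)$, which in turn requires $\Valid{\ExtractBlock{\fb},\ExtractBlock{{chain}_v[h_v-1]}}$. Since, as stated in the notation section, \Valid{$B,B_{parent}$} verifies the stateRoot, transactionsRoot, receiptsRoot, number and related Ethereum header fields with respect to the parent, any block added to the local chain of an honest node must be a valid Ethereum block sitting on top of its parent in the Ethereum sense.

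Next, for Case A where both positions lie in the same block at height $h$, I would invoke the Ethereum Yellow Paper semantics: a valid Ethereum block cannot contain two occurrences of the same transaction, since the second occurrence would either fail the sender's nonce check during EVM execution or produce a stateRoot inconsistent with the one in the header. Either way \Valid{$\cdot,\cdot$} would reject the block, contradicting the fact that an honest node added it. For Case B where the positions lie in blocks at two different heights $h_1 < h_2$, I would use a similar nonce-uniqueness argument across the chain: after the execution of $T$ in the block at height $h_1$, the stateRoot at height $h_1$ reflects the incremented nonce of the sender of $T$, so any subsequent block replaying $T$ would fail the stateRoot verification inherited through the parent chain from $h_1$ to $h_2-1$, and hence fail \Valid{$\cdot,\cdot$} at height $h_2$.

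The main obstacle is that the paper deliberately delegates the precise semantics of \Valid{$\cdot,\cdot$} to the Ethereum Yellow Paper rather than formalising it here, so the proof must be stated at the right level of abstraction: it should appeal to the standard Ethereum invariant that a transaction identified by $(\textit{sender},\textit{nonce})$ appears at most once in any valid chain, and then observe that this invariant is preserved by every step of $\textit{IBFT}(v)$ since the only state-modifying step ($igc_1(v)$) re-checks Ethereum validity against the existing chain. Once that reduction is made explicit, the two cases above close the argument.
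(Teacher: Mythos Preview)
Your proposal is correct and follows essentially the same approach as the paper: both arguments observe that an honest node only extends its chain through the guard that invokes \Valid{$\cdot,\cdot$}, and then appeal to the Ethereum Yellow Paper's nonce semantics (the paper cites equations (58) and (61) explicitly) to conclude that the same transaction cannot appear twice, whether within one block or across two blocks. Your two-case decomposition and the detour through stateRoot in Case~B are slightly more elaborate than needed---the paper simply notes that the nonce check itself fails on any replay---but the substance is the same.
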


\begin{proof}
	An honest node, before adding a finalised block to its local blockchain, checks the block included in the finalised block for validity (\cref{ln:check-block-validity} in \cref{algo:ibft-protocol}).
	As detailed in \Cref{sec:protocol-description}, block validity, as far as transactions are concerned, is verified as specified in the Ethereum Yellow Paper \cite{yellowpaper}.
	Equation (58) of the Ethereum Yellow Paper mandates that a transaction is valid only if the transaction nonce matches the expected nonce of the transaction sender \cite{yellowpaper}. 
	Equation (61) of the Ethereum Yellow Paper states that the expected nonce of the transaction sender is incremented when a transaction is executed \cite{yellowpaper}.
	Thus, the same transaction $ T $ cannot be included more than once in the same block and cannot be included in two different blocks of the same chain.
\end{proof}

\setlemmanumberto{lem:n-1-geq-quorum}
\begin{lemma}
	$n-1 \geq \quorum{n}$ for any $n \geq 4$.
\end{lemma}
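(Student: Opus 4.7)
The plan is to unfold the definitions and reduce the inequality to something visibly true for all $n \geq 4$. Recall from the system model that $f(n) = \lfloor (n-1)/3 \rfloor$ and from \Cref{algo:ibft-block-finalisation-protocol-start-and-init} that $\quorum{n} = 2\cdot f(n) + 1$. So the claim $n - 1 \geq \quorum{n}$ is equivalent to
\[
n - 2 \;\geq\; 2\left\lfloor \frac{n-1}{3}\right\rfloor.
\]

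First I would drop the floor using $\lfloor (n-1)/3 \rfloor \leq (n-1)/3$, which yields the stronger sufficient condition $n - 2 \geq \tfrac{2(n-1)}{3}$, i.e.\ $3(n-2) \geq 2(n-1)$, i.e.\ $n \geq 4$. This already closes the proof for every $n \geq 4$ with a single line, without needing a case split.

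As an alternative, and to make the argument fully self-contained without invoking the real-valued bound, I would instead perform a case analysis on $n \bmod 3$. Writing $n = 3k + r$ with $r \in \{0,1,2\}$ and $n \geq 4$ (so $k \geq 1$ except for $n=4,5$ which can be checked by hand), one computes $f(n)$ explicitly: $f(3k) = k-1$ (when $r=0$, since $(n-1)/3 = k - 1/3$), $f(3k+1) = k$, $f(3k+2) = k$. In each case one verifies $n - 1 - (2f(n) + 1) \geq 0$ directly: for $n = 3k$ the gap is $3k - 1 - (2k-1) = k \geq 1$; for $n = 3k+1$ it is $3k - (2k+1) = k - 1 \geq 0$ (using $n \geq 4 \Rightarrow k \geq 1$); for $n = 3k+2$ it is $3k + 1 - (2k+1) = k \geq 0$.

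No step here is delicate; the only mildly tricky point is handling the floor correctly in the boundary case $n \in \{4,5\}$, which is why I would either explicitly verify these as base cases or, as in the first approach, use the real-valued upper bound $\lfloor (n-1)/3 \rfloor \leq (n-1)/3$ that sidesteps the parity issue altogether. I expect the write-up to be two or three lines.
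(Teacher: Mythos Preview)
Your first approach is correct and is essentially identical to the paper's proof: both unfold $\quorum{n} = 2\lfloor (n-1)/3\rfloor + 1$, apply the bound $\lfloor x\rfloor \le x$, and reduce to the linear inequality $\tfrac{2n+1}{3} \le n-1$ (equivalently $n-2 \ge \tfrac{2(n-1)}{3}$), which holds exactly for $n\ge 4$. The alternative case split on $n \bmod 3$ you outline is also fine but unnecessary; the paper does not use it.
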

\begin{proof}
	The Lemma is proved by the following chain of inequalities and equalities:
	\begin{align}
	\quorum{n} &\equiv 2 \cdot \left \lfloor \frac{n-1}{3}  \right \rfloor  + 1\\
	&\leq 2 \cdot \frac{n-1}{3} + 1 \nonumber \\
	&=\frac{2n+1}{3} \nonumber
	\end{align}
	It is easy to prove that $\frac{2n+1}{3} \leq n-1$ for any $n\geq 4$. 
\end{proof}

\setlemmanumberto{lem:n->2-f(n)}
\begin{lemma}
	$\forall n \geq 1: n > 2\cdot f(n) $
\end{lemma}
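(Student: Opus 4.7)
The plan is to reduce the inequality to an elementary algebraic fact by dropping the floor function. By definition $f(n) \equiv \lfloor (n-1)/3 \rfloor$, so trivially $f(n) \leq (n-1)/3$, and therefore $2 f(n) \leq 2(n-1)/3$. It then suffices to verify that $n > 2(n-1)/3$ for all $n \geq 1$.

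First I would state the bound $f(n) \leq (n-1)/3$ as an immediate consequence of the definition of the floor function. Next I would multiply through: the inequality $n > 2(n-1)/3$ is equivalent to $3n > 2n - 2$, i.e. $n > -2$, which plainly holds for every $n \geq 1$. Chaining these two steps yields $n > 2(n-1)/3 \geq 2 f(n)$, as required.

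I do not expect any real obstacle here; the only subtle point is making sure the inequality is strict, which is delivered by the strict inequality $n > -2$ rather than by any property of the floor. The argument works uniformly for all $n \geq 1$, so no case analysis on $n \bmod 3$ is needed.
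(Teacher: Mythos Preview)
Your proof is correct and essentially identical to the paper's: both bound $2f(n) \leq 2(n-1)/3$ by dropping the floor and then verify the strict inequality $n > 2(n-1)/3$ directly. The paper rewrites $2(n-1)/3$ as $\tfrac{2n}{3}-\tfrac{2}{3}$ and declares the final step ``easy to see'', whereas you explicitly reduce it to $n > -2$; the arguments are the same in substance.
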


\begin{proof}
	\begin{align}
	2 \cdot f(n) &\equiv 2 \left \lfloor \frac{n-1}{3} \right \rfloor \\
	& \leq 2 \frac{n-1}{3} \nonumber \\
	& = \frac{2n}{3} - \frac{2}{3} \nonumber
	\end{align}
	It is easy to see the $ \frac{2n}{3} - \frac{2}{3} < n$ for any $n \geq 1$.
\end{proof}

\section{Proofs for \Cref{sec:safety-analysis-of-the-first-solution}}\label{ap:proofs-for-safety-analysis-of-the-first-solution}
\setlemmanumberto{lem:safety-solution-forward}
\begin{lemma}
	For any $n_h \geq 1$, provided that the number of Byzantine validators is $\leq f(n_h)$, the intersection of any two sets of \quorumoptimal($ n_h $) validators is guaranteed to include an honest validator.
\end{lemma}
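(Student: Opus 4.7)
The plan is to use the standard inclusion-exclusion argument on the validator set, and then reduce the problem to a purely arithmetic inequality relating $\quorumoptimal(n_h)$ and $f(n_h)$.

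First, I would observe that for any two subsets $A, B$ of the validator set (which has cardinality $n_h$) with $|A| = |B| = \quorumoptimal(n_h)$, inclusion-exclusion gives
\begin{equation*}
|A \cap B| \;\geq\; |A| + |B| - n_h \;=\; 2\,\quorumoptimal(n_h) - n_h.
\end{equation*}
Since there are at most $f(n_h)$ Byzantine validators, it suffices to show that the right-hand side is strictly greater than $f(n_h)$, i.e.\ that the intersection is too large to be contained in the Byzantine set. So the entire statement reduces to the arithmetic inequality
\begin{equation*}
2\left\lceil \frac{2n_h}{3} \right\rceil - n_h \;>\; \left\lfloor \frac{n_h-1}{3} \right\rfloor.
\end{equation*}

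The main step is verifying this inequality, which I would do by a case split on $n_h \bmod 3$. Writing $n_h = 3k$, $3k+1$, or $3k+2$, the left-hand side evaluates to $k$, $k+1$, $k+2$ respectively, while the right-hand side evaluates to $k-1$, $k$, $k$ respectively; in each case the inequality is strict (and, incidentally, also holds already for $n_h = 1, 2, 3$ by direct check). The only mildly subtle point is keeping the ceiling and floor manipulations correct across the three residue classes, but the calculation is routine.

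With the inequality in hand, the conclusion is immediate: $|A \cap B| \geq 2\,\quorumoptimal(n_h) - n_h > f(n_h)$, so $A \cap B$ cannot consist exclusively of Byzantine validators, and therefore contains at least one honest validator. I do not expect any genuine obstacle here; the lemma is essentially an arithmetic restatement of the well-known $3f+1$ quorum intersection property, adapted to the specific definitions $\quorumoptimal(n) \equiv \lceil 2n/3 \rceil$ and $f(n) \equiv \lfloor (n-1)/3 \rfloor$ used in this paper.
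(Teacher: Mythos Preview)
Your proof is correct and follows essentially the same approach as the paper: both arguments use the inclusion--exclusion bound $|A \cap B| \geq 2\,\quorumoptimal(n_h) - n_h$ and then show this exceeds $f(n_h)$. The only difference is in how the key arithmetic inequality is verified: you do an explicit case split on $n_h \bmod 3$, whereas the paper uses the real-number bounds $\lceil x \rceil \geq x$ and $\lfloor x \rfloor \leq x$ to obtain $2\,\quorumoptimal(n_h) - n_h - f(n_h) \geq \tfrac{1}{3}$ and then appeals to integrality to conclude the value is at least~$1$. Both methods are standard and equally valid; the paper's is slightly slicker, yours is slightly more explicit.
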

\begin{proof}
	Let us assume that $V$ and $W$ are any two sets of validators of size $ \quorumoptimal(n_h) $.
	Let $H$ denote the set of honest validators included in the intersection of $V$ and $W$.
	The minimum size of $H$ is expressed by the following equation:
	\begin{align}
	\min(\| H \|) = \max\big(2 \cdot \quorumoptimal(n_h) -n_h -f(n_h),0\big)
	\end{align}
	
	The following chain of equalities and inequalities proves that ${\min(\| H \|) \geq \frac{1}{3}}$:
	\begin{align}
	\min(\| H \|) &=2 \cdot \quorumoptimal(n_h) -n_h -f(n_h) \\
	&= 2  \left  \lceil \frac{2n_h}{3} \right \rceil - \left \lfloor \frac{n_h-1}{3} \right \rfloor -n \nonumber
	\\
	&\geq 2  \cdot \frac{2n_h}{3}  -  \frac{n_h-1}{3}  -n_h 
	\nonumber \\
	&=\frac{1}{3} \nonumber
	\end{align}
	Since $\min(\| H \|)$ is an integer value, the inequality $\min(\| H \|) \geq \frac{1}{3}$ implies that $\min(\| H \|) \geq 1$.
\end{proof}

\section{Proofs for \Cref{sec:liveness-impact-of-modification-ibftmone}}\label{ap:proof-for-liveness-impact-of-modification-ibftmone}
\setlemmanumberto{lem:qopt+f-leq-n}
\begin{lemma}
	$\forall n > 0: \quorumoptimal(n) + f(n)\leq n $
\end{lemma}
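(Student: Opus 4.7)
The plan is to prove the inequality $\lceil 2n/3 \rceil + \lfloor (n-1)/3 \rfloor \leq n$ directly by case analysis on the residue of $n$ modulo $3$. This is the cleanest route because both $\quorumoptimal(\cdot)$ and $f(\cdot)$ are defined in terms of ceilings and floors of expressions divided by $3$, and these become exact integers once the residue class of $n$ is fixed. Since the quantities are integer-valued, splitting into the three cases $n = 3k$, $n = 3k+1$, and $n = 3k+2$ reduces the statement to three routine arithmetic identities.

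First I would handle $n = 3k$ with $k \geq 1$: here $\lceil 2n/3 \rceil = 2k$ and $\lfloor (n-1)/3 \rfloor = k-1$, giving a sum of $3k-1 < n$. Next, for $n = 3k+1$ with $k \geq 0$, the two terms evaluate to $2k+1$ and $k$ respectively, summing to $3k+1 = n$. Finally, for $n = 3k+2$ with $k \geq 0$, the terms are $2k+2$ and $k$, again summing to $n$. In the last two cases the bound is tight, which incidentally explains why the lemma as stated uses $\leq$ rather than $<$.

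There is essentially no obstacle here: the only thing to be careful about is correctly evaluating $\lceil 2n/3 \rceil$ in each residue class (in particular for $n = 3k+1$ one has $2n/3 = 2k + 2/3$, whose ceiling is $2k+1$, and for $n = 3k+2$ one has $2n/3 = 2k + 4/3$, whose ceiling is $2k+2$), and similarly checking that $\lfloor (3k-1)/3 \rfloor = k-1$ in the case $n = 3k$. Once these are verified, the conclusion follows immediately, and the lemma is proved. If a more compact presentation is desired, one could alternatively bound $\lceil 2n/3 \rceil \leq (2n+2)/3$ and $\lfloor (n-1)/3 \rfloor \leq (n-1)/3$, but this yields the weaker real-valued bound $n + 1/3$ and then requires an integrality argument to sharpen to $\leq n$; the three-case proof is shorter and fully elementary.
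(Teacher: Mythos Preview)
Your proof is correct. However, it takes a different route from the paper's own argument. The paper does precisely what you describe in your final paragraph as the ``alternative'': it rewrites $\lceil 2n/3 \rceil$ as $\lfloor (2n-1)/3 \rfloor + 1$, bounds both floors by the corresponding real quotients to obtain $\quorumoptimal(n) + f(n) \leq n + \tfrac{1}{3}$, and then invokes integrality to sharpen this to $\leq n$. Your residue-class case analysis is more explicit and yields the additional information that the bound is tight exactly when $n \not\equiv 0 \pmod 3$; the paper's approach is more compact (a single chain of inequalities) but hides where equality holds. Both are elementary and equally valid---amusingly, you anticipated the paper's method and dismissed it as less direct.
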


\begin{proof}
	\begin{sloppypar}
		The following chain of equalities and inequalities proves that ${\quorumoptimal(n) + f(n) \leq n +\frac{1}{3}}$:
	\end{sloppypar}
	\begin{align}
	\quorumoptimal(n) + f(n) &=  \left  \lceil \frac{2n}{3} \right \rceil + \left \lfloor \frac{n-1}{3} \right \rfloor \\
	&= \left  \lfloor \frac{2n-1}{3} \right \rfloor + 1 + \left \lfloor \frac{n-1}{3} \right \rfloor \nonumber \\
	&\leq  \frac{2n-1}{3}  + 1 + \frac{n-1}{3} \nonumber \\
	&=n+\frac{1}{3} \nonumber
	\end{align}	
	\begin{sloppypar}
		Since $\quorumoptimal(n) + f(n)$ is an integer value, the inequality ${\quorumoptimal(n) + f(n) \leq n +\frac{1}{3}}$ implies that $\quorumoptimal(n) + f(n)\leq n$.
	\end{sloppypar}
\end{proof}

\section{Proofs for \Cref{sec:analysis-of-the-ibft-and-ibftmone-protocols}}\label{ap:proofs-for-analysis-of-the-ibft-and-ibftmone-protocols}
\setlemmanumberto{lem:n-q<q}
\begin{lemma}
	$ \forall n\geq 4 \land n \neq 6: n - \quorum{n} < \quorum{n}$
\end{lemma}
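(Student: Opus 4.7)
The plan is to rewrite the inequality $n - \quorum{n} < \quorum{n}$ in its equivalent form $n < 2\quorum{n}$, i.e.\ $n < 4\lfloor (n-1)/3 \rfloor + 2$, and then to proceed by a case split on the residue of $n$ modulo $3$ in order to eliminate the floor. In each case, $\quorum{n}$ becomes an elementary affine function of $n$, and the inequality reduces to a trivial linear inequality whose threshold identifies exactly which small values of $n$ must be excluded.

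Concretely, first I would write $n = 3k$, $n = 3k+1$, or $n = 3k+2$ for a non-negative integer $k$. In the case $n = 3k+1$, $\lfloor(n-1)/3\rfloor = k$, so $\quorum{n} = 2k+1$, and the inequality $n < 2\quorum{n}$ becomes $3k+1 < 4k+2$, which holds for every $k \geq 0$ and in particular for all $n \geq 4$ of this form. In the case $n = 3k+2$, $\lfloor(n-1)/3\rfloor = k$ again, so $\quorum{n} = 2k+1$, and the inequality becomes $3k+2 < 4k+2$, i.e.\ $k \geq 1$, which is satisfied whenever $n \geq 5$.

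The only delicate case is $n = 3k$, where $\lfloor(n-1)/3\rfloor = k-1$ and $\quorum{n} = 2k-1$. The inequality then reads $3k < 4k - 2$, which forces $k \geq 3$, i.e.\ $n \geq 9$. The boundary value $k = 2$, i.e.\ $n = 6$, yields $\quorum{6} = 3$ and $n - \quorum{n} = 3 = \quorum{n}$, so the strict inequality fails precisely at $n = 6$. This matches the exclusion in the statement, and combining the three cases yields the desired conclusion for all $n \geq 4$ with $n \neq 6$.

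I do not expect a genuine obstacle here: the reasoning is entirely elementary modular arithmetic once the floor is resolved, and the only subtlety is that the case $n \equiv 0 \pmod 3$ is tight at $n = 6$, which explains why that value is explicitly excluded in the hypothesis.
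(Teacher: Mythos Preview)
Your proof is correct. Both you and the paper begin by rewriting the claim as $2\cdot\quorum{n} > n$, but the arguments then diverge. The paper uses the identity $f(n) = \lceil n/3 \rceil - 1$ to obtain the continuous lower bound $2\cdot\quorum{n} = 4\lceil n/3\rceil - 2 \geq \tfrac{4n}{3} - 2$, from which $2\cdot\quorum{n} > n$ follows for all $n > 6$; the remaining cases $n \in \{4,5\}$ are then checked by hand. Your approach instead performs an exact case split on $n \bmod 3$, which eliminates the floor without any slack and makes it transparent why $n = 6$ is the unique excluded value in the range (the case $n = 3k$ gives $3k < 4k-2$, tight precisely at $k=2$). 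Your argument is slightly more elementary and more informative about the boundary; the paper's is a bit shorter at the cost of treating the small cases separately.
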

\begin{proof}
	It is easy to see that the following inequality implies the Lemma:
	\begin{align}
	&2\cdot\quorum{n}>n, \text{ if } n \geq 4 \land n \neq 6 \nonumber
	\end{align}
	We now prove the Lemma by showing that (a) ${2\cdot\quorum{n}\geq 4\cdot \frac{n}{3}-2}$ and (b) $4\cdot \frac{n}{3}-2>n$ when $n \geq 4 \land n \neq 6$.
	\begin{sloppypar}
		The following chain of equalities and inequalities proves (a).
	\end{sloppypar}
	\begin{align}
	2\cdot\quorum{n} &= 4\cdot f(n)+2 \\
	&=4 \left( \left \lceil \frac{n}{3} \right \rceil -1 \right) +2 \nonumber \\
	&=4\left \lceil  \frac{n}{3} \right \rceil -2 \nonumber \\
	&\geq 4\cdot \frac{n}{3}-2 \nonumber
	\end{align}
	It is easy prove that $4\cdot \frac{n}{3}-2>n$ when $n>6$.
	Therefore the Lemma is proved for $n>6$.\\
	For $n=4$ or $n=5$, we have $2\cdot \quorum{n} = 6 > n$.
	This concludes the proof.
\end{proof}

\setlemmanumberto{lem:n-qopt<qopt}
\begin{lemma}
	$\forall n > 0: n - \quorumoptimal(n) < \quorumoptimal(n)$
\end{lemma}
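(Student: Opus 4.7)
The plan is to prove the equivalent inequality $n < 2\cdot \quorumoptimal(n)$ by a short direct chain of estimates, exploiting the definition $\quorumoptimal(n) = \left\lceil \tfrac{2n}{3}\right\rceil$ and the trivial lower bound $\lceil x \rceil \geq x$.

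First I would rewrite the claim $n - \quorumoptimal(n) < \quorumoptimal(n)$ as $n < 2\,\quorumoptimal(n)$, so that the proof reduces to a single lower bound on $\quorumoptimal(n)$. Then, using $\lceil x \rceil \geq x$ applied to $x = 2n/3$, I would write
\begin{equation*}
2\,\quorumoptimal(n) \;=\; 2\left\lceil \tfrac{2n}{3}\right\rceil \;\geq\; 2 \cdot \tfrac{2n}{3} \;=\; \tfrac{4n}{3}.
\end{equation*}
It then suffices to observe that $\tfrac{4n}{3} > n$ whenever $n > 0$, which is immediate since $\tfrac{4n}{3} - n = \tfrac{n}{3} > 0$.

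Combining these two steps yields $n < 2\,\quorumoptimal(n)$, which after subtracting $\quorumoptimal(n)$ from both sides gives the desired $n - \quorumoptimal(n) < \quorumoptimal(n)$. There is no genuine obstacle here: the only subtlety worth noting is that we do not even need to split cases on the residue of $n$ modulo $3$, since the slack $\tfrac{n}{3}$ between $\tfrac{4n}{3}$ and $n$ is already strictly positive for every $n > 0$, so the ceiling bound is loose enough to absorb any rounding. This mirrors in structure the proof of \cref{lem:qopt+f-leq-n} given in the appendix, but is in fact strictly simpler because no floor term and no integrality argument at the end are required.
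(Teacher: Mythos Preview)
Your proof is correct and is essentially identical to the paper's own proof: both rewrite the claim as $2\cdot\quorumoptimal(n) > n$, then use $\lceil 2n/3\rceil \geq 2n/3$ to obtain $2\cdot\quorumoptimal(n) \geq 4n/3 > n$ for $n>0$.
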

\begin{proof}
	It is easy to see that the following inequality implies the Lemma:
	\begin{align}
	&2\cdot \quorumoptimal(n) > n \nonumber
	\end{align}
	
	The  Lemma is proved by the following sequence of inequalities and equalities:
	\begin{align}
	2\cdot\quorumoptimal(n) &= 2 \cdot \left \lceil \frac{2n}{3} \right \rceil  \\
	&\geq 2\cdot  \frac{2n}{3}  \nonumber \\
	&=\frac{4n}{3} \nonumber \\
	&>n \nonumber
	\end{align}
\end{proof}

\end{document}